\documentclass[format=acmsmall, review=false]{acmart}
\usepackage{acm-ec-26}

\newif\ifARXIV
\ARXIVtrue   

\ifARXIV
  \AtBeginDocument{%
    \fancypagestyle{firstpagestyle}{%
      \fancyhf{}%
    }%
    \fancypagestyle{acmec}{%
      \fancyhf{}%
      \fancyhead[R]{\thepage}%
    }%
    \renewcommand{\shortauthors}{}%
  }
\fi

\theoremstyle{plain}
\newtheorem{theorem}{Theorem}      
\newtheorem{lemma}{Lemma}          
\newtheorem{proposition}{Proposition}

\theoremstyle{definition}
\newtheorem{definition}{Definition} 

\theoremstyle{remark}

\usepackage{enumitem}
\usepackage[suppress]{color-edits}
\addauthor[David]{dk}{blue}
\addauthor[Evi]{em}{teal}
\addauthor[Xinyu]{xl}{red}

\usepackage[capitalize]{cleveref}
\usepackage{subcaption}
\usepackage{tabularx}

\usepackage{bm}
\usepackage{ifthen}

\usepackage{booktabs} 
\usepackage[ruled,noend]{algorithm2e} 

\SetAlFnt{\small}
\SetAlCapFnt{\small}
\SetAlCapNameFnt{\small}
\SetAlCapHSkip{0pt}
\IncMargin{-\parindent}

\providecommand{\SetCard}[1]{\ensuremath{| #1 |}\xspace}
\providecommand{\SET}[1]{\ensuremath{\left\{ #1 \right\}}\xspace}

\providecommand{\Set}[2]{\ensuremath{\SET{#1 \mid #2}}\xspace}

\providecommand{\Kth}[1]{\ensuremath{{#1}^{\rm th}}}

\providecommand{\PROB}{\ensuremath{\mathbb{P}}\xspace}
\providecommand{\Prob}[2][]{\ensuremath{%
\ifthenelse{\equal{#1}{}}{\PROB[#2]}{\PROB_{#1}[#2]}}\xspace}

\providecommand{\Expect}[2][]{\ensuremath{%
\ifthenelse{\equal{#1}{}}{\mathbb{E}}{\mathbb{E}_{#1}}%
\left[#2\right]}\xspace}

\providecommand{\Event}[2][]{\ensuremath{\ifthenelse{\equal{#1}{}}{%
\mathcal{#2}}{\mathcal{#2}_{{#1}}}}\xspace}

\providecommand{\poly}{\ensuremath{\mathrm{poly}}}

\providecommand{\Interv}[2]{\ensuremath{\left[#1,#2\right]}\xspace}

\providecommand{\vc}[1]{\mathbf{#1}}

\providecommand{\VAL}{\ensuremath{\vc{v}}\xspace}

\NewDocumentCommand{\val}{O{} O{}}{%
\ensuremath{%
  \vc{v}%
  \if\relax\detokenize{#1}\relax\else_{#1}\fi%
  \if\relax\detokenize{#2}\relax\else^{#2}\fi%
}\xspace}

\NewDocumentCommand{\vval}{O{} O{}}{%
\ensuremath{%
  \vc{\widehat v}%
  \if\relax\detokenize{#1}\relax\else_{#1}\fi%
  \if\relax\detokenize{#2}\relax\else^{#2}\fi%
}\xspace}

\newcommand{\pval}{putative valuation\xspace}
\newcommand{\pvals}{putative valuations\xspace}

\NewDocumentCommand{\vtrue}{O{}}{%
  \ensuremath{%
    \vc{v}^{\mathrm{true}}%
    \IfValueT{#1}{\if\relax\detokenize{#1}\relax\else_{#1}\fi}%
  }%
}

\NewDocumentCommand{\bp}{O{}}{%
  \ensuremath{%
    P^{\mathrm{basic}}%
    \if\relax\detokenize{#1}\relax\else_{#1}\fi%
  }%
}

\newcommand{\buffers}{\emph{buffers}\xspace}

\newcommand{\ALG}{\text{ALG}}

\setcitestyle{authoryear}

\title[Learning Fair Allocation of Indivisible Items from \xlreplace{Adversarial}{Limited} Feedback]{Learning Fair Allocation of Indivisible Items from Limited Feedback}

\ifARXIV
  \author{Xinyu Liu}
  \affiliation{\institution{University of Southern California}\city{Los Angeles}\country{USA}}
  \email{xliu4185@usc.edu}
  \author{David Kempe}
  \affiliation{\institution{University of Southern California}\city{Los Angeles}\country{USA}}
  \email{David.M.Kempe@GMail.com}
  \author{Evi Micha}
  \affiliation{\institution{University of Southern California}\city{Los Angeles}\country{USA}}
  \email{emichahy@GMail.com}
\else
  \author{Submission ?} 
\fi

\begin{abstract}
We study a setting in which an algorithm must output a fair allocation of indivisible items while ``learning on the job''.
More specifically, the algorithm is to output an allocation satisfying EF1, PROP1, or similar fairness notions; 
however, the algorithm initially has no information about the agents' valuations, and can only learn about them by (repeatedly) proposing an allocation, and obtaining feedback about a fairness violation in the proposed allocation. 
Importantly, the observed fairness violation may be \emph{adversarially} chosen.
The algorithm's goal is to converge to a fair allocation in a number of rounds polynomial in the number of agents and items, and ideally to also involve only a polynomial amount of computation.

We prove two main results: 
first, when the valuations are additive, then even for mixed items (goods and chores), an allocation satisfying EF1 or PROP1 can be found in polynomial time using the corresponding feedback. 
These results are instantiations of a more general framework which maintains a polytope of candidate valuations consistent with all past feedback. 
The algorithm repeatedly constructs \pvals and uses them to propose allocations; the observed violations then define separating hyperplanes, allowing the algorithm to emulate the ellipsoid method.

When the valuations are monotone, but not necessarily additive, our results are more subtle.
We present an algorithm which is guaranteed to find an EF1 allocation in a number of iterations which is polynomial in the number of agents and items; however, the internal calculations of the algorithm are not guaranteed to be polynomial.
The algorithm again maintains \pvals, and only considers allocations in which each agent obtains the union of an interval plus one additional item with respect to an (arbitrary) ordering of the items.
\dkreplace{We (non-constructively) prove that there always exist EF1 allocations of this form, allowing us to use a further generalization of the preceding ellipsoid-based ideas. 
However}{It is known that such an EF1 allocation always exists; however}, because the existence proof is non-constructive, the internal step of constructing an allocation from the \pvals is not known to take polynomial time.

Finally, we consider weaker and stronger feedback models. 
When the algorithm only learns whether the allocation satisfies the fairness notion, but does not obtain further information on specific violations, we show that any algorithm must in the worst case take a number of iterations exponential in the number of agents; this result is matched by showing that for a constant number of agents, a fair allocation can be found in a polynomial number of iterations.
If the algorithm learns of \emph{all} violations of the proposed allocation, then even for general monotone valuations, a fair allocation can be found in polynomial time, by straightforwardly emulating the Envy-Cycle Elimination Algorithm.
\end{abstract}

\begin{document}


\maketitle

\setcounter{tocdepth}{2} 
\tableofcontents




\section{Introduction}
\label{sec:intro}
Fair division of indivisible goods has numerous  applications, ranging from~allocating first-aid supplies during a crisis or donated food to people in need to distributing compute resources or scheduling access to shared laboratory equipment. 
Over the past two decades, the research community has made significant progress in articulating and formalizing what fairness means in the presence of indivisibilities, as well as how to algorithmically find solutions satisfying these desiderata; see the survey of~\citet{Amanatidis2022}. 
The most classic notion of fairness --- originating in the literature on fair division of \emph{divisible} goods --- is \emph{envy-freeness (EF)}, which requires that no agent prefer another agent’s bundle to their own. 

Exact envy-freeness is often impossible to achieve when items cannot be split\footnote{Simply consider two agents and one desirable item.}, and hence, weaker notions or relaxations of EF have been proposed.
Among these relaxations, perhaps the most frequently studied one is \emph{envy-freeness up to one item (EF1)}, which requires that any envy an agent may have towards another agent can be eliminated by removing a single item from the other agent’s bundle. 
It is known that allocations satisfying EF1 can be computed for broad classes of valuations (e.g., general monotone non-negative valuations) by simple and efficient algorithms such as Envy-Cycle Elimination, and by the even simpler Round Robin algorithm (in which agents take turns selecting their most preferred remaining item) when the valuations are additive~\cite{Lipton2004,Caragiannis2019efx}.

These algorithms are expressed in the ``standard'' algorithmic framework in which the algorithm has full access to the input; here, the input would be numerical utilities or a ranking over all bundles for all players (in the case of general valuations), or the utilities/rankings for individual items (for additive valuations). 
Some work has also considered models in which the algorithm gets to elicit information from players actively, e.g., by eliciting their valuations over an item or a bundle of items~\cite{plaut2020almost, oh2021fairly, li2024complexity} or by asking them to compare pairs of bundles~\cite{BuEtAl2024WINE}. 
Doing so may help decrease the amount of information which agents must communicate about their valuations.

However, in many cases, algorithms must be deployed, and make decisions, without having access to all relevant input. 
In such cases, the algorithm must ``learn from its mistakes'', in the sense of treating the allocation task as an online learning problem. 
As a concrete motivating example, consider the allocation of a resource (such as a lab, robots, or similar equipment) among multiple research groups over the course of a week. Groups may have very idiosyncratic preferences over subsets of allocated time slots, based on contiguity, schedules of specific team members etc. Specifying all of the preferences or utilities ahead of time may be challenging and lead to a combinatorial explosion.  
 
An alternative is for an algorithm to propose an allocation to be used for the following week. 
At the end of the week, groups may provide feedback if they believe that the allocation was not fair --- for instance, if one group received a disproportionate number of desirable slots or if another group was assigned predominantly late-night slots that might be considered less desirable.
These reports could then be taken into account in deciding on the next week's allocation.
The goal for such an algorithm would be to converge quickly to an allocation that satisfies desirable fairness properties. 

This approach recasts the fair allocation problem in the general framework of \emph{online interactive} learning, and more specifically into a model akin to the \emph{Equivalence Query Model} of \citet{angluin:queries-concept}.
Under the Equivalence Query Model, defined by \citet{angluin:queries-concept} for the goal of learning a binary classifer, an online learner repeatedly proposes a candidate binary classifier; unless the classifier happens to be correct, the learner is given a (adversarially chosen) misclassified point.
This model was phrased more abstractly and generalized by \citet{OnlineLearning}.
The abstract form again has the learner proposing a solution, which is either accepted, or results in learning a violated constraint/feature. \dkcomment{Why ``feature''? (I probably wrote it, but am wondering whether it adds anything.)}
As such, the framework captures the challenges of ``learning on the job'', where information can only be obtained in response to an attempted solution, rather than via targeted queries.

In our \emph{Interactive Learning for Fair Division} model, in each round, the learning algorithm proposes a (complete) allocation to the agents, i.e., all items must be allocated.
Unless the allocation satisfies the fairness criterion, the algorithm receives feedback which it can use to propose different allocations in the future.
In keeping with the equivalence query and interactive learning models of \citet{angluin:queries-concept,OnlineLearning}, we assume that the feedback takes the form of \emph{one} violation of a desirable fairness property.
In the prototypical case of EF1 fairness, the violation would be one pair $(i,j)$ of agents such that $i$ envies $j$'s bundle even when any one item is removed. 
If there are multiple such pairs, the pair that is revealed comes with no guarantees; in order to perform a worst-case analysis, we will therefore treat it as adversarial. 

Our model implicitly emphasizes an additional interpretation of EF1.
The standard \emph{normative} intepretation of EF1 (and other fairness notions) is that an allocation is acceptable if it would not give any individual a reason to be too dissatisfied with the allocation.
A subtly different ``practical'' interpretation is that an allocation is acceptable if no individual \emph{does} feel too dissatisfied.
A difference between these two interpretations of course only arises when an allocation is not EF1, but no individual who would be dissatisfied notices (or complains about) the violation.
In particular in the context of online learning, it appears justified to continue using an allocation until a witness of envy-freeness violation is indeed revealed.
Similarly, if no further violations are revealed, the algorithm may even terminate with an allocation that is not EF1. 


Our primary goal is to understand the information-theoretic and computational complexity of finding a fair allocation under the (adversarial) interactive learning model for fair allocations.
That is, how many iterations does an interactive algorithm need until it is guaranteed to find a fair allocation, and what is the time required for the internal computations?

A first natural approach might be to simply give $i$ an item previously allocated to $j$ whenever $i$ envies $j$ even with any one item removed.
In Appendix~\ref{sec:local_adjustment}, we give a simple example showing that even for additive valuations, such an approach may cycle.
Hence, a more sophisticated algorithm is needed.

\subsection{Our contributions}

Our first main result addresses the problem whenever the valuations are \emph{additive}.
In this case, we show that even for mixed goods and chores, an EF1 allocation can be found in polynomially many rounds, with polynomial computation.
The main idea is the following: the algorithm maintains a polytope of \emph{feasible item valuations for all agents}, which are valuations \VAL consistent with all feedback the algorithm has received in the past.
The algorithm then chooses a feasible valuation \VAL as a putative valuation and computes an allocation that would be EF1 under \VAL. 
If the allocation violated EF1, the EF1 violation feedback gives rise to (at least) one hyperplane separating \VAL from the (now updated) polytope of feasible valuations. 
Thus, the feedback on violated EF1 conditions serves as a separation oracle, and choosing the putative valuation according to the ellipsoid algorithm's choice means that the guarantees on the ellipsoid algorithm's convergence transfer to our setting.

This framework can be applied whenever the valuations are captured by polynomially many variables, fairness violations give rise to a separating hyperplane, and there is a polynomial-time algorithm for finding a fair allocation given access to the agents' valuations.
Under these conditions, which include other fairness notions as well as some other settings, our framework yields an interactive learning algorithm for fair allocations in a polynomial number of rounds and using polynomial computation.

In principle, the framework could be applied for general (monotone) valuations; however, the number of variables required to fully encode an agent's valuation function is $2^m$ (where $m$ is the number of items), resulting only in a trivial guarantee on the number of iterations. 
Instead, we consider a different approach, restricting the class of allocations which the algorithm ever proposes.
Specifically, the algorithm fixes an arbitrary ordering over the items, and will only consider allocations in which each agent receives an interval plus at most one additional item; we call such allocations \emph{almost-contiguous allocations}. 
As in the work of \citet{oh2021fairly}, this restriction only serves to reduce the size of the search space, not because intervals are of intrinsic interest. 
We prove non-constructively that there always exists an almost-contiguous EF1 allocation.

After completing this work, we became aware of a result of \citet{Igarashi2023Discrete}, who proves the stronger statement that, for any fixed ordering of the items, there always exists an EF1 allocation in which every agent receives a single interval.  
We had obtained our almost-contiguous EF1 existence result independently by modifying the proof of \citet{Bilo2022Connect} (which showed non-constructively that there always exists an EF2 allocation all of whose bundles are intervals). 
We include our result and proof in Section~5.1 for completeness and present the learning framework in Section~5.2 using this almost-contiguous formulation. 
By using Igarashi’s theorem~\citep[Theorem~3.1]{Igarashi2023Discrete} in place of our existence result, the same learning framework can also be instantiated using only contiguous allocations; we indicate the corresponding consequences at the relevant points.

Our main contribution in this setting is to use this structural restriction to obtain a polynomial bound on the number of interactions. The algorithm still follows the approach of using the ellipsoid algorithm to obtain putative agent valuations. 
The fact that the algorithm only uses bundles which are the union of an interval plus at most one more item implies that it suffices to encode each agent's valuation using polynomially many variables, which in turn yields a polynomial bound on the number of iterations until an EF1 allocation is found. 
However, in each iteration, the algorithm must find an EF1 almost-contiguous allocation with respect to the putative valuations, and the existence of such an allocation was only established non-constructively.\footnote{The proof of \citet[Theorem~3.1]{Igarashi2023Discrete} is likewise non-constructive, and thus does not resolve the computational issue for the contiguous variant.} Indeed, we are not aware of a polynomial-time algorithm for finding such an allocation, implying that we cannot bound the \emph{computation time} by a polynomial.
We also consider a modification of the algorithm, which does not use the Ellipsoid algorithm and feasible/putative valuations at all, but simply avoids allocations containing bundle combinations previously exhibited as EF1 violations. 
We show that while this variant also only requires polynomially many iterations, the computational problem required in the iterations of this modified algorithm is in fact NP-hard.\footnote{The hardness result already holds for the special case of contiguous allocations.}

While our focus is on the interactive learning model, one may naturally wonder how well an algorithm can perform if it receives more or less feedback.

A natural weaker model would be one in which the algorithm only learns that there \emph{exists} some EF1 violation, but no further information; that is, the algorithm obtains only one bit of information about the proposed allocation.
For this model, as we show in Section~\ref{sec:weak}, a number of proposed allocations exponential in the number $n$ of agents is information-theoretically unavoidable, even for the very restrictive special case of identical binary valuations. 
Conversely, by considering a carefully parameterized space of allocations, one can exhaustively search over $m^{O(n)}$ possible allocations, one of which is guaranteed to satisfy EF1; this implies that for a constant number of agents, an EF1 allocation can be found in polynomial time.

At the other extreme is a model in which the algorithm learns \emph{all} pairs $(i,j)$ such that $i$ envies $j$'s bundle even when any one item is removed.\emdelete{\footnote{\dkedit{In line with our ``practical'' interpretation of interactively identifying an EF1 allocation, we believe that requiring a report of \emph{all} EF1 violations places a high burden on the agents, and might result in a brittle algorithm in the presence of agents who, e.g., ignore e-mails about proposed allocations. \dkcomment{Redundant with the ``risky'' comment later?} \xlcomment{I agree that there is some overlap with the "risky" comment in the main text. However, I feel that the footnote makes the issue more concrete.
Given that some reviewers mentioned that they felt our motivation for the single-witness model was not that strong, I am inclined to keep the footnote. Maybe we can move it to after the "risky" sentence in the main text? That way, the reader first sees the point in the main text and then the footnote as a more concrete justification, rather than encountering the footnote first and then seeing a somewhat overlapping statement later.
}}}}
Note that in the equivalence query model of \citet{angluin:queries-concept}, this feedback would correspond to revealing \emph{all} misclassified items, trivially allowing the learner to infer the correct classifier after one round.
While the situation is not quite as trivial for the goal of finding fair allocations, we show in Section~\ref{sec:all-witness} that it is fairly easy to simulate the Envy-Cycle Elimination Algorithm with only polynomial blowup in the number of rounds. Thus, one can, in a polynomial number of rounds and with polynomial computation, find an EF1 allocation even when the agents' valuations are arbitrary monotone functions.
This is further evidence that this model gives the algorithm too much information; from a practical perspective, relying on the fact that all agents reveal all of their envy is clearly also risky.
We briefly remark that if the algorithm instead of \emph{all} EF1 violation leans only a single i.i.d.~uniformly random EF1 violation, then by proposing the same allocation multiple times, in a number of rounds polynomial in $n$, the algorithm can with high probability obtain all EF1 violations after all, so this model only results in polynomial blowup compared to the preceding one. 

\subsection{Related Work}
\label{sec:related-work}
Our work contributes to the growing literature on fair division of indivisible goods when the preferences of the agents are not known. Multiple works have focused on value queries, in which the algorithm may ask an agent for the value of a specific bundle~\citep{plaut2020almost,oh2021fairly}, or   partial-ranking feedback, where agents provide coarse ordinal information over items or bundles~\citep{halpern2021fair,benade2022dynamic}. \citet{feige2025low} recently studied the communication complexity of fair division and quantified how many bits an agent must report in order for a fair allocation to be computed. Other recent works have also considered models based on comparison queries, in which agents compare different items or bundles without revealing exact numerical values~\citep{BuEtAl2024WINE,li2024complexity}. In contrast, our model does not allow targeted queries of any kind. Each interaction consists of proposing a complete allocation, and the only feedback the algorithm receives --- when the allocation is not fair --- is a witness of violation, such as an EF1 envy pair. Thus, the feedback is substantially more limited than in standard preference-query models: the algorithm cannot isolate a single pair of bundles to compare, cannot request information about a specific agent, and cannot probe the valuation of any sub-bundle. All information must be inferred from counterexamples produced in response to full allocations, placing our framework much closer to mistake-driven and interactive online learning models than to traditional elicitation.

Interactive learning has been studied in multiple settings. 
To the best of our knowledge, the earliest work formalizing the idea of trial-and-error learning through local mistakes is the seminal work of~\citet{angluin:queries-concept}, which introduced the \emph{Equivalence Query Model}. In this model, a learner proposes a binary classifier, and if the classifier is incorrect, the learner is given an adversarially chosen misclassified point. 
As explained by \citet{angluin:colt-survey}, this model is equivalent to the simultaneously proposed and more widely known standard Online Learning model of \citet{littlestone:learning}, in which the learner is given one example to classify in each round, and learns whether this example was classified correctly, using the feedback to update its (internal) classifier.

The equivalence query framework was generalized by~\citet{OnlineLearning}, who studied interactive problems represented as weighted undirected graphs $(V,E)$: the target solution is a node $s^*$, and whenever the learner proposes a node $s$, the feedback consists of a neighboring node $s'$ that is strictly closer to $s^*$. Prior work by \citet{BinarySearch} had shown that a learner can identify $s^*$ within $O(\log |V|)$ interactions, directly implying efficient interactive learning algorithms for classification, ranking, and clustering.
While the graph-based framework is a powerful abstraction, our interactive fair-division model does not fall into this framework. 
When an allocation violates EF1, the learner receives only a witness of violation --- for example, a pair $(i,j)$ such that agent~$i$ envies agent~$j$. 
This feedback does not necessarily guide the learner to an allocation that is ``closer'' to a valid EF1 solution in any graph-theoretic or metric sense. 

The interactive learning model has been studied across multiple settings. In clustering, a learner proposes candidate clusterings and receives limited feedback, often in the form of requests to merge or split clusters~\cite{AdaptiveHierarchical, balcan:blum:split-merge, awasthi:zadeh:supervised-clustering, awasthi:balcan:voevodski:local-algorithm-journal}. 
In stable matching, interactive feedback identifies blocking pairs; information on these blocking pairs guides the learner toward a stable matching~\cite{LearningStableMatching, bei2013complexity}. 
In ranking, coactive and interactive query models learn a target ranking by allowing the learner to propose an option and having the user return a preferred alternative higher in their ranking~\cite{shivaswamy2015coactive}. Recent work has also explored interactive learning for computing voting rules, where the learner iteratively proposes outcomes, and agents respond with improvements according to their rankings~\cite{michacomputing2025}. 
Finally, in training large language models, interactive and coactive learning frameworks enable models to refine their outputs through localized human-feedback improvements~\cite{tucker2024coactive}. 
To the best of our knowledge, we are the first to introduce an interactive learning framework for fair division.

Somewhat similar ideas to the ones we use are used in the context of \emph{multidimensional binary search} and its generalizations \citep{paes-leme:schneider:contextual-search,cohen:lobel:paes-leme:dynamic-pricing,lobel:paes-leme:vladu:multidimensional-binary-search,KLPS:contextual-search-adversarial}.
This model is motivated by online problems such as contextual pricing.
In the most basic model, there is a product with $d$ relevant features, and each user's utility (or willingness to pay) is obtained as a linear combination of these features.
The feature vector $\vc{\theta}$ is unknown, and can/must be inferred by interacting with a sequence of agents. 
Each agent/context $k$ is characterized by a vector $\vc{x}_k$, and the algorithm can ask a simple query about the inner product $\vc{\theta} \cdot \vc{x}_k$, such as asking to compare it to a proposed threshold (which would correspond to observing whether the agent bought the product at the proposed price).
Such queries can be construed as revealing separating hyperplanes for the vector $\vc{\theta}$, as in our algorithms.
As in our case, the core of the analysis is to show that the remaining feasible region for $\vc{\theta}$ shrinks sufficiently fast in a meaningful way.
A key difference is in the level of control. 
In the work mentioned above, the algorithm can choose the threshold given the context, whereas in our work, the algorithm proposes an allocation, but does not know about the normal of the hyperplane it will observe.







\section{Preliminaries and Model Setup}
\label{sec:prelim}
Let $N = [n]$ be the set of agents and $M$ the set of $m$ indivisible items. 
Each agent $i \in N$ has a valuation function $v_i : 2^{M} \to \mathbb{R}$ assigning a real value to every bundle of items.  
Without loss of generality, we assume that the valuations satisfy \(v_i(\emptyset) = 0\).

We consider three standard settings:
\begin{itemize}
  \item \textbf{Goods setting:} \(v_i(S) \ge 0\) for all \(S \subseteq M\), 
       capturing allocations of desirable items such as resources or prizes.
  \item \textbf{Chores setting:} \(v_i(S) \le 0\) for all \(S \subseteq M\), 
        corresponding to allocations of undesirable tasks or chores.
  \item \textbf{Mixed setting:} valuations may take both positive and negative values, 
        capturing combinations of goods and chores.
\end{itemize}

{A valuation function is \emph{additive} if for every bundle $S \subseteq M$, $ v_i(S) = \sum_{x \in S} v_i(x)$, where $v_i(x) = v_i(\SET{x})$ denotes the value of item $x$ to agent~$i$. 
A more general class is the class of all monotone valuations.
A valuation $v_i$ (for goods) is \emph{monotone} if for all $S,T \subseteq M$ with $S \subseteq T$, we have $v_i(S) \le v_i(T)$. 
An \emph{allocation} $A = (A_1,\ldots, A_n)$ assigns each agent~$i \in N$ a bundle $A_i \subseteq M$ of items such that the bundles form a partition of $M$, i.e.,
$A_i \cap A_j = \emptyset$  for  $i \ne j$,  and $
\bigcup_{i=1}^n A_i = M$.}

\subsection{Fairness Notions}

We are specifically interested in finding \emph{fair} allocations, and we now define suitable notions of fairness.
Classical notions of fairness are defined in terms of agents’ valuations over their assigned bundles.  
An allocation is \emph{envy-free (EF)} if no agent prefers another’s bundle to their own, i.e., $v_i(A_i) \ge v_i(A_j)$ for all $i,j \in N$.  
Another standard notion is \emph{proportionality (PROP)}, which requires each agent to receive at least a $1/n$ fraction of their total value for all items, i.e., $v_i(A_i) \ge v_i(M)/n$.  
However, allocations satisfying EF or PROP do not always exist for indivisible items, as can be seen from the simple case of two agents and one item.

To address this issue, \citet{Budish2011} introduced \emph{envy-freeness up to one good (EF1)} as a relaxation of EF, and \citet{Conitzer2017} proposed the analogous notion of \emph{proportionality up to one good (PROP1)}. 

More recently, \citet{Aziz2022ef1mixed} extended both notions to settings involving mixtures of goods and chores, 
showing that EF1 and PROP1 allocations always exist for additive valuations.  

\begin{definition}[EF1 for mixed goods and chores]
We say that agent $i$ \emph{EF1-envies} agent $j$, and write $A_i \prec_i^{\mathrm{EF1}} A_j$, if both of the following hold:
\begin{align*}
v_i(A_i) & < v_i(A_j) &
v_i(A_i \setminus \SET{x}) & < v_i(A_j \setminus \SET{x})
           \text{ for all } x \in A_i \cup A_j.
\end{align*}

An allocation $A$ is \emph{EF1} if there is no pair $i,j$ with $A_i \prec_i^{\mathrm{EF1}} A_j$.
\end{definition}

The first condition captures that $i$ envies $j$'s bundle.
The second condition states that there is no item $x$ to remove from $A_j$ (in the case of goods) that would make $i$ not envy $j$ any more, 
and there is no item $x$ to remove from $A_i$ (in the case of chores) that would make $i$ not envy $j$ any more.
Together, the conditions capture that no single item can be removed from either bundle to make $i$ not envy $j$ any more.

\begin{definition}[PROP1 for additive mixed goods and chores]
We say that $i$ is PROP1-envious, and write $A_i \prec_i^{\mathrm{PROP1}} M$, if all of the following hold:
\begin{align*}
v_i(A_i) & < \tfrac{1}{n} \cdot v_i(M) \\
v_i(A_i) + v_i(x) & < \tfrac{1}{n} \cdot v_i(M)
            \text{ for all } x \in M \setminus A_i \\
v_i(A_i) - v_i(x) & < \tfrac{1}{n} \cdot v_i(M)
            \text{ for all } x \in A_i.
\end{align*}

An allocation $A$ satisfies \emph{PROP1} if there is no agent $i$ with $A_i \prec_i^{\mathrm{PROP1}} M$.
\end{definition}

This definition mirrors the EF1 definition for the mixed setting.
The second and third conditions capture that proportionality cannot be achieved by adding at most one good to agent~$i$’s bundle or removing at most one chore from it, respectively.

\citet{Conitzer2017} showed that, under additive valuations in the goods setting, every EF1 allocation is also PROP1.  
For general monotone valuations over goods, however, this implication does not hold.  
The well-known Envy-Cycle Elimination algorithm of \citet{Lipton2004} guarantees the existence of EF1 allocations for monotone goods, but allocations satisfying PROP1 may not exist even in very simple instances.  
Consider two agents and three items, with both agents valuing the entire set at~1 and every strict subset at~0; any allocation violates PROP1.  
Consequently, most studies of PROP1 focus on additive valuations, while EF1 has been explored more broadly.

\citet{Aziz2022ef1mixed} further generalized the implication of \citet{Conitzer2017} from goods to mixed settings, 
proving that every EF1 allocation is also PROP1 under additive valuations.

\begin{proposition}[{\citet[Proposition~2]{Aziz2022ef1mixed}}]
\label{prop:ef1-implies-prop1}
In the mixed setting with additive valuations, every EF1 allocation is also PROP1.
\end{proposition}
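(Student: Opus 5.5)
We argue by contrapositive: assume $A$ is EF1 but some agent $i$ is PROP1-envious, $A_i \prec_i^{\mathrm{PROP1}} M$, and derive a contradiction. The natural approach is averaging. By additivity, $v_i(M) = \sum_{j} v_i(A_j)$, and since $v_i(A_i) < \frac1n v_i(M)$, there must be some $j \ne i$ with $v_i(A_j) > v_i(A_i)$. (Otherwise every $v_i(A_j) \le v_i(A_i)$, so $v_i(M) \le n\, v_i(A_i)$, a contradiction.) To make the argument sharper, we fix $j$ to be an agent maximizing $v_i(A_j)$. Then $v_i(A_j) \ge \frac{1}{n-1}\bigl(v_i(M) - v_i(A_i)\bigr)$, which is strictly larger than $\frac1n v_i(M)$ because $v_i(A_i) < \frac1n v_i(M)$.

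Since $A$ is EF1, $i$ does not EF1-envy $j$. The first EF1 condition fails to hold only if $v_i(A_i) \ge v_i(A_j)$, which we just excluded. So there is an item $x \in A_i \cup A_j$ with $v_i(A_i \setminus \SET{x}) \ge v_i(A_j \setminus \SET{x})$. We split into two cases.

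\textbf{Case $x \in A_j$.} We get $v_i(A_i) \ge v_i(A_j) - v_i(x)$, that is, $v_i(A_i) + v_i(x) \ge v_i(A_j)$. Here $x \in M \setminus A_i$, so the second PROP1-envy condition gives $v_i(A_i) + v_i(x) < \frac1n v_i(M)$. Hence $v_i(A_j) < \frac1n v_i(M)$.

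\textbf{Case $x \in A_i$.} We get $v_i(A_i) - v_i(x) \ge v_i(A_j)$. The third PROP1-envy condition gives $v_i(A_i) - v_i(x) < \frac1n v_i(M)$. Hence again $v_i(A_j) < \frac1n v_i(M)$.

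In both cases this contradicts $v_i(A_j) > \frac1n v_i(M)$, which we derived from the choice of $j$ as a maximizer.

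The main point needing care is exactly this choice of $j$. Picking merely some envied $j$ is not enough; we need $v_i(A_j) > \frac1n v_i(M)$, which is where additivity of $v_i(M)$ over the bundles enters. The averaging step must also remain valid when values are negative. It only uses that the maximum of the $n-1$ numbers $v_i(A_k)$, $k\neq i$, is at least their mean, and that mean exceeds $\frac1n v_i(M)$ whenever $v_i(A_i) < \frac1n v_i(M)$. Both facts hold for values of any sign, so the argument covers the mixed setting. The edge case $n = 1$ is trivial, since then $v_i(A_i) = v_i(M)$ and PROP1 holds automatically.
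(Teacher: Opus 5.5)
Your argument is correct, including for mixed goods and chores and the edge case $n=1$. The paper gives no proof of this proposition; it is imported from Aziz et al.\ (their Proposition~2). So the natural comparison is with the summation argument usually given for this implication.

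That argument uses EF1 against \emph{every} $j\neq i$. Let $\delta\ge 0$ be the largest of $0$, $\max_{x\notin A_i} v_i(x)$ and $\max_{x\in A_i}(-v_i(x))$. Each pair then gives $v_i(A_j)\le v_i(A_i)+\delta$, and summing over $j\neq i$ yields $n\,v_i(A_i)+(n-1)\delta\ge v_i(M)$. PROP1 is read off according to which term attains $\delta$.

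You instead invoke EF1 for a single well-chosen pair. Either $i$ does not envy $j$, so $v_i(A_i)\ge v_i(A_j)>\frac1n v_i(M)$, or the item witnessing EF1 for $(i,j)$ directly witnesses PROP1 for $i$. Your proof is therefore really a direct one, and the contradiction framing is unnecessary.

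For the qualitative statement, the maximizer is more than you need. Any $j\neq i$ with $v_i(A_j)\ge \frac1n v_i(M)$ works, and one exists because the $n$ bundle values average to $\frac1n v_i(M)$ while $v_i(A_i)$ lies strictly below that average.

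What your route buys is locality: agent $i$ is PROP1-satisfied as soon as $i$ does not EF1-envy the one agent whose bundle $i$ values most. This comes at no quantitative cost. For the maximizer, $v_i(A_i)+\delta\ge v_i(A_j)\ge \frac{1}{n-1}(v_i(M)-v_i(A_i))$, which rearranges to exactly the inequality the summation argument produces.
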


\subsection{Interactive Feedback Models}
\label{subsec:models}

We study the problem of finding fair allocations via interactive algorithms that ``learn on the job''.
Instead of having access to agents’ valuation functions or direct queries,
the algorithm can only interact with the environment by proposing an allocation and observing the resulting feedback.
Our main focus is on the fairness notions EF1 and PROP1.
In each round $t = 1,2,\ldots$, the algorithm proposes an allocation
$A^{(t)} = (A^{(t)}_1,\ldots,A^{(t)}_n)$.
If the proposed allocation satisfies the target fairness notion, the interaction terminates successfully.
Otherwise, the algorithm receives feedback indicating that the allocation violates the fairness requirement.

Different feedback models correspond to different levels of information about fairness violations.
In particular, when a proposed allocation is not fair, the feedback may reveal information about the specific way in which fairness was violated.
At one extreme, the feedback may only indicate whether the allocation satisfies the target fairness notion, without revealing any additional information.
At the other extreme, the feedback may reveal all individual violations, e.g., all pairs of agents $i, j$ such that $i$ envies $j$ even after the removal of any one item from $j$'s bundle (EF1) or all agents $i$ such that $i$ does not receive a proportional share (PROP1).
Our focus lies on intermediate models in which the feedback reveals a single violation.
To ensure robustness of our algorithms, we assume that whenever such a witness is revealed, it is chosen \emph{adversarially}.
We now formalize such intermediate feedback models for EF1 and PROP1.

\begin{definition}[Single-Witness EF1 Model]
\label{def:single-witness-ef1}
In the Single-Witness EF1 Model, upon proposing an allocation $A$,
the feedback either \emph{certifies} that $A$ is EF1,
or consists of an \emph{adversarially chosen violating pair} $(i,j) \in N \times N$ with $A_i \prec_i^{\mathrm{EF1}} A_j$.
We call the pair $(i,j)$ a \emph{witness} of EF1 violation. 
\end{definition}

\begin{definition}[Single-Witness PROP1 Model]
\label{def:single-witness-prop1}
In the Single-Witness PROP1 Model, upon proposing an allocation $A$,
the feedback either \emph{certifies} that $A$ is PROP1,
or consists of an \emph{adversarially chosen violating agent} $i \in N$ with $A_i \prec_i^{\mathrm{PROP1}} M$.
We call $i$ a \emph{witness of PROP1 violation}.
\end{definition}

For completeness, we also consider the two extreme feedback models for EF1, corresponding to revealing no witness or revealing all violating witnesses.

\begin{definition}[Membership EF1 Model]
\label{def:membership-ef1}
In the \emph{Membership EF1} Model, upon proposing an allocation $A$,
the feedback consists of a single bit indicating whether $A$ is EF1.
This represents the minimal feedback setting.
\end{definition}

\begin{definition}[All-Witness EF1 Model]
\label{def:all-witness-ef1}
In the \emph{All-Witness EF1 Model}, upon proposing an allocation $A$,
the feedback either \emph{certifies} that $A$ is EF1,
or returns the set of all pairs $(i,j) \in N \times N$ witnessing EF1 violation.
\end{definition}

It is worth noting that a model in which a single violating witness is chosen uniformly at random in each round
is comparable to the All-Witness Model.
Since there are only polynomially many potential violating pairs for EF1,
standard coupon collector arguments imply that repeating the interaction polynomially many times will reveal all violations with high probability.
For this reason, and to ensure robustness to worst-case interaction,
we model the single-witness feedback as being chosen \emph{adversarially}.

Having defined these feedback models, we now turn to the algorithmic questions they give rise to.
We study the interaction and computational complexity of finding fair allocations under the above interactive feedback models.

Throughout the paper, each interaction consists of the algorithm proposing an allocation and receiving the corresponding feedback.
We measure the algorithm’s \emph{interaction complexity} by the total number of interactions, equivalently, the total number of allocation proposals it makes.
An algorithm is said to use a polynomial number of interactions if this number is bounded by $\poly(n,m)$.
Separately, we measure the algorithm’s \emph{computational complexity} in the standard sense, i.e., the total running time required by the algorithm.

For convenience, we refer to interactions under the Single-Witness EF1, Single-Witness PROP1, Membership EF1, and All-Witness EF1 Models as \emph{single-witness EF1}, \emph{single-witness PROP1}, \emph{membership EF1}, and \emph{all-witness EF1} interactions, respectively.
We will use the terms ``interaction'' and ``allocation proposal'' interchangeably to denote a single interaction between the algorithm and the feedback model.
An allocation that is certified by the feedback as satisfying EF1 or PROP1 will be called a \emph{certified EF1} or \emph{certified PROP1} allocation.

\section{Polynomial-Time EF1 in the Single-Witness Model under Additive Valuations}
\label{sec:ef1_additive}
We start by presenting a polynomial-time algorithm that  finds an EF1 allocation under \emph{additive valuations} in the mixed-items setting with polynomially many single-witness EF1 interactions. 
The key idea is to have the algorithm guess a \pval for each agent that is not in conflict with any feedback received so far.
Each witness EF1 feedback can then be interpreted as imposing hyperplane constraints on the corresponding agent's valuation.
These constraints are used to iteratively refine the set of feasible \pvals through an ellipsoid-based update process until an EF1 allocation is certified. 

In Section~\ref{sec:generalizing-ellipsoid}, we will show how this framework extends to other fairness notions, such as proportionality up to one item (PROP1).

\subsection{Overview of the Algorithm}

Our algorithm alternates between computing candidate allocations under the current
guess of the agents’ valuations and refining these guesses using feedback received.  
At a high level, the algorithm maintains for each agent~$i$ a~\emph{\pval}  vector~$\vval[i]$, representing the algorithm’s current estimate of $i$’s (unknown) valuation.

In each iteration, the algorithm calls a full-information EF1 subroutine as a black box,
which returns an EF1 allocation under the current~$\vval$.
In our implementation (Algorithm~\ref{alg:ellipsoid-ef1}), we instantiate the subroutine with the \textsc{DoubleRoundRobin} algorithm of \citet[Algorithm~1]{Aziz2022ef1mixed}, which runs in polynomial time and guarantees EF1 under additive valuations even for mixed valuations. However, in principle, any full-information EF1 algorithm suitable for the class of valuation functions under consideration can be used.

The allocation computed by the EF1 subroutine is then proposed, and feedback is received in response.
If the feedback certifies that the allocation is EF1 under the agents' (true) valuations, the algorithm terminates.
Otherwise, the feedback identifies a violating pair $(i,j)$, and we will show that such a pair always yields a linear constraint separating $i$'s true valuation from $i$'s current \pval.
This constraint is then incorporated into the feasible valuation region, and the algorithm uses an ellipsoid-based update to refine its \pvals, 
excluding the current (incorrect) estimate and moving closer to the true valuation profile.

This iterative process repeats until an allocation is certified EF1.  
The key structural insight is that each round either identifies an EF1 allocation or rules out a portion of the valuation space via a separating constraint. 
As a result, the algorithm can exactly be viewed as an execution of the ellipsoid algorithm for Linear Programming with a separation oracle, which is known to terminate in a number of iterations polynomial in the number of variables; the number of variables here equals the number of valuations $\widehat{v}_i(x)$, i.e., it is $nm$.

\begin{algorithm}[ht]
\caption{\textsc{Ellipsoid-EF1} (Single Witness Feedback, Mixed Setting, Additive Valuations)}
\label{alg:ellipsoid-ef1}
\DontPrintSemicolon
\SetKwInput{KwInput}{Input}
\SetKwInput{KwOutput}{Output}
\KwInput{Agents $N$ and items $M$.}
\KwOutput{An EF1 allocation with respect to the (unknown) true additive valuations.}
\BlankLine

\lForAll{$i\in N$}{initialize (normalized) feasible region $\mathcal{R}_i^{(0)}\!\gets\![-1,1]^m$.}
$t \gets 0$.\;
\While{no allocation has been certified as EF1}{
  $t\!\gets\!t\!+\!1$.\;

  \tcp{\textbf{(a) Valuation Update and Allocation step}  }
  $\vval[k][(t)] \gets \mathrm{Center}\left(\mathcal{R}_k^{(t-1)}\right)$ for all $k \in [n]$.\;
  $A^{(t)} \gets \textsc{DoubleRoundRobin}(\vval[][(t)])$.
  \tcp*[r]{Run DoubleRoundRobin on current valuations}
  Propose the allocation $A^{(t)}$.\;
  \lIf{$A^{(t)}$ is certified EF1}{\Return $A^{(t)}$.}
  \BlankLine
  \tcp{\textbf{(b) Identify new separating constraints}}
  Receive a violating pair $(i,j)$.\;
    
  Create new constraints
  \(
      \mathcal{C}_i^{(t)} :=
      \begin{aligned}[t]
        &\Set{\val[i]}{v_i(A_i^{(t)}) \le v_i(A_j^{(t)})} \\
        &\cap \bigcap_{x \in A_i^{(t)}}
            \Set{\val[i]}{v_i(A_i^{(t)} \setminus \SET{x}) \le v_i(A_j^{(t)})} \\
        &\cap \bigcap_{y \in A_j^{(t)}}
            \Set{\val[i]}{v_i(A_i^{(t)}) \le v_i(A_j^{(t)} \setminus \SET{y})} .
      \end{aligned}
  \)
  
  \BlankLine
  \tcp{\textbf{(c) Ellipsoid update}}
  Update feasible regions:\;
  $\mathcal{R}_i^{(t)} \gets \mathcal{R}_i^{(t-1)} \cap \mathcal{C}_i^{(t)}$.\; 
  $\mathcal{R}_k^{(t)} \gets \mathcal{R}_k^{(t-1)}$ for all $k\neq i$.\;
  }
\end{algorithm}

\begin{theorem}
\label{thm:ef1-ellipsoid-poly}
Under additive valuations in the mixed setting,  
Algorithm~\ref{alg:ellipsoid-ef1} always terminates with an EF1 allocation within $\poly(n,m)$ single-witness EF1 interactions, and runs in total time $\poly(n,m)$.
\end{theorem}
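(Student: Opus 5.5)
The plan is to show that the interaction is exactly the ellipsoid method, run separately for each agent $i$ in dimension $m$, with the adversary's witnesses serving as separation oracles.

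\textbf{Normalization.} Since EF1 is invariant under positive scaling of $v_i$, we may assume $\val[i] \in [-1,1]^m$ for every $i$.

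\textbf{Step 1: each witness is a separating hyperplane.} Suppose $(i,j)$ is returned in round $t$. By the definition of $\prec_i^{\mathrm{EF1}}$, the true $\val[i]$ satisfies every inequality defining $\mathcal{C}_i^{(t)}$ \emph{strictly}. Each such inequality has the form $\vc{a}\cdot\val[i] < 0$ with $\vc{a}\in\{-1,0,1\}^m$, since additivity makes it a difference of two item sums.

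On the other hand, $A^{(t)}$ is EF1 under $\vval[][(t)]$, by the guarantee of \textsc{DoubleRoundRobin} \citep{Aziz2022ef1mixed}. So $\vval[i][(t)]$ violates the strict form of at least one of these inequalities; that is, $\vc{a}\cdot\vval[i][(t)] \ge 0$ for some constraint in $\mathcal{C}_i^{(t)}$. To handle ties cleanly, I would have the algorithm use the tightened constraint $\vc{a}\cdot\val[i] \le -\delta$ for a suitable $\delta = 2^{-\poly(m)}$. The center then lies strictly outside this halfspace, so the cut is a genuine (slightly deep) ellipsoid cut. Hence $\mathrm{Center}(\cdot)$ together with the update in step (c) is a faithful run of the ellipsoid method on agent $i$'s region. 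Only the witnessed agent's ellipsoid shrinks in a given round, and each ellipsoid cut reduces volume by a factor $e^{-1/(2(m+1))}$.

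\textbf{Step 2: the feasible set contains a small ball.} This is the step I expect to be the main obstacle, because the true valuation itself may lie arbitrarily close to the hyperplanes. The remedy is to argue about the constraint system rather than about $\val[i]$. Let $P_i$ be the set of $\vc{w}\in[-1,1]^m$ satisfying, with margin $\delta$, all constraints ever generated for agent $i$.

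All such constraints have coefficients in $\{-1,0,1\}$, and there are finitely many of them (at most $3^m$ distinct ones). The strict system is satisfied by $\val[i]$. By scaling, the system $\vc{a}\cdot\vc{w}\le -1$ together with box constraints $|w_x|\le B$ is then feasible. Its vertices have bit size $\poly(m)$, by Cramer's rule with $\{-1,0,1\}$ matrices, so it has a solution with $B\le 2^{\poly(m)}$. Rescaling this solution back into the box gives a point satisfying all constraints with margin $2^{-\poly(m)}$. Taking $\delta$ to be half of that margin, a ball of radius $2^{-\poly(m)}$ around this point lies in $P_i$.

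This bound holds uniformly, no matter which witnesses the adversary chooses and without any assumption on the bit complexity of the true values. The same argument shows that the true valuation's consistent region is never emptied, so $P_i$ contains this ball at every stage.

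\textbf{Step 3: counting.} The initial ellipsoid containing $[-1,1]^m$ has volume $2^{O(m\log m)}$, and the ball from Step 2 has volume $2^{-\poly(m)}$. The ellipsoid invariant is that each agent's current ellipsoid contains $P_i$. Therefore agent $i$ can be returned as a witness at most $\poly(m)$ times. Summing over agents, after $n\cdot\poly(m)$ rounds no further witness is possible, so the proposed allocation must be certified EF1.

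Each round runs \textsc{DoubleRoundRobin} in $\poly(n,m)$ time, forms at most $m+1$ constraints, and performs an ellipsoid update in $\poly(m)$ arithmetic. A standard rounding argument for the ellipsoid method keeps the bit sizes polynomial. Hence the total running time is $\poly(n,m)$.
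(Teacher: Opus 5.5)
Your proposal is correct and follows the same overall route as the paper. You run an independent ellipsoid in each agent's $m$-dimensional space, turn the witness $(i,j)$ into a cut because $A^{(t)}$ is EF1 under $\vval[][(t)]$, and charge each non-terminating round to one agent. The difference lies at the crux of the iteration bound. The paper only cites the weakly polynomial bound $\poly(m,L)$ and notes in a footnote that the constraint coefficients are $0/1$. It never explains why the set of consistent valuations cannot be arbitrarily thin, even though the true valuation may sit arbitrarily close to the cutting hyperplanes. The footnote's wording even suggests a dependence on the encoding of the true values. Your Step~2 supplies the missing lemma. All normals lie in $\{-1,0,1\}^m$ with right-hand side $0$, so strict feasibility of the homogeneous system forces a feasible point with margin $2^{-\poly(m)}$. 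That gives a ball of radius $2^{-\poly(m)}$ inside every ellipsoid, uniformly over real-valued true valuations and over all adversary choices. This is what actually justifies the $\poly(n,m)$ count for arbitrary real valuations.

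There are two minor points. First, the $\delta$-tightening is not needed, so Algorithm~\ref{alg:ellipsoid-ef1} can be analyzed as written. Since $\vc a\cdot\vval[i][(t)]\ge 0$, the paper's non-strict constraint $\vc a\cdot\vc w\le 0$ is already a valid central or deep cut, even in the tie case you worried about. Moreover, a point with margin $2\delta$ has a ball of radius $2\delta/\sqrt m$ around it inside the non-strict region.

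Second, in Step~2 the boxed system $\{\vc a\cdot\vc w\le -1,\ |w_x|\le B\}$ can have vertices with coordinates $\pm B$, so the Cramer argument should be applied to the unboxed polyhedron instead. A minimal face of $\{\vc w : \vc a\cdot\vc w\le -1\}$ is cut out by a full-row-rank $\{-1,0,1\}$ subsystem. Its basic solutions have entries at most $m^{m/2}$ by Hadamard's bound. Rescaling such a solution into $[-\tfrac12,\tfrac12]^m$ also keeps your ball inside the initial ellipsoid.
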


Before proving Theorem~\ref{thm:ef1-ellipsoid-poly},
we recall the ellipsoid method~\citep{Khachiyan1979,GLS1981} and explain how the separating constraints derived from the single-witness feedback are used to refine the feasible valuation region and update the \pvals in Algorithm~\ref{alg:ellipsoid-ef1}.

\subsection{Ellipsoid-Based Valuation Refinement}
\label{sec:ellipsoid}

The ellipsoid method~\citep{Khachiyan1979,GLS1981} provides a general polynomial-time procedure  
for solving linear feasibility problems of the form 
\[
\text{find } \vc{x} \in \mathbb{R}^d \text{ such that } A \vc{x} \le \vc{b},
\]
given access to a \emph{separation oracle}. 
For any candidate point~$\vc{\widehat{x}}$, the oracle either confirms that $\vc{\widehat{x}}$ is feasible,
or returns a violated inequality ${\vc{a}^{\top}_i} \vc{\widehat{x}} > {b_i}$,
where $\vc{a}_i^\top$ is the \Kth{i} row of the matrix $A$ and \(b_i\) the corresponding entry of the vector \(\vc{b}\),
thereby separating~$\vc{\widehat{x}}$ from the feasible region. 

The ellipsoid algorithm maintains an ellipsoid~$E_t$ that always contains the current feasible region.  
In each iteration, it queries the oracle at the center of~$E_t$:  
if the oracle certifies feasibility, the algorithm terminates;  
otherwise, the returned separating hyperplane cuts off the current center while preserving all feasible points.  
The next ellipsoid \(E_{t+1}\) is chosen to contain the intersection of \(E_t\) with the feasible halfspace identified by the oracle.
Each iteration shrinks the ellipsoid’s volume by a fixed multiplicative factor (depending only on the dimension~\(d\)), while ensuring that all feasible points remain inside.
After a number of oracle calls polynomial in the dimension~$d$,  
the method either finds a feasible point (if one exists) or correctly certifies that no feasible solution exists.

\smallskip
\paragraph{Ellipsoid updates in our algorithm.}
In Algorithm~\ref{alg:ellipsoid-ef1}, the ellipsoid method is applied separately to the
valuation space of each agent.  For a fixed agent $i$, the algorithm maintains a convex
feasible region $\mathcal{R}_i^{(t)} \subseteq \mathbb{R}^m$ consisting of all additive
valuations not yet ruled out by previously derived EF1-based constraints.  The \pval $\vval[i][(t+1)]$ used by the algorithm is the center of the ellipsoid enclosing
$\mathcal{R}_i^{(t)}$.

When an allocation constructed from all of the $\vval[i][t]$ is proposed, it is either certified as EF1, or a violating pair $(i,j)$ is returned.
Certification corresponds to the separation oracle declaring the current ellipsoid center feasible. 
A violating pair corresponds to the infeasible case in the ellipsoid framework. 
In this case, each EF violation implied by the EF1 violation gives rise to a linear constraint. 
We will show in Section~\ref{sec:correctness} that among these constraints, at least one must be \emph{new} and provides the separating hyperplane required by the ellipsoid method.

After the constraint is obtained, the feasible region is refined according to
\[
\mathcal{R}_i^{(t)}
=
\mathcal{R}_i^{(t-1)} \cap \mathcal{C}_i^{(t)},
\]
where $\mathcal{C}_i^{(t)}$ denotes the polyhedral region defined by the intersection of the halfspaces corresponding to all linear constraints implied by the feedback received at iteration~$t$.
Since the algorithm incorporates all such constraints, including additional (possibly redundant) ones that do not affect correctness, 
the feasible region can only shrink further.
The ellipsoid method then computes a new ellipsoid enclosing $\mathcal{R}_i^{(t)}$, and its center becomes the next \pval $\vval[i][(t)]$. 
Consequently, each iteration further refines the valuation estimates.

\subsection{Correctness and Complexity}\label{sec:correctness}

\begin{proof}[Proof of Theorem~\ref{thm:ef1-ellipsoid-poly}]
Having established the ellipsoid framework as the information-update mechanism,
we now show that Algorithm~\ref{alg:ellipsoid-ef1} always terminates with an EF1 allocation,
making only $\poly(n,m)$ allocation proposals and running in total time $\poly(n,m)$.

\paragraph{Correctness.}
We view the space of each agent’s additive valuations as a convex region in~$\mathbb{R}^m$,
where each coordinate corresponds to the value assigned to one item.
The restriction to valuations in $[-1,1]$ is without loss of generality, as all valuations can be scaled down by the maximum individual valuation for any item. Hence, we assume that the (unknown, adversarial) ground truth valuations all lie in $[-1,1]$.
The algorithm does not attempt to reconstruct the true valuation~$\vtrue[i]$, but to iteratively refine a \pval~$\vval[i][(t)]$ until the allocation produced under~$\vval[][(t)]$ is certified EF1 by the adversary.

By the correctness of the \textsc{DoubleRoundRobin} subroutine~\citep{Aziz2022ef1mixed}, if the feedback is a violating pair~$(i,j)$,
then the current \pval~$\vval[t][(t)]$ cannot coincide with~$\vtrue[i]$:
after all, under $\vtrue[i]$, \textsc{DoubleRoundRobin} would not produce a violation for~$i$.
Hence, the appearance of~$i$ as the first index in a violating pair
provides concrete evidence that its current \pval~$\vval[i][(t)]$ is inconsistent with the true valuations.

In step~(b) of Algorithm~\ref{alg:ellipsoid-ef1}, the feedback $(i,j)$ is interpreted as indicating that, under the valuations $\vtrue[i]$, 
$i$ continues to envy agent~$j$ regardless of whether no item is removed or which single item is removed from either the bundle $A_i^{(t)}$ or $A_j^{(t)}$.

Accordingly, the algorithm adds the corresponding linear constraints:
the constraint $v_i(A_i^{(t)}) \le v_i(A_j^{(t)})$, as well as, for every $x \in A_i^{(t)}$, the constraint $v_i(A_i^{(t)} \setminus \SET{x}) \le v_i(A_j^{(t)})$, 
and for every $y \in A_j^{(t)}$, the constraint $v_i(A_i^{(t)}) \le v_i(A_j^{(t)} \setminus \SET{y})$.
All these constraints must be satisfied by $\vtrue[i]$, 
and therefore define halfspaces that are added to refine the feasible region in step~(b).

On the other hand, since the allocation constructed from $\vval[][(t)]$ is EF1 under $\vval[i][(t)]$, there exists at least one case 
(either no item removal or the removal of a single item from $A_i^{(t)}$ or $A_j^{(t)}$) 
under which agent~$i$ would no longer envy agent~$j$ if the valuations were $\vval[i][(t)]$.
The corresponding inequality is therefore violated by $\vval[i][(t)]$ but satisfied by $\vtrue[i]$, 
and hence defines a valid separating hyperplane for the ellipsoid update.

{
Since the feasible region is initialized as the entire (normalized) valuation space and every added halfspace contains $\vtrue[i]$, 
the region $\mathcal{R}_i^{(t)}$ is nonempty throughout the execution of the algorithm.
Furthermore, whenever the current allocation is not certified EF1, the received feedback identifies a pair $(i,j)$ that induces a collection of linear constraints, 
among which at least one excludes the current estimate $\vval[i][(t)]$, 
while all constraints are satisfied by all valuations consistent with the feedback (including $\vtrue[i]$).}
Consequently, each non-terminating round introduces a valid separating constraint and therefore makes progress by the ellipsoid method.

By the convergence guarantee of the ellipsoid method, after at most polynomially many (see below) iterations, the ellipsoid method must terminate. At that point, there cannot be any more separating hyperplanes, meaning that the allocation computed using the \pvals $\vval$ must be EF1 under the (unknown) true valuations $\vtrue$. 
Note that this does not necessarily imply that $\vval = \vtrue$; the algorithm will typically terminate much earlier.

\paragraph{Complexity.}
Each iteration performs three operations.
\begin{enumerate}
\item \emph{Allocation step.} 
Each run of \textsc{DoubleRoundRobin} takes $O(\max (m^2, mn))$ time 
(by \citet[Theorem~1]{Aziz2022ef1mixed}).

\item \emph{Witness processing.}
Given the violating pair $(i,j)$, the algorithm constructs the full collection of linear constraints corresponding to the EF1 conditions for the agent pair $i,j$, including the case of no item removal as well as all single-item removals from $A_i^{(t)}$ and $A_j^{(t)}$.
Evaluating the three EF1 conditions takes
$O(1 + |A_i^{(t)}| + |A_j^{(t)}|) = O(m)$ time.   
Each resulting linear inequality involves at most $m$ item variables, and at most
$O(m)$ such inequalities are generated, so writing down all corresponding linear
constraints requires $O(m^2)$ time.

\item \emph{Ellipsoid update.} The algorithm maintains an ellipsoid in $\mathbb{R}^m$ for each agent $i$.
Each added halfspace triggers a standard ellipsoid update, which requires $\poly(m)$ arithmetic time.  
In each iteration, at most $O(m)$ such halfspaces are generated for agent $i$, so the total ellipsoid-update time per iteration is still $\poly(m)$.
\end{enumerate}

By the weakly polynomial bounds of the ellipsoid method
(e.g., \citealp{Khachiyan1979,GLS1981}), the number of separation calls (and thus
number of proposed allocations) needed per agent is $\poly(m, L)$, where $L$ bounds
the bit-length of numbers used in the constraints.\footnote{In our setting, the
separating inequalities are sums of item values and have coefficients in $\{0,1\}$; thus $L$ can be taken polynomial in the input description of item values.} 
Since in each non-terminating round, exactly one agent~$i$ is updated,
the total number of rounds is $\sum_{i=1}^n \poly(m,L)=\poly(n,m)$, 
and the total arithmetic time is also $\poly(n,m)$,
accounting for the call to \textsc{DoubleRoundRobin}, witness identification,
and the ellipsoid update in each iteration.

As the existence of the true valuation $\vtrue[i]$ ensures that the feasible region for each agent remains nonempty throughout the execution of the algorithm, 
and \textsc{DoubleRoundRobin} is guaranteed to output an EF1 allocation under the true additive mixed valuations \citep[Theorem~1]{Aziz2022ef1mixed}, 
the ellipsoid method must eventually terminate, and at that point, the allocation is indeed EF1 with respect to the true valuations.
Consequently, the overall procedure must terminate within $\poly(n,m)$ iterations and total $\poly(n,m)$ time with an allocation certified EF1.
\end{proof}

\section{Generalizing the Ellipsoid Framework}
\label{sec:generalizing-ellipsoid}
We now show that the ellipsoid-based approach introduced in Section~\ref{sec:ef1_additive} extends beyond envy-freeness and yields a general framework for computing fair allocations from limited interactive feedback.
The framework applies whenever a fairness violation, together with the available feedback,
can be translated into a separating constraint on the agents' valuations.

We first illustrate the framework by instantiating it for \emph{proportionality up to one item (PROP1)} under additive valuations.
This example shows that the same algorithmic structure and correctness argument extend beyond EF1.
We then abstract the common structure underlying the EF1 and PROP1 instantiations to formulate a general ellipsoid-based fairness framework.

\subsection{An Instantiation: PROP1 under Additive Valuations}
\label{subsec:prop1-instantiation}

To adapt the framework to PROP1, three components may in principle require modification:
the allocation subroutine, the form of witness-based feedback, and the rule for identifying new separating constraints in step~(b).

However, under additive valuations, EF1 implies PROP1, and hence any subroutine
that finds an EF1 allocation under the true valuations also guarantees PROP1.
We therefore retain the same \textsc{DoubleRoundRobin} algorithm as the allocation module,
and only replace the EF1 witness with its PROP1 counterpart,
while updating the constraint construction rule accordingly.
The resulting procedure, given as Algorithm~\ref{alg:ellipsoid-prop1}, differs from
Algorithm~\ref{alg:ellipsoid-ef1} only in the type of witness feedback and corresponding constraints.
Its correctness and polynomial complexity follow by arguments analogous to the EF1 case.

\begin{algorithm}[t]
\caption{\textsc{Ellipsoid-PROP1} (Single Witness Feedback, Mixed Setting, Additive Valuations)}
\label{alg:ellipsoid-prop1}
\DontPrintSemicolon
\SetKwInput{KwInput}{Input}
\SetKwInput{KwOutput}{Output}
\KwInput{Agents $N$ and items $M$\dkedit{.}}
\KwOutput{An EF1 allocation with respect to the (unknown) true additive valuations.}
\BlankLine

\lForEach{$i\!\in\!N$}{
  Initialize (normalized) feasible region $\mathcal{R}_i^{(0)}\!\gets\![-1,1]^m$.}
$t \gets 0$.\;

\While{no allocation has been certified as PROP1}{
  $t \gets t+1$.\;

  \tcp{\textbf{(a) Valuation update and allocation step}}
  $\vval[k][(t)] \gets \mathrm{Center}\big(\mathcal{R}_k^{(t-1)}\big)$ for all $k \in N$.\;
  $A^{(t)} \gets \textsc{DoubleRoundRobin}(\vval[][(t)])$.\;
  Propose the allocation $A^{(t)}$.\;
  \lIf{$A^{(t)}$ is certified PROP1}{\Return $A^{(t)}$.}

  \BlankLine
  \tcp{\textbf{(b) Identify new separating constraints}}
  Receive a violating agent $i$.\;

  Create new constraints
  \(
    \mathcal{C}_i^{(t)} :=
    \begin{aligned}[t]
      &\Set{\val[i]}{v_i(A_i^{(t)}) \le \tfrac{1}{n} v_i(M)} \\
      &\cap \bigcap_{y \in M \setminus A_i^{(t)}}
          \Set{\val[i]}{v_i(A_i^{(t)} \cup \{y\}) \le \tfrac{1}{n} v_i(M)} \\
      &\cap \bigcap_{x \in A_i^{(t)}}
          \Set{\val[i]}{v_i(A_i^{(t)} \setminus \{x\}) \le \tfrac{1}{n} v_i(M)} .
    \end{aligned}
  \)    

  \BlankLine
  \tcp{\textbf{(c) Ellipsoid update}}
  $\mathcal{R}_i^{(t)} \gets \mathcal{R}_i^{(t-1)} \cap \mathcal{C}_i^{(t)}$.\;
  $\mathcal{R}_k^{(t)} \gets \mathcal{R}_k^{(t-1)}$ for all $k \neq i$.\;
}
\end{algorithm}

\begin{theorem}
\label{thm:prop1-ellipsoid-poly}
Under additive valuations in the mixed setting,
Algorithm~\ref{alg:ellipsoid-prop1} always terminates with a PROP1 allocation within $\poly(n,m)$ rounds of proposed allocations under the Single-Witness PROP1 Model, and runs in total time $\poly(n,m)$.
\end{theorem}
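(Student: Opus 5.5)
The proof mirrors that of Theorem~\ref{thm:ef1-ellipsoid-poly}; only the separation step changes. For each agent $i$ we view $\mathcal{R}_i^{(t)}\subseteq[-1,1]^m$ as the feasible region of an ellipsoid run, and we show that every non-terminating round yields a hyperplane that separates the current center $\vval[i][(t)]$ from $\vtrue[i]$. The interaction bound, the running-time bound, and termination then follow exactly as before.

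\emph{Validity of the constraints.} First, each constraint in $\mathcal{C}_i^{(t)}$ is linear in $\val[i]$. By additivity, $v_i(M)$, $v_i(A_i^{(t)})$, and $v_i(A_i^{(t)}\cup\{y\})$ are sums of the coordinates of $\val[i]$, and the right-hand side is $\tfrac1n\sum_{x\in M} v_i(x)$. So each constraint has coefficients in $\{0,1,1-\tfrac1n,-\tfrac1n\}$; after multiplying by $n$, the coefficients are integers of magnitude at most $n$, and the bit-length $L$ stays polynomial. Second, every constraint is satisfied by $\vtrue[i]$. By Definition~\ref{def:single-witness-prop1}, a returned witness $i$ satisfies $A_i^{(t)}\prec_i^{\mathrm{PROP1}} M$ under $\vtrue[i]$, so the three families of strict inequalities hold and in particular their non-strict versions do. Hence $\vtrue[i]\in\mathcal{R}_i^{(t)}$ for all $t$, and every region stays nonempty.

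\emph{Separation.} This is the only place where the PROP1 structure matters. The allocation $A^{(t)}$ is EF1 under $\vval[][(t)]$ by the guarantee of \textsc{DoubleRoundRobin} \citep[Theorem~1]{Aziz2022ef1mixed}. Proposition~\ref{prop:ef1-implies-prop1}, applied to the additive mixed valuations $\vval[][(t)]$, then shows that $A^{(t)}$ is PROP1 under $\vval[][(t)]$. So agent $i$ is not PROP1-envious under $\vval[i][(t)]$, which means at least one of the following holds:
\[
\widehat v_i(A_i^{(t)})\ge \tfrac1n\widehat v_i(M), \quad \widehat v_i(A_i^{(t)})+\widehat v_i(y)\ge\tfrac1n\widehat v_i(M) \text{ for some } y\notin A_i^{(t)}, \quad \widehat v_i(A_i^{(t)})-\widehat v_i(x)\ge\tfrac1n\widehat v_i(M) \text{ for some } x\in A_i^{(t)}.
\]
This is where a difficulty arises. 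The negation is non-strict, so the center may lie exactly on the hyperplane rather than strictly violate it, while $\vtrue[i]$ satisfies the corresponding inequality strictly. The same issue was implicit in the EF1 proof. A hyperplane through the center with the feasible set on one closed side is exactly a central cut, which is all the ellipsoid update needs: we keep the half-ellipsoid $\{\val[i] : \vc{a}^\top \val[i]\le \vc{a}^\top\vval[i][(t)]\}$, which contains $\vtrue[i]$, and the volume still shrinks by a factor $e^{-1/(2m+2)}$. If a stricter argument is wanted, we can record the original strict inequalities satisfied by $\vtrue[i]$ and argue, as in the standard treatment of the ellipsoid method, that the open set of valuations consistent with all feedback has volume bounded below in terms of $L$.

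\emph{Counting.} The rest follows the EF1 proof. Each non-terminating round performs one central cut on the ellipsoid of exactly one agent. The weakly polynomial bound of \citet{Khachiyan1979,GLS1981} limits the number of cuts per agent to $\poly(m,L)$ before the ellipsoid becomes too small to contain the feasible region, which cannot happen since $\vtrue[i]$ always remains feasible. Summing over agents gives at most $n\cdot\poly(m,L)=\poly(n,m)$ proposals, after which the output must be certified PROP1. The per-round work is one \textsc{DoubleRoundRobin} call in $O(\max(m^2,mn))$ time, writing down the $O(m)$ constraints in $O(m^2)$ time, and an ellipsoid update in $\poly(m)$ time. The total running time is therefore $\poly(n,m)$.
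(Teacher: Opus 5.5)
Your proposal is correct and follows essentially the same route as the paper's proof sketch. \textsc{DoubleRoundRobin} yields an EF1, hence (by Proposition~\ref{prop:ef1-implies-prop1}) PROP1, allocation under the putative valuations, so some witness-induced constraint cuts off the current center while keeping the true valuation, and the per-agent ellipsoid bound summed over agents gives the $\poly(n,m)$ interaction and time bounds. Your extra care (the equality case handled as a central cut, and a full-dimensional feasible region with a volume lower bound) tightens points the paper's sketch glosses over without changing the argument.
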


\begin{proof}[Proof sketch]
We again view each agent’s additive valuation as a point in a convex region of $\mathbb{R}^m$,
where $m$ is the number of items and each coordinate represents the value of one item.
The argument then parallels that for Theorem~\ref{thm:ef1-ellipsoid-poly}.

When the feedback consists of a violating agent~$i$ for the proposed allocation~$A^{(t)}$,
the algorithm constructs the set $\mathcal{C}_i^{(t)}$ consisting of the linear inequalities corresponding to the three types of PROP1 conditions for agent~$i$, 
as specified in Algorithm~\ref{alg:ellipsoid-prop1} step~(b).
By definition of a PROP1 violation, the true valuation $\vtrue[i]$ satisfies all inequalities in $\mathcal{C}_i^{(t)}$.

On the other hand, under the current \pval $\vval[i][(t)]$, the allocation $A^{(t)}$ produced by \textsc{DoubleRoundRobin} satisfies EF1, and hence also satisfies PROP1.
Therefore, at least one of the PROP1 violation inequalities in $\mathcal{C}_i^{(t)}$ is violated by $\vval[i][(t)]$.
It follows that intersecting the current feasible region with $\mathcal{C}_i^{(t)}$ eliminates the current \pval while preserving $\vtrue[i]$, yielding a valid separating hyperplane for agent~$i$.
All remaining (possibly redundant) inequalities in $\mathcal{C}_i^{(t)}$ only further restrict the feasible region,
without excluding the true valuation.

The ellipsoid update thus preserves feasibility while strictly shrinking the feasible region whenever the algorithm does not terminate.
Since under the true valuations, \textsc{DoubleRoundRobin} produces an EF1 allocation,
and EF1 implies PROP1, once a proposed allocation is consistent with the true valuations, the allocation satisfies PROP1, and the algorithm terminates.
Each iteration runs in $\poly(n,m)$ time, and each agent requires no more than $\poly(m)$
separating steps; summing over all agents yields a total running time and interaction complexity polynomial in~$n$ and~$m$.
\end{proof}

\subsection{General Ellipsoid-Based Fairness Framework}
\label{subsec:ellipsoid-framework}

We now abstract the common structure underlying the EF1 and PROP1 instantiations
to obtain a general \emph{ellipsoid-based framework}
for interaction-efficient fair allocation under limited feedback.
The framework consists of four modular components, each of which can be replaced or adapted to target different fairness notions and valuation classes.

\begin{description}
\item[(1) Allocation Subroutine.]
Given access to the true valuations from a specified valuation class,
this subroutine must output an allocation satisfying the target fairness notion in polynomial time.
For EF1 and PROP1 under additive valuations, we instantiate this module with the
\textsc{DoubleRoundRobin} algorithm of \citet{Aziz2022ef1mixed};
the \emph{Generalized Envy Graph} algorithm proposed in
\citet[Algorithm~2]{Aziz2022ef1mixed} can also serve as an alternative.
For PROP1, one may further use the polynomial-time procedure established by
\citet[Theorem~5]{Aziz2022ef1mixed}, which computes a contiguous PROP1 allocation. Here, ``contiguous'' means that the items are arranged in some arbitrary order,
and each agent receives a block of consecutive items in that order.
In the setting of identical monotone valuations with goods, an EF1 connected allocation can also be computed via Algorithm~1 of \citet{Bilo2022Connect}, and may be used as an allocation subroutine when contiguity is required.

\item[(2) Feedback Model.]
The feedback model specifies the form of feedback received in response to a fairness violation,
and thereby determines the information structure available to the algorithm.
Our framework applies whenever the feedback provides a \emph{witness}
that enables the algorithm to construct a valid separating constraint.
The single-witness EF1 and PROP1 models are examples of such feedback models.
More informative models, for instance those that allow direct EF1 interactions
between specific agents, also fit naturally into the framework.
In contrast, weaker models that merely report that some agent~$i$
violates EF1 without identifying a concrete witness are insufficient,
since they do not provide enough information to construct a separating hyperplane between the current and true valuations.

\item[(3) Separation construction rule.]
Given the feedback, the algorithm must specify how to translate the witness into one or more linear constraints that separate the current \pval from the true valuation. The precise rule depends on the fairness notion and adversary type.  
In our EF1 and PROP1 instantiations, each fairness violation  witness naturally induces a collection of linear inequalities,
corresponding to the relevant violated EF1 or PROP1 conditions.
These inequalities define a set of halfspace constraints that is consistent with the true valuations.
As shown in the correctness analysis, at least one constraint separates the current \pval from the true one, thus providing the separating hyperplane required by the ellipsoid method. 

\item[(4) Valuation Encoding.]
The valuation space must first admit a convex representation so that the ellipsoid update applies.  
The dimensionality of the ellipsoid then depends on how valuations are encoded. 
For additive valuations, the representation for each agent's valuations is $m$-dimensional, with one coordinate per item,
which leads to a polynomial-time implementation.
For richer valuation classes such as monotone valuations, 
the dimensionality typically increases substantially (for example, to $2^m$ if each subset is assigned an independent value).
In these cases, the ellipsoid method remains conceptually applicable, 
but the number of iterations is no longer polynomial in the natural parameters $n$ and $m$.
A somewhat natural restriction of this general model (here considered for goods) would be one in which agents are interested in bundles of at most $s$ items, for some constant $s$. Then, $v_i(A) = \max_{A' \subseteq A: |A'| \leq s} v_i(A')$ for all $A$ with $\SetCard{A} > s$. In this setting, the number of variables is $O(n m^s)$, and the algorithm can compute the valuation of $i$ for a bundle $A$ from the LP variables $v_i(A')$ (for $\SetCard{A'} \leq s$) in time $O(m^s)$. Another natural class of models can be obtained when the \emph{algorithm} only considers a polynomial-sized subset of possible bundles; we will see this in Section~\ref{sec:monotone-ef1}, where the algorithm chooses an (arbitrary, but fixed) ordering of items, and only allocates agents bundles which are the union of an interval plus at most one additional item; note that there are at most polynomially many such bundles whose valuations must be learned.
\end{description}

Together, these components form a unified ellipsoid-based framework that captures
a wide spectrum of fairness notions and feedback structures.
Whenever the allocation subroutine and adversary together admit a valid separating rule,
the framework achieves convergence in a number of interactions polynomial in the valuation dimension.
We further demonstrate in Appendix~\ref{sec:efx-charity} that this framework 
extends naturally to another relaxation of envy-freeness, namely \emph{EFX-with-charity}.

In the remainder of the paper, we return to \emph{EF1} as our primary fairness notion, and focus on refined algorithmic guarantees and structural insights specific to EF1.

\section{Polynomial-Interaction EF1 in the Single-Witness Model under Monotone Valuations}
\label{sec:monotone-ef1}
As discussed in the context of valuation encoding in the general framework,
the ellipsoid-based approach remains conceptually applicable to monotone valuations.
However, when valuations are represented by vectors,
a monotone valuation requires an encoding with exponentially many dimensions --- one for each subset of the items --- which in turn leads to exponentially many ellipsoid iterations and hence exponential interaction complexity.
This motivates a fundamentally different approach that avoids reasoning over such high-dimensional valuation representations.

Throughout this section, we restrict attention to the \emph{goods setting} with monotone valuations.
As noted in Section~\ref{sec:prelim}, monotonicity is defined in the opposite direction for chores,
and the mixed setting does not admit a well-defined notion of monotonicity.

Our strategy is to impose additional structure on the \emph{allocations} rather than on the valuations.
By restricting attention to allocations with simple geometric form,
we ensure that EF1 violations can arise only in a controlled number of ways,
which in turn allows the algorithm to make progress using only polynomially many single-witness EF1 interactions.

To this end, the algorithm puts the items in an arbitrary but fixed order, and considers only bundles that are ``simple'' with respect to this ordering.
The simplest such bundles are contiguous intervals, i.e.,
bundles consisting of items that occupy consecutive positions in the chosen ordering.
While interval EF1 allocations were previously known to exist in the special case of identical valuations~\citep{oh2021fairly}, \citet{Igarashi2023Discrete} proves that, for any fixed ordering, such allocations also exist under heterogeneous monotone valuations.
Our analysis below uses an independently obtained, slightly weaker structural result.
Building on the interval-EF2 existence result of
\citet[Theorem~5.4]{Bilo2022Connect}, 
we show that EF1 can be achieved with a slight relaxation of the interval structure.

For completeness, Section~5.1 presents our independently obtained result showing that every fair division instance with monotone valuations admits an EF1 allocation in which each agent receives a bundle obtained from an interval plus at most one additional item.
We then exploit this structural characterization in Section~5.2 to design an EF1 algorithm with polynomial interaction complexity under the Single-Witness EF1 Model.\footnote{Replacing our existence result with Igarashi's theorem~\citep[Theorem~3.1]{Igarashi2023Discrete} yields an analogous algorithm restricted to
contiguous allocations.}

\subsection{Existence of Almost-Contiguous EF1 Allocations}
\label{subsec:almost-contiguous}

We begin by formalizing the notion of an almost-contiguous allocation.
We will frequently reason about intervals of discrete numbers, and use the notation $\Interv{L}{R} = \Set{j \in \mathbb{Z}}{L \leq j \leq R}$, with the understanding that $\Interv{L}{R} = \emptyset$ when $R < L$.

\begin{definition}[Almost-Contiguous Allocation]
\label{def:almost-contiguous}
An allocation $A = (A_1,\dots,A_n)$ is said to be \emph{almost-contiguous} with respect to the given ordering of items if each agent $i$ receives a bundle $A_i$ that can be written as the union of a contiguous interval and at most one additional item.
Formally, for each agent $i$, there exist indices $j_{i,L}, j_{i,R}$ such that $\Interv{j_{i,L}}{j_{i,R}} \subseteq A_i$ and $|A_i \setminus \Interv{j_{i,L}}{j_{i,R}}| \leq 1$.
\end{definition}

\begin{theorem}[Existence of Almost-Contiguous EF1 Allocations for Goods]
\label{thm:exist-almost-contiguous}
{Let $(N, M,$ $(v_i)_{i\in N})$ be an arbitrary instance} of indivisible goods with monotone valuations. 
Then, for any fixed ordering of the items,
there exists an \emph{almost-contiguous allocation}  that is EF1.
\end{theorem}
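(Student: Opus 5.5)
One option is to cite \citet[Theorem~3.1]{Igarashi2023Discrete}, since every contiguous allocation is almost-contiguous. To stay self-contained, however, I will instead adapt the interval-EF2 argument of \citet[Theorem~5.4]{Bilo2022Connect}. That argument rests on a discrete moving-knife / Sperner-type fixed point. We parametrize all contiguous partitions of the ordered items $1,\dots,m$ into $n$ intervals by cut positions $0=c_0\le c_1\le\dots\le c_{n-1}\le c_n=m$. We then use a combinatorial topology argument (a Sperner or KKM lemma over the simplex of ``fractional'' cut positions, discretized by item boundaries) to find a configuration in which every agent simultaneously weakly prefers some interval, and these intervals are distinct. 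In their setting the slack from discretization costs two items per comparison. My plan is to reuse this fixed point but spend the slack more cleverly: one of the two ``boundary'' items is physically handed to a neighbour instead of being removed. This is exactly what the extra item in Definition~\ref{def:almost-contiguous} allows.

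\textbf{Step 1: the fixed point.} I will run the Sperner labeling of \citet{Bilo2022Connect}, in which the vertices of a triangulated simplex correspond to cut vectors. At each vertex the owner agent (assigned via the standard owner-labeling) labels it by an interval it prefers. To handle items that straddle a cut, I will model each cut as sitting \emph{on} an item, i.e. boundary items are ``shared''. Each agent then evaluates the closed interval (boundary items included) and the open interval (boundary items excluded). A fully labeled cell yields, after the fine-mesh limit, a single cut vector with the following property. Each agent $i$ can be matched to a distinct interval $I_{\sigma(i)}$ such that $v_i$ of the closed version of $I_{\sigma(i)}$ is at least $v_i$ of the open version of every other interval $I_k$. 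The open version differs from $I_k$ by at most its two boundary items.

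\textbf{Step 2: rounding boundary items.} Each of the $n-1$ cut items must be assigned to one of its two adjacent intervals. I will orient the assignment so that each bundle is its open interval plus at most one boundary item. Processing cuts left to right, I give cut item $c_\ell$ to the right interval unless the owner of that interval already has one. With $n$ intervals and $n-1$ cut items, a greedy/Hall argument on this path shows that each interval receives at most one extra item while still receiving a contiguous extension on one side. The resulting bundle is then an interval plus at most one item, hence almost-contiguous. I then verify EF1 for the pair $i$ and $k$. Agent $i$'s bundle contains its open interval plus possibly a boundary item. Agent $k$'s bundle is contained in the closed version of $I_k$, which is the open version plus at most two boundary items. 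One of those items must be contiguous with the open part. Removing the single non-contiguous or designated boundary item from $A_k$ should bring its value below $v_i$ of the closed $I_{\sigma(i)}$; monotonicity then compares this with $A_i$.

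\textbf{Main obstacle.} The delicate point is the mismatch in Step 2. Agent $i$'s guarantee is stated in terms of its \emph{closed} interval, but $i$ may actually receive only the open interval plus one boundary item. The fix I would try first is to choose the fixed-point inequality asymmetrically, with closed-on-one-side intervals for the owner versus open intervals for others. Equivalently, I would assign the Sperner labels using the pair of values $v_i(\text{own interval with left boundary})$ versus $v_i(\text{other interval without its left boundary})$. The rounding rule ``each cut item goes to its right neighbour'' then matches the fixed point exactly. This leaves one unmatched boundary item per envied bundle, which is the single item removed for EF1. The leftover case is a bundle whose chosen item lies on the right. I would absorb it by giving that item as the ``additional'' non-contiguous item, since Definition~\ref{def:almost-contiguous} permits at most one such item per agent.
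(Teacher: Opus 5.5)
Citing \citet[Theorem~3.1]{Igarashi2023Discrete} would indeed settle the statement. However, the self-contained argument you actually develop has a genuine gap, and it sits exactly at your ``main obstacle''.

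In Step~1 you collapse the fully labeled cell to a single cut vector ``after the fine-mesh limit''. With indivisible items there is no limit to take: the half-item knife grid is already the finest one. The cell therefore remains a family of $n$ knife positions $\bm{k}_1,\dots,\bm{k}_n$, and agent $i$'s preference holds only inside its own partial partition $P(\bm{k}_{\varphi(i)})$.

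Your closed/open summary is a valid consequence of this, but it is too lossy. The closed version of $I_{\sigma(i)}$ may contain two boundary items, while $A_i$ contains at most one of them. Since $A_i$ is a \emph{subset} of the closed interval, monotonicity gives $v_i(A_i)\le v_i(\text{closed } I_{\sigma(i)})$. That is the wrong direction for concluding $v_i(A_i)\ge v_i(A_k\setminus\{x\})$, so the EF1 check in Step~2 fails.

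The asymmetric-labeling repair is only asserted. What it would deliver is this: every cut item goes to its right neighbour, and each agent weakly prefers its own interval to every other interval minus its leftmost item. That is a \emph{contiguous} EF1 allocation, i.e.\ at least Igarashi's theorem. This is strictly more than the Sperner argument of \citet{Bilo2022Connect} is known to give (EF2 for intervals). You give no reason why a fully labeled cell under the new labels would collapse to one partition without again losing boundary items.

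The missing idea is to keep the vertex-specific bundles $D_i := P(\bm{k}_{\varphi(i)})_{\sigma(i)}$ instead of passing to closed intervals. Two structural facts about the cell then suffice.
\begin{enumerate}
\item Each $D_i$ equals the basic bundle $P^{\mathrm{basic}}_{\sigma(i)}=\bigcap_t P(\bm{k}_t)_{\sigma(i)}$ plus at most one boundary item. An interior bundle containing both adjacent boundary items would need its left knife still unmoved while its right knife has already moved, contradicting that knives advance in index order.
\item Boundary items never switch sides, so the $D_i$ are pairwise disjoint.
\end{enumerate}
Give each agent exactly $D_i$. Since $|R|\le n-1$, you can hand each leftover boundary item, as the non-contiguous extra item, to a distinct agent whose $D_i$ is basic.

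Now every $A_j$ is $P^{\mathrm{basic}}_{\sigma(j)}$ plus at most one item, and $v_i(A_i)\ge v_i(D_i)\ge v_i(P(\bm{k}_{\varphi(i)})_{\sigma(j)})\ge v_i(P^{\mathrm{basic}}_{\sigma(j)})$. Monotonicity now runs the right way because $A_i\supseteq D_i$. Note that the extra item permitted by the definition is used to absorb leftover boundary items, not to repair an orientation mismatch as in your Step~2.
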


\noindent
\emph{Proof overview.}
The proof builds on the existential argument of \citet[Theorem~5.4]{Bilo2022Connect}, which uses a carefully constructed representation of partial interval allocations together with Sperner's Lemma to establish the existence of a contiguous EF2 allocation under monotone valuations.
We modify their construction to obtain an almost-contiguous EF1 allocation.

\subsubsection{Background and proof intuition.}

The idea of allocating contiguous bundles is closely related to the classical problem of \emph{cake cutting}, which studies fair division of a divisible heterogeneous resource.
In this literature, the resource is typically modeled as the interval $[0,1]$, and each agent has their own valuation over measurable subsets.
A central goal is to partition the cake into pieces and allocate one piece to each agent such that the resulting allocation is envy-free.

A fundamental result in this area shows that envy-free cake divisions always exist.
Specifically, it is known that there always exists an envy-free allocation in which each agent receives a single interval of the cake \citep{stromquist1980cut}.
A classical approach to proving this result uses Sperner’s Lemma \citep{su1999rental}:
one considers the space of all interval partitions of the cake, labels them according to agents’ preferences, and applies a Sperner’s Lemma argument to establish the existence of an envy-free division.
Within this framework, the positions of the cuts between intervals play a central role.
These positions, often referred to as \emph{knife positions}, parameterize the space of possible allocations.

The indivisible-goods setting introduces additional challenges, as items cannot be split.
Building on the ideas underlying the cake-cutting proofs, \citet[Theorem~5.4]{Bilo2022Connect} establish an existence result for indivisible goods with monotone valuations.
At a high level, their argument considers a discrete analogue of interval partitions along a fixed ordering of the items. 
The knife positions are allowed to be either between two items or on an item (covering the item), and the construction allows the \emph{boundary items}, namely items covered by the knife positions, to remain temporarily unassigned.

By organizing these partial allocations into a lattice and applying Sperner’s lemma, \citet{Bilo2022Connect} show the existence of a collection of $n$ ``partial allocations'' that differ only in the assignment of the boundary items and agree on the assignment of the remaining interval bundles to agents.
These partial allocations have the property that there exists a one-to-one matching between agents and these partial allocations (i.e., the \emph{entire allocation of bundles to all agents}) such that, in the partial allocation assigned to agent $i$, agent $i$ is envy-free with respect to their own bundle.
A careful assignment of the remaining boundary items then yields a fully contiguous allocation that is EF2.

We build on this framework and modify the argument to obtain EF1, while slightly relaxing the requirement of contiguity.

\subsubsection{Knife positions and interval partitions.}
We omit the technical details of the application of Sperner’s lemma in the proof of \citet[Theorem~5.4]{Bilo2022Connect}, and instead work directly with the discrete structure that arises from their existence argument.

Fix an arbitrary ordering of the items, which we identify without loss of generality with positions $1,2,\ldots,m$ along a line.
A \emph{knife position} is a vector $\bm{k}=(k^1,\ldots,k^{n-1})$ specifying the locations of $n{-}1$ cuts between consecutive agents' bundles.
Each coordinate $k^j$ is chosen from the discrete grid
\[
\left\{\tfrac12, 1, \tfrac32, \ldots, m, m+\tfrac12\right\}.
\]
A half-integer value indicates a cut between two consecutive items, whereas an integer value indicates that the cut \emph{covers} the item at that position.

Given a knife position $\bm{k}$, it induces a partial interval partition
$P(\bm{k})=\left(P(\bm{k})_1,\ldots,P(\bm{k})_n\right)$ of the items as follows:
\[
\begin{aligned}
P(\bm{k})_1 &= \Interv{1}{\left\lfloor k^1-\tfrac12 \right \rfloor},\\
P(\bm{k})_j &= \Interv{\left\lceil k^{j-1}+\tfrac12 \right\rceil}{\left\lfloor k^j-\tfrac12 \right\rfloor}
&& (2 \le j \le n{-}1),\\
P(\bm{k})_n &= \Interv{\left\lceil k^{n-1}+\tfrac12 \right\rceil}{m}.
\end{aligned}
\]
If a coordinate $k^j$ is integral, the corresponding item is covered by the knife and does not belong to either adjacent bundle.
Such items are referred to as \emph{boundary items} and are treated as temporarily unassigned, as discussed above.

\medskip
To illustrate the knife construction, consider $n=4$ agents and $m=5$ items
labeled $1$--$5$ in order, and the knife position
$\bm{k}=(1,\,2.5,\,4)$.
The resulting partial interval partition is summarized in Table~\ref{tab:knife-example}.

\begin{table}[H]
\centering
\renewcommand{\arraystretch}{1.2}
\begin{tabular}{@{}l@{\quad}l@{\quad}l@{}}
\toprule
\textbf{Knife position} & \textbf{Visual representation of items} & \textbf{Bundles $P(\bm{k})$} \\
\midrule
$\bm{k}=(1,\,2.5,\,4)$ &
$\boxed{1}\;2\;|\;3\;\boxed{4}\;5$ &
$\varnothing,\;\{2\},\;\{3\},\;\{5\}$ \\
\bottomrule
\end{tabular}
\caption{
An illustration of a knife position and the induced partial interval partition.
A vertical bar “\textbar{}” denotes a half-integer knife placed between two consecutive items,
while a boxed item indicates an integer knife covering that item.
Covered items belong to neither adjacent bundle and are treated as boundary items.
}
\label{tab:knife-example}
\end{table}

\subsubsection{Families of knife positions.}
We now describe the specific family of knife positions that arises in the existence proof of \citet{Bilo2022Connect}.

Fix any valid knife position (i.e., vector) $\bm{k}_1$.
From this initial position, we define a family $\bm{k}_1,\bm{k}_2,\ldots,\bm{k}_n$ of knife positions as follows.
For each $i \in \SET{1,\ldots,n-1}$, the position $\bm{k}_{i+1}$ is obtained from $\bm{k}_i$ by shifting the \Kth{i} knife by a half unit to the right, while leaving all other knives unchanged:
\[
k_{i+1}^j =
\begin{cases}
  k_i^j + \tfrac12, & \text{if } j=i,\\[2pt]
  k_i^j, & \text{otherwise.}
\end{cases}
\]

We illustrate this construction by returning to the example introduced above with $n=4$ agents and $m=5$ items, and taking $\bm{k}_1=(1,\,2.5,\,4)$.
The resulting family of knife positions and the induced partial partitions are shown in Table~\ref{tab:knife-family-example}.

\begin{table}[H]
\centering
\renewcommand{\arraystretch}{1.2}
\begin{tabular}{@{}l@{\quad}l@{\quad}l@{}}
\toprule
\textbf{Knife position} & \textbf{Visual representation of items} & \textbf{Bundles $P(\bm{k}_i)$} \\
\midrule
$\bm{k}_1 = (1,\,2.5,\,4)$ &
$\boxed{1}\;2\;|\;3\;\boxed{4}\;5$ &
$\varnothing,\;\{2\},\;\{3\},\;\{5\}$ \\[4pt]

$\bm{k}_2 = (1.5,\,2.5,\,4)$ &
$1\;|\;2\;|\;3\;\boxed{4}\;5$ &
$\{1\},\;\{2\},\;\{3\},\;\{5\}$ \\[4pt]

$\bm{k}_3 = (1.5,\,3,\,4)$ &
$1\;|\;2\;\boxed{3}\;\boxed{4}\;5$ &
$\{1\},\;\{2\},\;\varnothing,\;\{5\}$ \\[4pt]

$\bm{k}_4 = (1.5,\,3,\,4.5)$ &
$1\;|\;2\;\boxed{3}\;4\;|\;5$ &
$\{1\},\;\{2\},\;\{4\},\;\{5\}$ \\
\bottomrule
\end{tabular}
\caption{
The family of knife positions generated from $\bm{k}_1=(1,\,2.5,\,4)$ in the example introduced above.
The visual representation uses the same notation as in Table~\ref{tab:knife-example}.
}
\label{tab:knife-family-example}
\end{table}

We now state the key guarantee for the family of knife positions that is implicit in the existence proof of \citet{Bilo2022Connect} and will be crucial for our construction.

\begin{lemma}[Derived from the proof of {\citet[Theorem~5.4]{Bilo2022Connect}}]
\label{lem:knife-preference}
For every instance of pure goods with monotone valuations and any fixed ordering of the items,
there exists a starting knife position $\vc{k}_1$ such that, for the induced partial partitions $P(\vc{k}_1),\dots,P(\vc{k}_n)$, one can assign to each agent $i$ a distinct partial partition $\varphi(i)$ and a distinct bundle $\sigma(i)$ within the partial partition $\varphi(i)$ so that $i$ prefers the assigned bundle to all other bundles in the assigned partial allocation. 

Formally, there exist bijections $\varphi,\sigma: [n] \to [n]$ such that, for every $i \in [n]$,
\[
  v_i\!\left(P\!\left(\vc{k}_{\varphi(i)}\right)_{\sigma(i)}\right)
  \; \ge \;
  v_i\!\left(P\!\left(\vc{k}_{\varphi(i)}\right)_{j}\right)
  \quad\text{for all } j \in [n].
\]
\end{lemma}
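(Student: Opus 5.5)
The plan is to re-run the Sperner's-lemma argument of \citet[Theorem~5.4]{Bilo2022Connect} on the discrete space of knife positions and read off the lemma from the fully labeled cell it produces. First I will parametrize the space of (monotone) knife positions $\tfrac12 \le k^1 \le k^2 \le \dots \le k^{n-1} \le m+\tfrac12$ by the $n$ ``lengths'' $x_j \ge 0$ of the consecutive pieces, with $x_1+\dots+x_n$ constant. This identifies the space with a scaled $(n-1)$-simplex $\Delta$, whose vertex $e_j$ corresponds to ``piece $j$ is everything''. I will then choose a triangulation of $\Delta$ with step size $\tfrac12$ whose elementary cells are exactly of the form $\{\vc{k}_1,\dots,\vc{k}_n\}$ from the construction preceding the lemma: consecutive vertices differ by moving knife $i$ right by $\tfrac12$, i.e., transferring half a unit of length from piece $i+1$ to piece $i$. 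In $x$-coordinates this is a Kuhn/Freudenthal-type triangulation in the directions $e_i - e_{i+1}$, and all its cells are translates of one fixed cell with a fixed order of increments.

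Next I will define an owner-labeling. Each grid vertex is owned by exactly one agent, for instance via $\mathrm{owner}(\vc{k}) = \bigl(\sum_j 2k^j\bigr) \bmod n$ up to an offset. Each step $\vc{k}_i \to \vc{k}_{i+1}$ increases $\sum_j 2k^j$ by exactly one, so the $n$ vertices of every cell have pairwise distinct owners. The owner $i$ of a vertex $\vc{k}$ labels it with an index $\sigma \in [n]$ of a bundle of $P(\vc{k})$ that maximizes $v_i$. Ties are broken in favor of a \emph{nonempty} bundle, which is always possible when some item is assigned, since by monotonicity over goods a nonempty bundle is worth at least $v_i(\emptyset)=0$.

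The boundary condition for Sperner's lemma then needs checking. On the face of $\Delta$ where some set of pieces has length zero (or consists only of boundary items and is thus empty), the label must avoid those pieces, and the tie-breaking rule guarantees this. The degenerate situation where all bundles are empty cannot occur on the relevant faces once $m \ge 1$ and the grid is set up with endpoints $\tfrac12$ and $m+\tfrac12$. Applying Sperner's lemma yields a fully labeled cell $\{\vc{k}_1,\dots,\vc{k}_n\}$. I then set $\varphi(i)$ to be the index of the vertex owned by agent $i$ (a bijection, because owners within a cell are distinct) and $\sigma(i)$ to be the label at that vertex (a bijection, because the cell is fully labeled). By the definition of the labels, $v_i(P(\vc{k}_{\varphi(i)})_{\sigma(i)}) \ge v_i(P(\vc{k}_{\varphi(i)})_j)$ for all $j$, which is exactly the claimed inequality.

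I expect the main obstacle to be the triangulation step. I have to verify that the cells all have the \emph{specific} increment order (knife $1$ first, then knife $2$, and so on), rather than an arbitrary permutation as in the standard Kuhn triangulation. I also have to check that these cells really tile $\Delta$ with consistent boundary faces, while respecting the monotonicity constraints $k^j \le k^{j+1}$ (two knives may coincide on an item). My first attempt will be to work in the $x$-coordinates and use the Freudenthal triangulation there, whose cells come in all orders. If a single fixed order cannot be obtained, I will fall back to relabeling the knives or pieces within each cell, noting that the later use of the lemma only needs \emph{some} chain of single half-step moves. Alternatively, I will follow \citet{Bilo2022Connect} verbatim, whose lattice is built precisely so that this chain structure holds.
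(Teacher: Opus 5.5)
Your route, Sperner's lemma on the half-integer knife grid with an owner labeling, is the one the paper intends; it gives no argument of its own and defers to \citet{Bilo2022Connect}. But the step you flag as the main obstacle is fatal, not technical.

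The triangulation you want does not exist for $n\ge 3$. In coordinates $y^j=2k^j-1$, every family $\vc{k}_1,\dots,\vc{k}_n$ is $b+\mathrm{conv}\{0,e_1,e_1+e_2,\dots,e_1+\dots+e_{n-1}\}$, i.e., a translate of one fixed simplex. Two translates of a $d$-simplex with $d\ge 2$ can never share a facet from opposite sides, because no two facets of a simplex are parallel. So such cells cannot tile the order simplex. Freudenthal/Kuhn cells come in all admissible increment orders, and Sperner's lemma gives no control over which order the fully labeled cell has.

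Neither fallback repairs this. Renaming knives or pieces does not change which geometric knife moves first. It is also not true that the later use only needs some chain of half-steps: the proof of Lemma~\ref{lem:designated-bundle} uses precisely that knife $\sigma(i)-1$ moves before knife $\sigma(i)$. With the opposite order a designated bundle can be $\bp[\sigma(i)]\cup\{r_{\mathrm L},r_{\mathrm R}\}$, which only yields the EF2 conclusion of \citet{Bilo2022Connect}.

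In fact no argument can deliver the fixed order, because the statement is false for every $n\ge 3$. Take $m\ge n$ goods and identical additive valuations with all item values positive and $v(\{1\})>v(M\setminus\{1\})$ (e.g.\ $n=m=3$ with values $4,2,1$).
\begin{itemize}
\item Piece $1$ depends only on $k^1$, which moves only in the first step, so $P(\vc{k}_2)_1=\dots=P(\vc{k}_n)_1\supseteq P(\vc{k}_1)_1$.
\item At most $n-1<m$ items are covered, so every partition has a nonempty, positively valued piece.
\item Some agent $i$ has $\sigma(i)=1$. Then $P(\vc{k}_{\varphi(i)})_1$ is nonempty, so it contains item $1$.
\item Hence every $P(\vc{k}_t)_1$ with $t\ge 2$ contains item $1$, and is the unique favorite of every agent in each of the $n-1\ge 2$ partitions $P(\vc{k}_2),\dots,P(\vc{k}_n)$. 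This contradicts injectivity of $\sigma$.
\end{itemize}
In the $n=m=3$ instance, the opposite-order cell $\vc{k}=(1,2)\to(1,\tfrac52)\to(\tfrac32,\tfrac52)$ has partitions $(\emptyset,\emptyset,\{3\})$, $(\emptyset,\{2\},\{3\})$, $(\{1\},\{2\},\{3\})$ with unique favorites $3,2,1$. It is fully labeled, so Sperner hands you exactly the ``wrong'' kind of cell.

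A Kuhn-type Sperner argument, after tidying the tie-breaking, only yields a fully labeled cell with some admissible increment order. That is too weak for Lemma~\ref{lem:designated-bundle}. The almost-contiguous EF1 existence result should instead be obtained from \citet[Theorem~3.1]{Igarashi2023Discrete}.
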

We refer to $P\!\left(\vc{k}_{\varphi(i)}\right)_{\sigma(i)}$ as \emph{agent~$i$’ s designated bundle}.
The family of designated bundles 
\[ \Set{P\!\left(\bm{k}_{\varphi(i)}\right)_{\sigma(i)}}{i \in [n]}
\] forms a (partial) partition of~$M$; that is, the bundles are pairwise disjoint, but do not necessarily cover all of~$M$.

\subsubsection{Basic partition and boundary items.}
Based on these $n$ knife positions, \citet{Bilo2022Connect} define the \emph{basic bundle} of agent $i$ as the set of items which $i$ receives under \emph{all} of the partial allocations, i.e.,
\[
\bp[\sigma(i)] \;= \; \bigcap_{j=1}^n P(\vc{k}_j)_{\sigma(i)}.
\]
The resulting partial partition $\bp = (\bp[\sigma(1)], \ldots, \bp[\sigma(n)])$ is then referred to as the \emph{basic partition}.
The remaining items are referred to as \emph{boundary items} because they must lie on the boundary between bundles:
\[
R \;:=\; M \setminus \bigcup_{i=1}^n \bp[\sigma(i)].
\]
Equivalently, $R$ can be characterized directly in terms of the initial knife position~$\bm{k}_1$ as
\[
R
\;=\;
\Set{
  k_1^j}{k_1^j \in \mathbb{Z},\; k_1^j \le m}
\;\cup\;
\Set{
  k_1^j+\tfrac{1}{2}}{k_1^j \in \tfrac{1}{2}+\mathbb{Z},\; k_1^j+\tfrac{1}{2} \le m}.
\]

As an immediate consequence of this alternate characterization, each knife contributes at most one boundary item:
it contributes the item it covers if its coordinate is integral,
and the item immediately to its right if its coordinate is a half-integer,
since this item will eventually be covered in $\vc{k}_n$ by the movement rule of the knives.
Hence,
\[
|R| \;\le\; n-1.
\]
Moreover, as observed by~\citet{Bilo2022Connect},
each boundary item can appear in the bundles of at most one side (either in the bundle immediately to its left or in the bundle immediately to its right) throughout all knife positions $\vc{k}_1,\dots,\vc{k}_n$.
In particular, no boundary item ever switches sides across different knife configurations.

\medskip
\noindent
\emph{Example (continued).} Continuing the example with $n=4$ agents, $m=5$ items, and the knife-position family generated from $\bm{k}_1 = (1,\,2.5,\,4)$ as shown above, the resulting basic partition and boundary items are summarized in Table~\ref{tab:knife-basic-example}.

\begin{table}[H]
\centering
\renewcommand{\arraystretch}{1.2}
\begin{tabular}{@{}l@{\quad}l@{\quad}l@{}}
\toprule
\textbf{Knife position} & \textbf{Visual representation of items} & \textbf{Bundles $P(\bm{k}_i)$} \\
\midrule
\textbf{Basic partition $\bp$} &
--- &
$\varnothing,\;\{2\},\;\varnothing,\;\{5\}$ \\[4pt]

\textbf{Boundary items $R$ (boxed)} &
$\boxed{1}\;2\;\boxed{3}\;\boxed{4}\;5$ &
--- \\
\bottomrule
\end{tabular}
\caption{
Basic partition and boundary items corresponding to the knife-position family
illustrated in Table~\ref{tab:knife-family-example}.
}
\label{tab:knife-basic-example}
\end{table}

We now relate the designated bundles identified in Lemma~\ref{lem:knife-preference}
to the basic partition and the boundary items defined above.

\begin{lemma}\label{lem:designated-bundle}
For every agent $i \in [n]$, let $P(\vc{k}_{\varphi(i)})_{\sigma(i)}$ denote $i$'s designated bundle,
and let $\bp$ and $R$ be defined as above.
Then each designated bundle satisfies one of the following two conditions:
\begin{enumerate}
  \item $P(\vc{k}_{\varphi(i)})_{\sigma(i)} = \bp[\sigma(i)]$, or
  \item $P(\vc{k}_{\varphi(i)})_{\sigma(i)} = \bp[\sigma(i)] \cup \{r_i\}$ for some boundary item $r_i \in R$.
\end{enumerate}
\end{lemma}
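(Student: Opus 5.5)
Since $P(\vc{k}_{\varphi(i)})_{\sigma(i)}$ is a particular $P(\vc{k}_\ell)_s$ with $s=\sigma(i)$, it suffices to prove the purely combinatorial statement: for every $\ell\in[n]$ and every bundle index $s\in[n]$, the set $P(\vc{k}_\ell)_s$ equals $\bp[s]$ or $\bp[s]\cup\{r\}$ for a single $r\in R$. Valuations play no role; only the knife-movement rule matters.

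\textbf{Step 1: how each bundle changes across the family.} Bundle $s$ is the discrete interval strictly between knife $s-1$ and knife $s$ (knife $0$ at $\tfrac12$, knife $n$ at $m+\tfrac12$). Going from $\vc{k}_1$ to $\vc{k}_n$, only knife $s-1$ (at step $s-1$) and knife $s$ (at step $s$) affect bundle $s$, and each moves exactly once by $+\tfrac12$. Hence $P(\vc{k}_\ell)_s$ takes at most three values, for $\ell\le s-1$, $\ell=s$, and $\ell\ge s+1$.

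\textbf{Step 2: effect of one half-step.} Moving the left knife $s-1$ from $k$ to $k+\tfrac12$ raises the left endpoint $\lceil k+\tfrac12\rceil$ by one exactly when $k$ is a half-integer (it starts covering the next item, which leaves the bundle) and leaves it unchanged when $k$ is an integer. So the left side can only lose one item, namely the boundary item $k_1^{s-1}+\tfrac12$. Symmetrically, moving the right knife $s$ from $k$ to $k+\tfrac12$ changes the right endpoint $\lfloor k-\tfrac12\rfloor$ only when $k$ is an integer: the covered item $k_1^s$ is uncovered and joins bundle $s$. So the right side can only gain the boundary item $k_1^s$. This matches the ``no side switching'' observation: the item near knife $j$ can belong only to bundle $j+1$ (half-integer start) or only to bundle $j$ (integer start).

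\textbf{Step 3: conclude.} Let $I=P(\vc{k}_1)_s$. Let $a$ be the lost left item (if any) and $b$ the gained right item (if any). The three stages of bundle $s$ are $I$, $I\setminus\{a\}$, and $(I\setminus\{a\})\cup\{b\}$. Their intersection, which is $\bp[s]$ by definition, is $I\setminus\{a\}$, because $b\notin I$. Every stage is therefore $\bp[s]$ or $\bp[s]$ plus one item from $\{a,b\}\subseteq R$, using the explicit characterization of $R$.

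\textbf{Main obstacle.} The hard part is careful bookkeeping of the floor and ceiling cases. This includes edge cases: empty intervals when knives coincide or cover adjacent items, bundles $1$ and $n$ with a fixed outer boundary, and knives at $m+\tfrac12$ whose would-be item exceeds $m$. In empty-interval cases one must check that the formula $I\setminus\{a\}$ still holds, i.e.\ the removed or added item is still a single element of $R$ and is never counted twice. I would handle this by showing that the left endpoint is nondecreasing and the right endpoint nondecreasing, each by at most one item.
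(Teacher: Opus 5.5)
Your proof is correct and uses the same key fact as the paper: knife $s-1$ moves strictly before knife $s$, so no configuration still holds the left boundary item after the right one has been released. The paper argues this by contradiction; you argue directly through the three stages, identifying $\bp[s]$ with the middle bundle $P(\vc{k}_s)_s$, which also makes explicit the paper's ``by construction'' claim that only adjacent boundary items can be added. When you write Step 3, use the endpoint form $[L,R]\to[L',R]\to[L',R']$ as you propose, because the set formula $(I\setminus\{a\})\cup\{b\}$ is false when $a=b$: with knife $s-1$ at $d-\tfrac12$ and knife $s$ at $d$, all three stages are empty, yet the formula gives $\{d\}$.
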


\begin{proof}
\newcommand{\iteml}{r_{\mathrm{L}}}
\newcommand{\itemr}{r_{\mathrm{R}}}
\newcommand{\knifel}{\kappa_{\mathrm{L}}}
\newcommand{\knifer}{\kappa_{\mathrm{R}}}

By construction, each designated bundle $P(\vc{k}_{\varphi(i)})_{\sigma(i)}$
can contain at most the items of its corresponding basic bundle
$\bp[\sigma(i)]$ together with
the boundary items that lie immediately adjacent to it in the item ordering.
That is, it may include at most one boundary item to the left and one to the right of $\bp[\sigma(i)]$.

If $\sigma(i)=1$ or $\sigma(i)=n$, then the designated bundle has only one adjacent side and therefore can contain at most one boundary item.
It thus suffices to show that no \emph{interior} designated bundle, that is, one with $\sigma(i)\in\{2,\ldots,n-1\}$, can include boundary items from both sides.
Assume for contradiction that there exists an agent $i$ with
\[
P(\vc{k}_{\varphi(i)})_{\sigma(i)}
\;=\;
\SET{\iteml, \itemr} \cup \bp[\sigma(i)],
\]
where $\iteml$ and $\itemr$ are, respectively, the boundary items immediately to the left and to the right of $\bp[\sigma(i)]$.
Let $\knifel :=\sigma(i)-1$ be the (index of the) knife separating bundles $\sigma(i)-1$ and $\sigma(i)$, and let $\knifer := \sigma(i)$ be the (index of the) knife separating bundles $\sigma(i)$ and $\sigma(i)+1$.

By the construction, knife coordinates lie on the half-integer grid and the knives move, in order of their indices, by a half unit per step.
Because we assumed that $\SET{\iteml, \itemr} \subseteq P(\vc{k}_{\varphi(i)})_{\sigma(i)}$, at the initial configuration $\vc{k}_1$, we must have
\[
k_1^{\knifel} \;=\; \iteml - \tfrac{1}{2}
\qquad\text{and}\qquad
k_1^{\knifer} \;=\; \itemr;
\]
that is, the knife $\knifel$ sits between $\iteml-1$ and $\iteml$, while $\knifer$ covers $\itemr$.

For $\iteml$ to belong to $P(\vc{k}_{\varphi(i)})_{\sigma(i)}$ at configuration $\vc{k}_{\varphi(i)}$, the left knife $\knifel$ must still be strictly to the left of $\iteml$ (i.e., at $\iteml-\tfrac{1}{2}$); that is, it must not have moved from its initial position.
For $\itemr$ to belong to the same bundle at $\vc{k}_{\varphi(i)}$, the right knife $\knifer$ must already have advanced by a half step from $\itemr$ to $\itemr+\tfrac{1}{2}$, thereby releasing $\itemr$ into the bundle.
This contradicts the order in which the knives move, showing that a designated bundle cannot simultaneously contain boundary items from both sides.

Therefore, every designated bundle $P(\vc{k}_{\varphi(i)})_{\sigma(i)}$ contains at most one boundary item.
\end{proof}

\subsubsection{Proof of Theorem~\ref{thm:exist-almost-contiguous}}

We are now ready to prove Theorem~\ref{thm:exist-almost-contiguous}.

\begin{proof}[Proof of Theorem~\ref{thm:exist-almost-contiguous}]
By Lemma~\ref{lem:designated-bundle}, each agent’s designated bundle consists of their basic bundle together with at most one boundary item.
Some designated bundles already contain a boundary item, while others consist solely of a basic bundle.

Recall that there are at most $n-1$ boundary items in total.
Each boundary item that already appears in a designated bundle is associated with a unique agent.
Consequently, after accounting for these agents,
the number of remaining boundary items is strictly smaller than the number of agents whose designated bundles contain no boundary item.
We can therefore assign each remaining boundary item to a distinct agent whose designated bundle is a basic bundle.
This yields an almost-contiguous allocation (Definition~\ref{def:almost-contiguous}) $A = (A_1,\ldots,A_n)$ of the item set~$M$, in which each agent receives their designated bundle and possibly one additional item from $R$.

We now verify that this allocations is EF1.
By Lemma~\ref{lem:knife-preference}, for each agent~$i$,
\[
  v_i\!\left(P\!\left(\vc{k}_{\varphi(i)}\right)_{\sigma(i)}\right)
  \;\ge\;
  v_i\!\left(P\!\left(\vc{k}_{\varphi(i)}\right)_{\sigma(j)}\right)
  \quad\text{for all } j \in [n].
\]
Since every basic bundle $\bp[\sigma(j)]$ is contained in $P(\vc{k}_t)_{\sigma(i)}$ for all knife positions~$\vc{k}_t$, it follows that
\[
  v_i\!\left(P\!\left(\vc{k}_{\varphi(i)}\right)_{\sigma(i)}\right)
  \;\ge\;
  v_i\!\left(\bp[\sigma(j)]\right)
  \quad\text{for all } i,j \in [n].
\]
Thus, each agent weakly prefers their designated bundle
to every basic bundle.

Consider any two agents $i$ and $j$.
If agent~$j$ receives only a basic bundle, then agent~$i$ does not envy~$j$.
Otherwise, agent~$j$ receives a bundle of the form $\bp[\sigma(j)] \cup \{r\}$ for some boundary item~$r$.
Removing this single item yields the basic bundle $\bp[\sigma(j)]$, which agent~$i$ does not envy by the above inequality.
Hence, agent~$i$ is EF1 with respect to~$j$.

Combining the construction of the allocation with the EF1 analysis above, we conclude that the allocation $A$ is an almost-contiguous EF1 allocation.
This completes the proof.
\end{proof}

\subsection{A Polynomial-Interaction Algorithm via Almost-Contiguous Allocations}
\label{subsec:polyquery-almost-contiguous}

We now return to the ellipsoid-based elimination framework developed in Section~\ref{subsec:ellipsoid-framework}, and show how it can be instantiated to obtain a polynomial-interaction algorithm for EF1 under general monotone valuations.
The key new ingredient is to restrict the allocation subroutine to the class of \emph{almost-contiguous EF1 allocations} introduced in Section~\ref{subsec:almost-contiguous},
whose existence is guaranteed by Theorem~\ref{thm:exist-almost-contiguous}.
At a high level, the algorithm proceeds exactly as Algorithm~\ref{alg:ellipsoid-ef1}.
The only differences are the following:
\begin{itemize}
\item In step~(a), the ellipsoid method maintains a \pval profile $\vval$, as before.
Given $\vval$, the allocation subroutine of \textsc{DoubleRoundRobin} is replaced with the allocation routine described in ~\Cref{subsec:almost-contiguous}, which returns an EF1 allocation that is almost-contiguous.

\item Since we now focus on the pure goods setting with monotone valuations, the initialization of the ellipsoid is modified accordingly: the domain of all valuations is now $[0,1]$.

Moreover, while monotone valuations are fully characterized only by the agents' valuations for every subset of items, we exploit the fact that the allocation subroutine only ever produces almost-contiguous allocations.
Consequently, it suffices to maintain value variables only for almost-contiguous bundles.
Since there are (at most) $m \cdot (m+1)m/2$ such almost-contiguous bundles\footnote{By instead invoking Igarashi's theorem~\citep[Theorem~3.1]{Igarashi2023Discrete}, the allocation subroutine may be restricted to contiguous EF1 allocations; 
in that case, it suffices to maintain only \(O(m^2)\) interval values per agent.} (i.e., $m$ choices for the extra item and $(m+1)m/2$ contiguous intervals defined by their two endpoints), the initial ellipsoid is taken to be the hypercube $[0,1]^{m \cdot (m+1)m/2}$.
\end{itemize}

\begin{theorem}[Ellipsoid-Based Polynomial-Interaction EF1 for Monotone Goods]
\label{thm:ellipsoid-polyquery-monotone}
Under monotone valuations in the pure goods setting,
Algorithm~\ref{alg:ellipsoid-ef1},
instantiated with the almost-contiguous EF1 allocation subroutine described above,
terminates after at most \(O(n m^6)\) single-witness EF1 interactions and outputs an EF1 allocation.\footnote{Using Igarashi's theorem~\citep[Theorem~3.1]{Igarashi2023Discrete}, the analogous counting argument yields an \(O(nm^4)\) interaction bound when the allocation subroutine is restricted to contiguous allocations.}
\end{theorem}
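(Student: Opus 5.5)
We follow the template of the proof of Theorem~\ref{thm:ef1-ellipsoid-poly}. There are three ingredients: correctness of the cuts, a termination argument, and a count of interactions.

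\textbf{Encoding and cuts.} Fix the item ordering, and let $\mathcal{B}$ be the family of almost-contiguous bundles. By the counting above, $|\mathcal{B}| \le m\cdot m(m+1)/2 = O(m^3)$. For each agent $i$ we keep one variable $\widehat v_i(B)$ for each $B\in\mathcal{B}$. We also add one variable for every bundle of the form $B\setminus\{x\}$ with $B\in\mathcal{B}$ and $x\in B$. Removing one item from an interval plus an extra item gives either an interval plus an extra item, or two intervals plus an extra item. The latter type adds only polynomially many variables, so we simply include them; alternatively we close $\mathcal{B}$ under single-item removal, which keeps its size $\poly(m)$. The region $\mathcal{R}_i^{(0)}$ is the cube $[0,1]^{d}$, where $d=O(m^3)$ is the number of variables per agent.

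In each round, the subroutine of Theorem~\ref{thm:exist-almost-contiguous} is applied to the monotone valuations induced by the current center $\vval[][(t)]$. These are monotone because we may also add the linear monotonicity constraints $\widehat v_i(S)\le \widehat v_i(T)$ for $S\subseteq T$ among the tracked bundles. The subroutine returns an almost-contiguous allocation $A^{(t)}$ that is EF1 under $\vval[][(t)]$.

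Now suppose a witness $(i,j)$ is returned. Then $\vtrue[i]$ satisfies every inequality of $\mathcal{C}_i^{(t)}$, namely $v_i(A_i)\le v_i(A_j)$ and $v_i(A_i)\le v_i(A_j\setminus\{y\})$; in the goods case these are the relevant ones. Each such inequality is linear in the tracked variables. At least one of them is violated by $\vval[i][(t)]$, because $A^{(t)}$ is EF1 for $i$ under $\vval[i][(t)]$. So, exactly as in Section~\ref{sec:correctness}, we obtain a valid separating hyperplane that keeps the restriction of $\vtrue[i]$ inside $\mathcal{R}_i^{(t)}$.

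\textbf{Termination and correctness.} The restriction of $\vtrue[i]$ to the tracked coordinates always stays feasible. The ellipsoid method therefore either keeps cutting, or the allocation it proposes is certified EF1. Certification is the only way to stop, so any output is EF1 with respect to the true valuations.

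\textbf{Counting.} This is the main obstacle. The bound $O(nm^6)$ has to come from the standard bound of $O(d^2\log(\cdot))$ central-cut ellipsoid iterations in dimension $d$, with $d=O(m^3)$ per agent. Since each non-terminating round updates exactly one agent, the total is $n\cdot O(d^2)=O(nm^6)$.

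The delicate point is justifying that only $O(d^2)$ cuts per agent occur. The usual argument needs a lower bound on the volume of the surviving feasible set, which is a problem because the feasible set could be lower-dimensional. My approach uses the fact that every constraint has coefficients in $\{-1,0,1\}$ and is non-strict. Given this, I would relax each constraint by a margin of $2^{-\poly}$: the subroutine's EF1 guarantee can be taken with a strict-margin perturbation, or we argue that the true point has a full-dimensional neighborhood satisfying all non-strict cuts up to bit precision. With that margin, the $\log(\text{volume ratio})$ factor is $O(d\cdot\log m)$ and is absorbed into the stated bound. If this margin argument fails to give exactly $m^6$, I would fall back on stating the bound as $\poly(n,m)$ via the weakly polynomial ellipsoid bound, as in the proof of Theorem~\ref{thm:ef1-ellipsoid-poly}.

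We make no claim about computation time, because the subroutine of Theorem~\ref{thm:exist-almost-contiguous} is non-constructive.
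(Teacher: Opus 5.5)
Your cut-validity and correctness reasoning is essentially fine. The gap is the counting step, which does not establish $O(nm^6)$.

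\textbf{Problems with the volume count.} First, your dimension is off. Once you add the bundles $B\setminus\{x\}$, those of the form $[a,y-1]\cup[y+1,b]\cup\{x\}$ are indexed by four positions. So $d=\Theta(m^4)$, and $n\cdot d^2$ is $nm^8$; the match between $d^2$ and $m^6$ is a coincidence.

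Second, the ellipsoid count is $O\bigl(d\cdot\ln(\mathrm{vol}(E_0)/\mathrm{vol}(\mathcal{R}_i))\bigr)$, and your margin argument does not control the logarithm. In this theorem the true valuations are arbitrary reals with no bit-length bound, so there is no $2^{-\poly}$ slack to exploit. Even if there were, a $2^{-\poly(m)}$ margin would put $d\cdot\poly(m)$ inside the count, not $d\log m$.

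There is a correct, precision-free version of your idea, but it comes from a different observation. Every recorded constraint, including monotonicity, compares two coordinates. These comparisons are acyclic, because $v_i^{\mathrm{true}}$ satisfies the witness ones strictly. Hence $\mathcal{R}_i$ is the order polytope of a poset on the coordinates, which has volume at least $1/d!$. This does justify a log-volume ratio of $O(d\log d)$, but the resulting count is $O(d^2\log d)$ per agent, which is not absorbed into $d^2$. Your fallback to a weakly polynomial $\poly(m,L)$ bound is a strictly weaker statement than the theorem.

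\textbf{The missing idea.} Count witnessed configurations instead of volume, as the paper does. A non-terminating round yields a triple $(i,A_i^{(t)},A_j^{(t)})$ consisting of an agent and two almost-contiguous bundles. The strict versions of the constraints recorded for it say precisely that $i$, holding $A_i^{(t)}$, EF1-envies whoever holds $A_j^{(t)}$.

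So if every later putative valuation satisfies these constraints strictly, the subroutine (which outputs an allocation that is EF1 under $\widehat v$) can never again produce that configuration. No triple is witnessed twice, and there are only $n\cdot O(m^3)\cdot O(m^3)=O(nm^6)$ triples. The exponent $6$ is the square of the number of almost-contiguous bundles. The bound is independent of $d$, so your extra removal variables are harmless here, and independent of numerical precision.

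When you use this argument, make explicit the condition it relies on. The putative valuation must satisfy all previously recorded constraints, including the monotonicity constraints needed to invoke Theorem~\ref{thm:exist-almost-contiguous}. It must also satisfy the witness inequalities \emph{strictly}, since a tie would let the same witness recur. The raw center of the current ellipsoid need not lie in $\mathcal{R}_i$ at all.

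Instead, take any point that is monotone on the tracked bundles and strictly satisfies all recorded witness inequalities. The restriction of $v_i^{\mathrm{true}}$ shows such a point exists. Finding it spends no interactions, and running time is not being bounded here.
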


\begin{proof}
Fix an arbitrary ordering of the items as $1,2,\dots,m$ along the line.
By Theorem~\ref{thm:exist-almost-contiguous}, there exists an almost-contiguous EF1 allocation under any valuation profile.

We use a counting argument, showing that each round in which the single-witness EF1 feedback returns a violation rules out a distinct local ``sub-configuration'' of an almost-contiguous allocation.
Let $A$ be the almost-contiguous allocation proposed in some round, and suppose that the feedback revealed an ordered violating pair $(i,j)$ such that \(A_i \prec^{\mathrm{EF1}}_i A_j\).
Write
\[
A_i = B_i \cup R_i
\qquad\text{and}\qquad
A_j = B_j \cup R_j,
\]
where $B_i,B_j$ are the (possibly empty) basic contiguous bundles and $R_i, R_j$ are the (possibly empty) boundary items (singletons).
The witness feedback certifies that, under the true valuations, agent $i$ does not EF1-prefer their own bundle to $j$'s bundle.
Equivalently, the feedback implies the following structural constraint: whenever agent~$i$ receives exactly the bundle $B_i \cup R_i$, no other agent can receive the bundle $B_j \cup R_j$ in any almost-contiguous allocation that is EF1 under valuations consistent with all feedback observed so far.
This information is stored as an ordered local \emph{forbidden tuple}
\[
(i,\, B_i,\, B_j,\, R_i,\, R_j)
\]
that cannot occur in any almost-contiguous allocation that is EF1 under the true valuations.

We therefore associate each violation with a forbidden tuple of the above form and collect all such tuples in a set~$\mathcal{B}$, which we refer to as the \emph{blacklist}.
Because Algorithm~\ref{alg:ellipsoid-ef1} updates the feasible region using the separating constraint implied by the witness feedback, the ellipsoid process makes non-redundant progress: the same forbidden tuple cannot be witnessed again in a later round.
Consequently, the total number of proposed allocations is at most the number of distinct tuples of the form $(i,B_i,B_j,R_i,R_j)$.

It remains to bound this number.
There are at most \(O(m^2)\) choices for a contiguous basic bundle (identified by its two endpoints, allowing the empty bundle), and at most \(O(m)\) choices for a boundary item (including the empty choice).
Thus the number of distinct tuples is at most
\[
O\!\left(n \cdot m^2 \cdot m^2 \cdot m \cdot m\right)
\;=\; O(nm^6).
\]
Therefore, after at most \(O(nm^6)\) single-witness EF1 interactions, the algorithm must reach a round in which the proposed allocation is certified as EF1.\footnote{A somewhat weaker polynomial upper bound on the number of interactions can be obtained by simply noting that the total number of variables encoding the valuations is $O(nm^3)$. The properties of the ellipsoid method alluded to before imply termination in a polynomial number of iterations. We introduce the blacklist-based analysis in part because it serves as the point of departure for a different approach discussed in Appendix~\ref{sec:direct-almost-contiguous}.}
\end{proof}

Note that Theorem~\ref{thm:ellipsoid-polyquery-monotone} only bounds the number of proposed allocations which the algorithm makes before it finds an EF1 allocation with respect to the true valuations. It does not bound the running time of the algorithm's internal computation.

From a computational perspective, restricting attention to almost-contiguous allocations circumvents the primary obstacle in the monotone setting, namely the exponential-size valuation encoding.
In particular, when attention is restricted to almost-contiguous bundles, the dimension of the valuation representation becomes polynomial in~$m$.
Nevertheless, this restriction introduces a different computational challenge.
When we additionally require the allocation to be almost-contiguous,
computing an EF1 allocation becomes a nontrivial task.
Recall that the existence proof for Theorem~\ref{thm:exist-almost-contiguous} was non-constructive, and relied on Sperner's Lemma. 
The proof gives rise to a search procedure, by considering all starting knife positions $\vc{k}_1$ and their corresponding sequences of knife positions and allocations. 
However, this search would take exponential time. 
Whether almost-contiguous EF1 allocations can be found in polynomial time remains open,\footnote{The proof of \citet[Theorem~3.1]{Igarashi2023Discrete} is likewise non-constructive, and the corresponding computational question for contiguous allocations remains open.} paralleling an open question of \citet{Bilo2022Connect}: whether interval EF2 allocations can be found in polynomial time. 
Indeed, given the nature of the existence proof, it is quite conceivable that the search problem could be PPAD-complete.

Motivated by this observation, one could try to eliminate the ellipsoid method altogether and instead directly keep track of ``blacklisted'' almost-contiguous EF1 allocations, only proposing allocations which have not been blacklisted.
However, this shift comes at a different cost.
As we show in Appendix~\ref{sec:direct-almost-contiguous} using a reduction from \textsc{Independent Set}, by directly focusing on interval-based allocations, we replace a computational subproblem whose complexity is open with a problem that is NP-complete in general.\footnote{The reduction already establishes NP-completeness even for contiguous allocations, so the same computational obstacle remains when the algorithm is instantiated using Igarashi's theorem~\citep[Theorem~3.1]{Igarashi2023Discrete}.}

\section{The Membership EF1 Model}
\label{sec:weak}
In contrast to the Single-Witness Model, where the adversary provides witnesses of fairness violations,
the Membership Model reveals only a single bit of feedback indicating whether the proposed allocation satisfies the target fairness notion.
This minimal feedback severely limits the learner’s information gain per proposed allocation.

In this section, we establish that for EF1 even under the extremely restricted class of identical additive binary valuations,
any algorithm operating in the Membership Model must make exponentially many allocation proposals in the number of agents $n$.
We also present positive results demonstrating that when the number of agents \(n\) is fixed,
EF1 can still be achieved with a polynomial number of membership interactions.

\subsection{Exponential Interaction Lower Bound}
\label{subsec:weak-lower-bound}

Recall that in the additive setting, each valuation satisfies 
\(v_i(S) = \sum_{x \in S} v_i(x)\) for all \(S \subseteq M\).
In this subsection, we focus on the highly restricted class of
\emph{identical binary additive} valuations,
where \(v_i(x) = v(x) \in \SET{0,1}\) for every agent \(i\) and item \(x\).

We show that the Membership Model still requires exponentially many allocation proposals in $\min(m-n,n)$ in order to find an EF1 allocation.
In particular, when $m \ge 2n$, the required number of allocation proposals is exponential in $n$.
Our lower bound holds for any $m>n$.
(The case $m\le n$ is trivial,
as assigning each agent (at most) one distinct good yields an EF1 allocation.)

Our proof uses the probabilistic method. 
In the Membership Model, every proposed allocation receives
the same yes-or-no type of response, so a deterministic algorithm inevitably commits to a fixed sequence of proposed allocations until an EF1 allocation is found. 
If the algorithm tries fewer than exponentially many allocation proposals, we show that there exists a binary additive valuation for which none of these proposed allocations is EF1.
Hence, no deterministic procedure can succeed with fewer proposed allocations. 
By standard arguments, the same conclusion extends to randomized algorithms as well, implying that exponential interaction complexity is unavoidable in the Membership Model.

\begin{theorem}[Exponential interaction lower bound under i.i.d.~Bernoulli valuations]
\label{thm:weak-exp-iid}
Consider $n$ agents and $m>n$ goods under identical binary additive valuations.
Let $\Gamma$ be the distribution over valuations $v$ under which each item independently has value $1$ with probability $1/2$ (and $0$ otherwise).
Let $T={\frac{1}{\lfloor m/n\rfloor+1}} \left(\tfrac{4}{3}\right)^{\min(m-n,n)}$.
Then for any (possibly randomized) algorithm $\ALG$ that makes fewer than $T$ allocation proposals,
\[
\Prob[\val \sim \Gamma, \ALG]{\text{$\ALG$ finds an EF1 allocation under $\val$}} \;<\; 1.
\]
Consequently, to guarantee EF1 on \emph{all} valuations, any algorithm must make at least $T$ allocation proposals in the worst case.
\end{theorem}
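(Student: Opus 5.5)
The plan is a union bound over the proposals, made possible by the fact that membership feedback carries almost no information. I would first reduce to deterministic algorithms. A randomized $\ALG$ is a distribution over deterministic ones, and $\val\sim\Gamma$ is independent of $\ALG$'s coins. So it suffices to show that every deterministic algorithm making fewer than $T$ proposals fails with positive probability over $\val\sim\Gamma$, and then average over the coins.

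For a deterministic algorithm in the Membership EF1 Model, every response before success is the bit ``not EF1''. Hence the algorithm commits in advance to a fixed sequence $A^{(1)},\dots,A^{(T')}$ with $T'<T$, and it succeeds on $\val$ only if some $A^{(t)}$ is EF1 under $\val$. By a union bound, it therefore suffices to prove, for every fixed allocation $A$,
\[
\Prob[\val\sim\Gamma]{A \text{ is EF1 under } \val} \;\le\; \bigl(\lfloor m/n\rfloor+1\bigr)\left(\tfrac34\right)^{\min(m-n,n)} = 1/T .
\]
Then the success probability is at most $T'/T<1$.

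To bound this probability, write $s_j=|A_j|$ and let $c_j=v(A_j)\sim\mathrm{Bin}(s_j,1/2)$. The $c_j$ are independent because the bundles are disjoint. Under identical binary additive valuations, $i$ EF1-envies $j$ exactly when $c_j\ge c_i+2$. So $A$ is EF1 iff there is some $k$ with $c_j\in\{k,k+1\}$ for all $j$. Since $\min_j c_j\le m/n$, only $k\in\{0,\dots,\lfloor m/n\rfloor\}$ matter. I would then union-bound over $k$ and treat two cases.

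\emph{Case $k=0$.} Every bundle must contain at most one valuable item, so the probability is $\prod_j (s_j+1)2^{-s_j}$. One checks that $(s+1)2^{-s}\le(3/4)^{s-1}$ for all $s\ge1$, while empty bundles contribute $1$. Hence the product is at most $(3/4)^{m-\#\{j:s_j\ge1\}}\le(3/4)^{m-n}$.

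\emph{Case $k\ge1$.} Every bundle must be nonempty, and each factor $\Prob{\mathrm{Bin}(s_j,1/2)\in\{k,k+1\}}$ is at most $3/4$. For $s_j=1$ the factor is at most $1/2$. For $s_j\ge2$ the maximum mass of two consecutive values of $\mathrm{Bin}(s,1/2)$ is attained at $s=2,3$, where it equals $3/4$, and it is non-increasing in $s$; this is a short routine check via the binomial mode. Hence the probability is at most $(3/4)^n$.

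Summing the two cases gives $(3/4)^{m-n}+\lfloor m/n\rfloor(3/4)^n\le(\lfloor m/n\rfloor+1)(3/4)^{\min(m-n,n)}$, which is the required bound. The worst-case statement follows immediately: if every run of fewer than $T$ proposals fails with positive probability, then some valuation defeats the algorithm.

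The main obstacle is the per-allocation bound. A naive approach, counting bundles of size at least $2$, fails when one bundle absorbs almost all items. The split into $k=0$ and $k\ge1$ is what resolves this: for $k=0$ the penalty comes from the total number of items beyond one per bundle, giving $(3/4)^{m-n}$; for $k\ge1$ it comes from every bundle being forced nonempty, giving $(3/4)^n$. This split is exactly why the exponent is $\min(m-n,n)$.
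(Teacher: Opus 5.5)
Your proof is correct. Its outer skeleton matches the paper's:
\begin{itemize}
\item a deterministic algorithm commits to a fixed proposal sequence;
\item a union bound is taken over the proposals;
\item randomized algorithms are handled by averaging over coins, which is the easy direction of Yao;
\item the EF1 event is decomposed according to the lower value $k$, with independence across disjoint bundles.
\end{itemize}

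You differ from the paper in how you bound the probability that a single allocation is EF1. The paper first invokes Lemma~\ref{lem:balanced-allocation}. That lemma is an exchange argument: moving an item into an empty bundle, or from a bundle of size at least $3$ into an excess singleton, never decreases the EF1 probability. This lets the paper assume every bundle is nonempty and at least $\min(m-n,n)$ bundles have size at least $2$. It then applies Lemma~\ref{lem:binom-two-point} to those bundles, uniformly for every $k$.

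You instead split on $k$:
\begin{itemize}
\item For $k=0$, the exact factor satisfies $(s+1)2^{-s}\le (3/4)^{s-1}$. This charges the penalty to the excess items $m-\#\{j: s_j\ge 1\}\ge m-n$.
\item For $k\ge 1$, an empty bundle makes the event impossible. Every nonempty bundle, including singletons (which contribute at most $1/2$), contributes a factor of at most $3/4$.
\end{itemize}
This removes the need for the balanced-allocation lemma and its three-event case analysis. It also gives the slightly sharper bound $(3/4)^{m-n}+\lfloor m/n\rfloor (3/4)^n$. The paper's route, in exchange, yields a structural fact of some independent interest: allocations maximizing the EF1 probability can be taken to be balanced.

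Your union bound over $k$ is also the right tool. The events for consecutive values $k$ and $k+1$ overlap when every bundle has value exactly $k+1$. So the paper's assertion that these events are disjoint, and its equality sign, are not literally correct, although the final inequality is unaffected.
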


Much of the tedious technical work is encapsulated by the following lemma, which captures that without loss of generality, an allocation maximizing the probability of being EF1 contains no empty bundles, and as few singleton bundles as possible.
The proof of this lemma is given in Section~\ref{sec:proof-balanced-allocation}.

\begin{lemma}
\label{lem:balanced-allocation}
Let $A=(A_1, \ldots, A_n)$ be any allocation. There exists an allocation $A'=(A'_1, \ldots, A'_n)$ such that all $A'_i \neq \emptyset$, and $|\Set{i}{|A_i| = 1} | = \max(0,2n-m)$, and such that
\[ \
  \Prob[\val \sim \Gamma]{A' \text{ is EF1 }} \geq \Prob[\val \sim \Gamma]{A \text{ is EF1 }}.
\]
\end{lemma}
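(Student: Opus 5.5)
Throughout, write $s_k=|A_k|$ and $X_k=v(A_k)$. Since all agents share the same binary additive valuation $v$, agent $i$ EF1-envies $j$ exactly when $X_i<X_j-1$. If $X_j>X_i\ge 0$, bundle $A_j$ contains an item of value $1$, and removing it lowers $X_j$ by one. Hence $A$ is EF1 if and only if $\max_k X_k-\min_k X_k\le 1$. Under $\Gamma$ the $X_k\sim\mathrm{Bin}(s_k,1/2)$ are independent. So $\Prob[\val\sim\Gamma]{A\text{ is EF1}}$ depends only on the size vector $(s_1,\dots,s_n)$, and I will denote it $F(s_1,\dots,s_n)$.

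The plan is an exchange argument that repeatedly moves a single item between two bundles without decreasing $F$. I need two kinds of moves.

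\emph{Move 1 (fill empty bundles).} If some $s_a=0$, then because $m>n$ some $s_b\ge 2$; move one item from $b$ to $a$.

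\emph{Move 2 (remove excess singletons).} Once no bundle is empty, suppose the number of singletons exceeds $\max(0,2n-m)$. Counting items then shows some $s_b\ge 3$: if all sizes were at most $2$, the number of singletons would be exactly $2n-m$. Move one item from $b$ to a singleton $a$.

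Each move strictly decreases $\sum_k s_k^2$, so the process terminates. At termination there are no empty bundles and at most $\max(0,2n-m)$ singletons. Equality then follows from the same counting: with no empty bundles, if all sizes are at most $2$ there are exactly $2n-m$ singletons, and if some size is at least $3$ but singletons are still too many, Move 2 applies. If fewer singletons than $\max(0,2n-m)$ remain, I will instead stop the process earlier so that the count is exactly right.

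The core step is a two-bundle comparison. Condition on the values of all bundles other than $a$ and $b$; their min and max fix an admissible set of windows $\{L,L+1\}$. It suffices to show that for every such conditioning, the probability that both values fall into one admissible window does not decrease when the sizes $(s,t)$ with $t\ge s+2$ become $(s+1,t-1)$.

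To handle this, I would couple the two situations. Write $X_b=Y+Z$, where $Z$ is the value of the moved item, and compare the pair $(U,Y+Z)$ with $(U+Z,Y)$. When $Z=0$ the two outcomes coincide, so only the case $Z=1$ matters. There the question reduces to comparing the events that $(U,Y+1)$ versus $(U+1,Y)$ lies in a valid window, where $U\sim\mathrm{Bin}(s)$ and $Y\sim\mathrm{Bin}(t-1)$ with $t-1\ge s+1$.

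For the two specific cases I need, I expect direct computation to work.
\begin{itemize}
\item For $s=0$: $U=0$, and $Y$ has at least one item.
\item For $s=1$: $U\in\{0,1\}$, and $Y$ has at least two items.
\end{itemize}
The intuition is that making the values of the pair more concentrated, toward the common mean, helps them fit in a narrow window together with everyone else.

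The main obstacle is this comparison. The windows admissible after conditioning can sit anywhere relative to the means, so a naive termwise inequality may fail. If the conditional coupling breaks, I would fall back on inclusion--exclusion:
\[
F=\sum_L\prod_k q_{s_k}(L)-\sum_L\prod_k p_{s_k}(L),
\]
where $q_s(L)=\Prob{}{\mathrm{Bin}(s,1/2)\in\{L,L+1\}}$ and $p_s(L)=\Prob{}{\mathrm{Bin}(s,1/2)=L}$. I would then try to exploit log-concavity of binomial window probabilities in $s$, together with the symmetry $L\mapsto s-L$, to show that the pairwise products $q_sq_t$ summed over $L$, net of the $p_sp_t$ terms, increase under the balancing move.
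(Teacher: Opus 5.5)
Your reduction (EF1 iff $\max_k X_k-\min_k X_k\le 1$, so the EF1 probability depends only on the bundle sizes), your two moves, and the counting behind the singleton target match the paper's skeleton. (The clause about stopping early is unnecessary: once no bundle is empty, the number of singletons is automatically at least $2n-m$, so ``at most $\max(0,2n-m)$'' already forces equality.) The genuine gap is the lemma's actual content, namely that each move does not decrease the EF1 probability. You leave this at ``I expect direct computation to work.'' For Move~1 your coupling does finish the job, by a pure event inclusion. If $Z=1$ and the profile containing $(0,Y+1)$ has spread at most $1$, then $Y=0$ and all other values lie in $\{0,1\}$. Hence the profile containing $(1,0)$ also has spread at most $1$. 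For Move~2 no such inclusion holds. When $U=1$, $Z=1$, $Y=0$, the pair $(1,1)$ becomes $(2,0)$. The first pair is EF1-compatible whenever the other values all lie in $\{0,1\}$ or all lie in $\{1,2\}$; the second never is. That loss has to be paid for by outcomes such as $Y=2$, where $(1,3)$ becomes $(2,2)$. These are compatible with different configurations of the other agents, so the comparison is an inequality between binomial probabilities that must actually be established.

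The paper establishes it as follows. Since $a$ is a singleton, $A$ can be EF1 only if all values lie in $\{0,1\}$ or all lie in $\{1,2\}$. So only three disjoint events for the other agents matter: all equal to $1$; all in $\{0,1\}$ with at least one $0$; all in $\{1,2\}$ with at least one $2$. Conditioned on each, the paper writes the exact EF1 probability of $A$ and a lower bound for $A'$, using $X'_a\sim\mathrm{Bin}(2,\tfrac12)$ and $X'_b\sim\mathrm{Bin}(t-1,\tfrac12)$. After factoring out $2^{-t}$, the coefficient comparisons reduce to $t+1\le\tfrac32 t$ and $3+\binom{t}{2}\le 2t+3\binom{t-1}{2}$, both valid for $t\ge 3$.

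Your inclusion--exclusion fallback, as sketched, does not substitute for this computation. Both $s\mapsto q_s(L)=2^{-s}\binom{s+1}{L+1}$ and $s\mapsto p_s(L)=2^{-s}\binom{s}{L}$ are log-concave. So a balancing move increases each of the two sums in your formula for $F$, and log-concavity alone says nothing about their difference. This is exactly the obstruction you anticipated.
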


The following technical lemma will be used to upper-bound the probability that a single allocation is EF1 under identical binary valuations.
The proof of this lemma is given in Appendix~\ref{sec:binom-two-point}.

\begin{lemma}[Two-point anti-concentration bound for $\mathrm{Binomial}(s,\tfrac12)$]
\label{lem:binom-two-point}
Let $X\sim \mathrm{Binomial}(s,\tfrac12)$ with $s\ge 2$.
Then for every integer $k\in\SET{0,1,\ldots,s}$,
\[
\Prob{X\in\SET{k,k+1}} \;\le\; \tfrac34 .
\]
\end{lemma}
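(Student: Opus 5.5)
We need to prove that for $X\sim\mathrm{Binomial}(s,\tfrac12)$ with $s\ge 2$, $\Prob{X\in\SET{k,k+1}}\le \tfrac34$ for every $k$. The plan is to reduce the question to a single central case and then finish with an explicit computation plus a monotonicity argument in $s$.

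\textbf{Step 1: the central two-point window is the worst case.} The quantity to bound is $2^{-s}\bigl(\binom{s}{k}+\binom{s}{k+1}\bigr)=2^{-s}\binom{s+1}{k+1}$, by Pascal's rule. This collapses the two-point sum into a single binomial coefficient. By unimodality of binomial coefficients, it is maximized at $k+1=\lfloor (s+1)/2\rfloor$. So it suffices to show
\[
f(s) := 2^{-s}\binom{s+1}{\lfloor (s+1)/2\rfloor} \le \tfrac34 \qquad \text{for all } s\ge 2.
\]

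\textbf{Step 2: check the base cases.} For $s=2$ we get $\binom{3}{1}/4=3/4$, so the bound is tight. For $s=3$ we get $\binom{4}{2}/8=6/8=3/4$, again tight.

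\textbf{Step 3: show $f$ is non-increasing.} This is the main step. Write $c_N=\binom{N}{\lfloor N/2\rfloor}$ for the central coefficient, so $f(s)=2^{-s}c_{s+1}$. Two standard identities handle the two parities.
\begin{itemize}
\item For odd $N$, $c_{N+1}=2c_N$. This gives $f$ constant across the corresponding step.
\item For even $N=2r$, $c_{N+1}=\binom{2r+1}{r}=\tfrac{2r+1}{r+1}\binom{2r}{r}<2c_N$. This gives a strict decrease.
\end{itemize}
Since $c_{N+1}\le 2c_N$ in both cases, $f(s+1)=2^{-(s+1)}c_{s+2}\le 2^{-s}c_{s+1}=f(s)$. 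Hence $f(s)\le f(2)=\tfrac34$ for all $s\ge 2$.

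The only real obstacle is keeping the parity bookkeeping of the floor straight in Step 3. The Pascal collapse in Step 1 is the key idea that removes any need for delicate two-term estimates. The hypothesis $s\ge 2$ is essential: for $s=1$ the window $\SET{0,1}$ has probability $1$, and this fits the argument, since $f(1)=\binom{2}{1}/2=1$ exceeds $\tfrac34$.
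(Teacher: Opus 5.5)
Your proof is correct and follows the same skeleton as the paper's: use unimodality to reduce to the window centered at the mode, then show that this maximal two-point mass is nonincreasing in $s$ and equals $\tfrac34$ at $s=2$. The one difference is your Pascal collapse $\binom{s}{k}+\binom{s}{k+1}=\binom{s+1}{k+1}$, which lets you compare consecutive values of $s$ via $c_{N+1}\le 2c_N$ instead of running separate even and odd ratio chains as the paper does; as a by-product, it shows that the paper's $M_{2r+1}$ and $M_{2r}$ coincide.
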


\begin{proof}[Proof of Theorem~\ref{thm:weak-exp-iid}]
Fix any proposed allocation $A=(A_1,\dots,A_n)$. 
We will upper-bound $\Prob[\val \sim \Gamma]{A \text{ is EF1 }}$.
By Lemma~\ref{lem:balanced-allocation}, without loss of generality, we can focus on the case that $A_i \neq \emptyset$ for all $i$ and there exist only $\max(0,2n-m)$ indices $i$ with $|A_i| = 1$, i.e., $\max(0,2n-m)$ agents receive singleton bundles. 
For each $i$, let $s_i=|A_i|$, and let $X_i=v(A_i)$ be the number of $1$-valued items in $A_i$.
Under $\Gamma$, $X_i \sim \mathrm{Binomial}(s_i,\tfrac12)$, and because the $A_i$ are disjoint, the $X_i$ are independent across $i$.

Notice that EF1 under identical binary valuations is equivalent to $\max_i X_i-\min_i X_i\le 1$, or, equivalently, that there exists a $k$ such that $X_i \in \SET{k, k+1}$ for all $i$.
We will bound the probability of this event separately for all values of $k$.

First, we observe that when $k > \min_i s_i$, we have $\Prob{X_i \in \SET{k, k+1} \text{ for all } i} = 0$, because the agent receiving $\min_i s_i$ items cannot obtain $k$ or more valuable items.
Because $\min_i s_i \leq \lfloor m/n \rfloor$, it suffices to focus on values $k \leq \lfloor m/n \rfloor$.

For the remaining cases $k \le \lfloor m/n \rfloor$, we apply Lemma~\ref{lem:binom-two-point} to each bundle $A_i$ with $|A_i| \ge 2$, of which there must be at least $\min(m-n, n)$ by our assumption on the allocation $A$, and obtain that for each fixed integer $k$:

\[
\Prob[\val \sim \Gamma]{X_{i} \in \SET{k,k+1} \text{ for all } i}
\; \le \;
\left(\tfrac{3}{4}\right)^{\min(m-n,n)}.
\]

Because EF1 requires this event to occur for some value of $k$, and the events for different $k$ are disjoint, we obtain that
\begin{align*}
\Prob[\val \sim \Gamma]{A\text{ is EF1}}
& =
\sum_{k=0}^{\lfloor m/n \rfloor}
\Prob[\val \sim \Gamma]{X_i \in \SET{k,k+1} \text{ for all } i}
\; \le \;
(\lfloor m/n\rfloor+1) \cdot
\left(\tfrac{3}{4}\right)^{\min(m-n,n)}.
\end{align*}

If the algorithm makes $t$ allocation proposals, by the union bound (which does not require independence across proposals),
\[
\Prob[\val \sim \Gamma]{\text{some proposed allocation is EF1}}
\;\le\;
t\cdot (\lfloor m/n\rfloor+1) \cdot \left(\tfrac{3}{4}\right)^{\min(m-n,n)}.
\]
Thus any algorithm with $t < T = \frac{1}{\lfloor m/n\rfloor+1} \left( \tfrac{4}{3}\right)^{\min(m-n,n)}$ succeeds with probability strictly less than~$1$ under this distribution.
In particular, this implies that for any fixed deterministic algorithm that makes fewer than $T$ allocation proposals, there exists a valuation $\val$ in the support of the distribution for which none of the algorithm's proposed allocations are EF1.

For randomized algorithms $\ALG$, the same upper bound applies to the success probability by Yao’s minimax principle: since every deterministic strategy with $t<T$ succeeds with probability less than 1 under $\Gamma$, so does any randomized strategy. 
Hence there exists a (fixed) valuation $\val$ such that, with positive probability, $\ALG$ does not output an EF1 allocation in any round. Consequently, to guarantee EF1 on \emph{all} valuations, one needs at least $T$ allocation proposals.
\end{proof}

\emcomment{Possible this can go to the appendix?}
\subsubsection{Proof of Lemma~\ref{lem:balanced-allocation}}
\label{sec:proof-balanced-allocation}

\begin{proof}[Proof of Lemma~\ref{lem:balanced-allocation}]

As in the proof of Theorem~\ref{thm:weak-exp-iid}, let $s_i=|A_i|$.
If the allocation $A$ already contains no empty bundle, and as few singleton bundles as possible, then the lemma holds trivially, with $A'=A$. So we focus on the cases when $A$ contains an empty bundle, or more singleton bundles than necessary.

First assume that $A$ contains an empty bundle.
Since $m > n$, by the pigeonhole principle, there exists an agent $j$ such that $|A_j| \ge 2$.
Construct a new allocation $A'=(A'_1,\dots, A'_n)$ by moving an arbitrary item from $A_j$ to $A_i$, and leaving all other bundles unchanged.
Let $X_\ell=v(A_\ell)$ and $X'_\ell=v(A'_\ell)$ denote the bundle values under $A$ and $A'$, respectively.

Since $A_i=\emptyset$, we have $X_i=0$.
Therefore, under identical binary valuations, $A$ is EF1 if and only if $X_\ell \in \SET{0,1}$ for every $\ell$, so
\[
\Prob[\val \sim \Gamma]{A \text{ is EF1}}
=
\Prob[\val \sim \Gamma]{ X_\ell \in \SET{0,1} \text{ for all } \ell}.
\]
By independence of the random variables $X_\ell$ across agents, we obtain
\begin{align*}
\Prob{A \text{ is EF1}}
&=
\Prob{X_i\in\SET{0,1}}
\cdot
\Prob{X_j\in\SET{0,1}}
\cdot
\prod_{\ell\neq i,j}\Prob{X_\ell\in\SET{0,1}}\\
&=
1
\cdot
\Prob{\mathrm{Binomial}(s_j,\tfrac12)\in\SET{0,1}}
\cdot
\prod_{\ell\neq i,j}\Prob{X_\ell\in\SET{0,1}}\\
& \stackrel{(*)}{\le}
1
\cdot
\Prob{\mathrm{Binomial}(s_j-1,\tfrac12)\in\SET{0,1}}
\cdot
\prod_{\ell\neq i,j} \Prob{X_\ell\in\SET{0,1}}\\
&=
\Prob{X'_i \in \SET{0,1}}
\cdot \Prob{X'_j \in \SET{0,1}}
\cdot
\prod_{\ell \neq i,j} \Prob{X'_\ell\in\SET{0,1}}\\
&=
\Prob{X'_\ell \in \SET{0,1} \text{ for all } \ell} \\
&\le\;
\Prob{A' \text{ is EF1}}.
\end{align*}
In the step labeled (*), we used that due to $s_j \geq 1$,
\[
\Prob{\mathrm{Binomial}(s_j,\tfrac12)\in\SET{0,1}}
= 2^{-s_j} \cdot (s_j + 1)
\leq 2^{-(s_j-1)} \cdot s_j
= \Prob{\mathrm{Binomial}(s_j-1,\tfrac12)\in\SET{0,1}}.
\]
Applying this transformation repeatedly, while there exists an empty bundle, we observe that any allocation containing an empty bundle is dominated, in terms of EF1 probability, by another allocation with no empty bundles.

Next we show that among allocations with no empty bundles, having too many singleton bundles can only decrease the probability of the allocation being EF1. Specifically, we show that if an allocation has more than $\max(0,2n-m)$ singleton bundles, then it is dominated (in EF1 probability) by another allocation with no empty bundles and fewer singleton bundles.

Assume that $A$ has no empty bundles but strictly more than $\max(0,2n-m)$ singleton bundles.
Then there exists an agent $i$ with $|A_i|=1$, and by the pigeonhole principle, there exists another agent $j$ with $|A_j|=s_j\ge 3$.
Construct $A'$ by moving an arbitrary item from $A_j$ to $A_i$, leaving all other bundles unchanged.
Again, let $X_\ell=v(A_\ell)$ and $X'_\ell=v(A'_\ell)$.

Because $A$ contains at least one singleton bundle, it can only be EF1 if all bundle values lie in $\SET{0,1}$ or all bundle values lie in $\SET{1,2}$.
On the other hand, all bundle values lying in $\SET{0,1}$ or all bundle values lying in $\SET{1,2}$ is sufficient for $A'$ being EF1. 
The analysis is slightly complicated by the fact that the events of all bundle values lying in $\SET{0,1}$ is not disjoint from the event of all bundle values lying in $\SET{1,2}$ (when all bundle values are equal to 1).
To make the analysis precise, we define the following three events:
\begin{align*}
\Event[1]{E} & = \Set{v}{X_{\ell} = 1 \text{ for all } \ell \neq i, j}, \\
\Event[1,2]{E} & = \Set{v}{X_{\ell} \in \SET{1,2} \text{ for all } \ell \neq i, j} \setminus \Event[1]{E}, \\
\Event[0,1]{E} & = \Set{v}{X_{\ell} \in \SET{0,1} \text{ for all } \ell \neq i, j} \setminus \Event[1]{E}.
\end{align*}

Because $|A_i| = 1$, and using the independence of the valuations for distinct items, we can bound the probabilities of either allocation being EF1 as follows:

\begin{align*}
\Prob{A \text{ is EF1}}
= & \phantom{+}
\Prob{\Event[1]{E}} \cdot 
\left( 
\Prob{X_i = 0} \cdot \Prob{X_j \in \SET{0,1}}
+ \Prob{X_i = 1} \cdot \Prob{X_j \in \SET{0,1,2}} \right)
\\ & + 
\Prob{\Event[0,1]{E}} \cdot
\Prob{X_i \in \SET{0,1}} \cdot
\Prob{X_j \in \SET{0,1}}
\\ & + 
\Prob{\Event[1,2]{E}} \cdot
\Prob{X_i = 1} \cdot
\Prob{X_j \in \SET{1,2}}
\\ = & \phantom{+}
\Prob{\Event[1]{E}} \cdot 
\left( \Prob{X_j \in \SET{0,1}} + \tfrac12 \cdot \Prob{X_j = 2} \right)
\\ & + 
\Prob{\Event[0,1]{E}} \cdot
\Prob{X_j \in \SET{0,1}}
\\ & + 
\Prob{\Event[1,2]{E}} \cdot
\tfrac12 \cdot
\Prob{X_j \in \SET{1,2}}, 
\end{align*}

\begin{align*}
\Prob{A' \text{ is EF1}}
\geq & \phantom{+}
\Prob{\Event[1]{E}} \cdot 
\Big( 
\Prob{X'_i = 0} \cdot \Prob{X'_j \in \SET{0,1}}
+ \Prob{X'_i = 1} \cdot \Prob{X'_j \in \SET{0,1,2}} 
\\ & \phantom{\Prob{\Event[1]{E}} \cdot } + \Prob{X'_i = 2} \cdot \Prob{X'_j \in \SET{1,2}} \Big)
\\ & + 
\Prob{\Event[0,1]{E}} \cdot
\Prob{X'_i \in \SET{0,1}} \cdot
\Prob{X'_j \in \SET{0,1}}
\\ & + 
\Prob{\Event[1,2]{E}} \cdot
\Prob{X'_i \in \SET{1,2}} \cdot
\Prob{X'_j \in \SET{1,2}}
\\ = & \phantom{+}
\Prob{\Event[1]{E}} \cdot 
\left( 
\tfrac34 \cdot \Prob{X'_j = 0} + \Prob{X'_j = 1} + \tfrac34 \cdot \Prob{X'_j = 2} \right)
\\ & + 
\Prob{\Event[0,1]{E}} \cdot
\tfrac34 \cdot \Prob{X'_j \in \SET{0,1}}
\\ & + 
\Prob{\Event[1,2]{E}} \cdot
\tfrac34 \cdot \Prob{X'_j \in \SET{1,2}}.
\end{align*}

Now, we recall that $\Prob{X_j = 0} = 2^{-s_j}, \Prob{X_j = 1} = s_j \cdot 2^{-s_j}$, and $\Prob{X_j=2} = \binom{s_j}{2} \cdot 2^{-s_j}$, and similarly for $X'_j$ with $s_j-1$ in place of $s_j$.
Then, the bounds above become:

\begin{align*}
\Prob{A \text{ is EF1}}
= & \phantom{+} \Prob{\Event[1]{E}} \cdot 
2^{-s_j} \cdot 
\left( 1 + s_j + \tfrac12 \cdot \binom{s_j}{2} \right)
\\ & + 
\Prob{\Event[0,1]{E}} \cdot 2^{-s_j} \cdot (s_j + 1)
\\ & + 
\Prob{\Event[1,2]{E}} \cdot
2^{-s_j} \cdot \tfrac12 \cdot \left( s_j + \binom{s_j}{2} \right),
\\
\Prob{A' \text{ is EF1}}
\geq & \phantom{+}
\Prob{\Event[1]{E}} \cdot 
\left( 
2^{-s_j} \cdot \tfrac32
+ 2^{-s_j} \cdot 2 \cdot (s_j-1)
+ 2^{-s_j} \cdot \tfrac32 \cdot \binom{s_j-1}{2} \right)
\\ & + 
\Prob{\Event[0,1]{E}} \cdot
2^{-s_j} \cdot \tfrac32 \cdot s_j
\\ & + 
\Prob{\Event[1,2]{E}} \cdot
2^{-s_j} \cdot \tfrac32 \cdot \left( s_j - 1 + \binom{s_j-1}{2} \right).
\end{align*}

We will now compare the multiplying factors in front of $\Prob{\Event[1]{E}}, \Prob{\Event[0,1]{E}}, \Prob{\Event[1,2]{E}}$; ignoring the common factor of $2^{-s_j}$, it suffices to show that for all $s = s_j \geq 3$, 
\begin{align}
1 + s + \tfrac12 \cdot \binom{s}{2} 
& \leq 
\tfrac32
+ 2 \cdot (s-1)
+ \tfrac32 \cdot \binom{s-1}{2}
\label{eqn:E1-bound}
\\ s + 1 & \leq \tfrac32 \cdot s
\label{eqn:E01-bound}
\\ s + \binom{s}{2}
& \leq 
3 \cdot \left( s - 1 + \binom{s-1}{2} \right).
\label{eqn:E12-bound}
\end{align}

Inequality~\eqref{eqn:E01-bound} is obvious. 
By multiplying Inequality~\eqref{eqn:E1-bound} by 2 and canceling out immediately common terms, it is equivalent to
\begin{align}
3 + \binom{s}{2} 
\leq 2s + 3 \cdot \binom{s-1}{2}.
\label{eqn:common-inequality}
\end{align}
Canceling common terms in Inequality~\eqref{eqn:E12-bound}, we see that it too is equivalent to Inequality~\eqref{eqn:common-inequality}.
The latter can be verified by substituting the definitions of Binomial coefficients and canceling out terms, showing that it is equivalent to $2s^2 \geq s+2$, which holds for all $s \geq 2$.

Combining the two arguments eliminating first all empty bundles, then unnecessary singleton bundles, we conclude that any allocation containing an empty bundle, or containing more than $\max(0,2n-m)$ singleton bundles, is dominated in EF1 probability by another allocation with no empty bundles and fewer singleton bundles.
This proves the lemma.
\end{proof}

\subsection{Algorithms for a Constant Number of Agents}
\label{subsec:weak-constant}

Theorem~\ref{thm:weak-exp-iid} shows that under the Membership Model --- unless the number of items is very small --- the number of iterations required to guarantee an EF1 allocation is exponential in the number of agents $n$.
Here, we show that an exponential dependence is indeed required \emph{only} for $n$; when $n$ is treated as a constant, then even under the Membership Model, an EF1 allocation can be found in time polynomial in the number of items $m$.
This holds very generally, namely, under general monotone valuation functions.

\begin{theorem}[Algorithm for a constant number of agents]
\label{thm:weak-constant-formal}
In the Membership Model for goods with monotone valuation functions, an EF1 allocation can be found using $O(m^{2n} \cdot (n!)^2)$ interactions and time.
When $n$ is a constant, this bound is polynomial in $m$.
\end{theorem}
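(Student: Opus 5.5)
Since the Membership Model reveals only whether a proposal is EF1, the plan is exhaustive search over an explicitly enumerable family of allocations that is guaranteed to contain an EF1 allocation for every monotone valuation profile; the number of interactions is then the size of the family. The natural choice is the almost-contiguous allocations from Theorem~\ref{thm:exist-almost-contiguous}. Organizing the enumeration by the knife-position structure used in its proof makes the count come out as $O(m^{2n}(n!)^2)$.

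Fix an arbitrary ordering of the items. By the proof of Theorem~\ref{thm:exist-almost-contiguous}, for every monotone profile there is an EF1 allocation determined by the following data:
\begin{itemize}
\item a starting knife position $\vc{k}_1$, with $n-1$ coordinates on a grid of $2m+1$ half-integer values, giving $O(m^{n-1})$ choices;
\item the bijection $\sigma$ assigning interval indices to agents, giving $n!$ choices;
\item the assignment of the at most $n-1$ boundary items in $R$ to agents.
\end{itemize}
For the last item, I would not enumerate $\varphi$ directly; I only need to know which agent receives each boundary item. Each boundary item is either in its designated bundle on the left or right side, or is handed to an agent whose designated bundle is basic. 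A crude bound for these assignments is $n^{n-1}$, or alternatively an injection of $R$ into agents, at most $n!$ choices. This gives $O(m^{n-1}\cdot n!\cdot n!)$ candidates, which is within the claimed $O(m^{2n}(n!)^2)$. A more naive alternative is to enumerate all almost-contiguous allocations directly: assign each agent an interval (two endpoints, $m^2$ choices) and at most one extra item, then discard tuples that are not partitions of $M$. With extra items restricted to a set of size at most $n-1$ distributed injectively, this yields the stated form $m^{2n}$ times factorial factors. I would adopt this cruder enumeration, since it does not require re-deriving the knife structure.

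The algorithm enumerates every tuple
\[
(\text{intervals }[L_i,R_i])_{i\in N}\text{, an injective partial map of extra items to agents},
\]
checks in $\poly(m)$ time whether the tuple is an allocation of $M$, and proposes it if so. It stops at the first positive answer. Correctness follows immediately from Theorem~\ref{thm:exist-almost-contiguous}, since some EF1 allocation lies in the family. The membership oracle never misleads, and no adaptivity is needed.

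The main obstacle is bounding the number of extra-item choices by $(n!)^2$ rather than $m^n$. The uncovered items in any almost-contiguous allocation where intervals are used "maximally" are gaps between consecutive intervals. So I would parametrize by the ordering $\pi$ of the intervals along the line ($n!$ choices), the $n-1$ cut positions together with the gap items (each cut covering at most one item, giving $O(m^{n-1})\le O(m^{2n})$ choices), and an injection from the at most $n-1$ gap items to agents (at most $n!$ choices). I expect the delicate part to be justifying that every almost-contiguous allocation arising in Theorem~\ref{thm:exist-almost-contiguous} has this form, with the basic bundles consecutive and the boundary items located between them. That is exactly the structure of $\bp$ and $R$ described before Lemma~\ref{lem:designated-bundle}, so I would cite it there. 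The running time is the number of candidates times $\poly(m)$, which is absorbed in the bound.
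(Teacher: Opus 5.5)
Your final argument is correct and follows the paper's plan: propose every member of an explicitly enumerable family that, by Theorem~\ref{thm:exist-almost-contiguous}, must contain an EF1 allocation. You differ from the paper in how that family is parametrized.

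The paper uses only the \emph{statement} of Theorem~\ref{thm:exist-almost-contiguous} and covers all almost-contiguous allocations crudely. It allows at most $m^n$ choices each for the $n-1$ interval boundaries and for the (at most $n$) extra items, and $n!$ each for matching extras to intervals and bundles to agents.

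Your last enumeration takes the starting knife position $\vc{k}_1$ (which determines the basic bundles and $R$), the bijection $\sigma$, and an injection $R\to N$. This relies on the internals of the existence proof: the constructed allocation gives each agent $\bp[\sigma(i)]$ plus at most one item of $R$, with $|R|\le n-1$ and the items of $R$ handed out injectively (Lemma~\ref{lem:designated-bundle} and the proof that follows it). That gives the sharper count $O(m^{n-1}(n!)^2)$, a refinement the paper only mentions in its remark after the proof. It also leaves enough slack that the $O(n+m)$ per-candidate cost of building and checking an allocation is genuinely absorbed into the stated time bound. A family of size $\Theta(m^{2n})$ would instead need the amortized constant-time enumeration the paper asserts. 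The price is that your family is tied to one particular existence proof rather than to the theorem statement.

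Separately, the obstacle you name in your last paragraph does not arise for the ``cruder'' enumeration you said you would adopt. As written, the proposal also describes three different enumerations without committing to one. Once all $n$ intervals $[L_i,R_i]$ are fixed, the extra items are forced to be $M\setminus\bigcup_i [L_i,R_i]$, since otherwise the tuple is not a partition. The only remaining freedom is an injection of that set into $N$, which has at most $n!$ choices. This already gives $O(m^{2n}\cdot n!)$ proposals from the statement of Theorem~\ref{thm:exist-almost-contiguous} alone. The spurious $m^n$ factor comes only from choosing each agent's extra item independently.
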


\begin{proof}
The idea of the proof is straightforward: Our characterization result of Theorem~\ref{thm:exist-almost-contiguous} guarantees the existence of a structured EF1 allocation.
Furthermore, there are only polynomially many (in $m$) candidate allocations of this structure. 
An algorithm can exhaustively try all of them, and is guaranteed to find one that is EF1.

To make this idea precise, fix an arbitrary ordering of the items.
By Theorem~\ref{thm:exist-almost-contiguous},
there exists an EF1 allocation $A=(A_1, \ldots, A_n)$ such that each $A_i$ is the union of an interval and a single additional item.
Such an allocation is fully characterized by the $n-1$ boundaries between intervals, the choice of (at most) $n$ additional items, the matching between additional items and intervals, and the assignment of the resulting bundles to the agents.
There are at most $m^n$ choices each for the first two, and at most $n!$ choices for the latter two, resulting in an upper bound of $m^{2n} \cdot (n!)^2$ on the total number of allocations $A$ of the given type.

An algorithm can simply enumerate all of these allocations and propose them, and by Theorem~\ref{thm:exist-almost-contiguous} is guaranteed that at least one of the proposed allocations will be EF1.
Because the computational task of enumerating all of these allocations also takes constant (amortized) time per allocation, the running time is also $O(m^{2n} \cdot (n!)^2)$, proving the theorem. 
\end{proof}

The bound given in Theorem~\ref{thm:weak-constant-formal} was optimized for simplicity instead of tightness. 
For example, from the proof of Theorem~\ref{thm:exist-almost-contiguous}, we know that the added items must be at interval boundaries, reducing the search space.
Furthermore, the search space can be reduced if additional assumptions are made on the agents' valuation functions.
In particular, when the agents have \emph{identical} monotone valuations, instead of using Theorem~\ref{thm:exist-almost-contiguous}, we can use Lemma~5.3 of \citet{oh2021fairly}, which guarantees the existence of an EF1 allocation all of whose bundles are intervals.
Enumerating all cut positions induces all contiguous allocations, and since the valuations are identical, the bundle-agent assignments are immaterial.
Thus, proposing one allocation per cut position suffices, yielding an $O(m^n)$ interaction and time bound.

\section{The All-Witness EF1 Model: Algorithms via Envy-Cycle Elimination}
\label{sec:all-witness}
We now turn to the most informative feedback model for EF1, which we call the \emph{All-Witness EF1 Model}.
In this model, upon proposing an allocation $A$, the algorithm learns the complete set of agent pairs $(i,j)$ for which agent $i$ envies agent $j$’s bundle even after the removal of any single item, i.e., all pairs $i,j$ with $A_i \prec_i^{\mathrm{EF1}} A_j$.

Such rich feedback substantially simplifies the problem.
While the classical Envy-Cycle Elimination Algorithm cannot be applied directly, we show that it can be simulated with only minor modifications in the All-Witness Model.
This yields a polynomial-interaction and polynomial-time procedure for finding an EF1 allocation under arbitrary monotone valuations.

\subsection{Algorithmic Overview}
\label{subsec:all-witness-overview}
We begin by briefly recalling the classical Envy-Cycle Elimination Algorithm \citep{Lipton2004}, which incrementally constructs an EF1 allocation by maintaining a partial allocation and repeatedly eliminating cycles in the induced envy graph.
While this approach is well-suited to the full-information setting, it does not immediately apply to our interactive model with all-witness EF1 feedback.

There are two main obstacles.
First, envy-cycle elimination operates on partial allocations, whereas in our model, the algorithm must propose a complete allocation in every round.
Second, the algorithm relies on access to the full envy graph, i.e., access to every envy-free violation,  while the feedback we obtain consists only of EF1 violations, which is strictly weaker information.

We address these two issues separately.
To overcome the first obstacle, our algorithm designates the three agents $1,2,3$ to serve as \buffers which temporarily hold items outside the target partial allocation. 
For each partial allocation proposed by the Envy-Cycle Elimination Algorithm, our algorithm proposes three allocations.
In the first proposal, all such remaining items are assigned to agent~$1$ (in addition to items which agent 1 might be allocated in the partial allocation); this reveals all EF1 violations between pairs $(i,j)$ with $i,j \neq 1$.
In the second proposal, the remaining items are instead assigned to agent~$2$, which reveals all violations between pairs of agents $i,j \neq 2$.
Finally, assigning the remaining items to agent~$3$ reveals all EF1 violations between agents $i,j \neq 3$.
Together, these three proposals allow the algorithm to recover all EF1 violations among the agents under the target partial allocation.
This construction requires the presence of at least three agents.
When there are only $n=2$ agents, we have shown in Section~\ref{subsec:weak-constant} how to obtain an algorithm with polynomial complexity even in the Membership Model. In Appendix~\ref{sec:two-agent-case}, we show how to straightforwardly adapt ideas of \citet[Theorem~3.1]{oh2021fairly} to improve the number of iterations, and find an EF1 allocation in $O(\log m)$ iterations.

To overcome the second obstacle, we show that EF1 feedback suffices to infer a (possibly incomplete) directed envy graph over the agents in the partial allocation.
We then prove that applying envy-cycle elimination to this partial graph and assigning the next item to an agent with in-degree zero in the resulting graph preserves the EF1 property of the partial allocation.
Repeating this process allows the algorithm to simulate envy-cycle elimination until a complete EF1 allocation is obtained.

\subsection{The Algorithm for Three or More Agents}
\label{subsec:all-witness-main}

We now present the details of the algorithm for the All-Witness EF1 Model in instances with $n \ge 3$ agents and monotone valuation functions.

The algorithm proceeds in rounds, allocating one item per round, in the spirit of the classical envy-cycle elimination procedure.
Throughout the process, it maintains the invariant that the current partial allocation is EF1.

We fix an arbitrary ordering of the items.
Let $\tilde A^{(t)} = (\tilde A^{(t)}_1,\ldots,\tilde A^{(t)}_n)$ denote the partial allocation
maintained by the algorithm after round $t$, where $\tilde A^{(t)}_i \subseteq [m]$ is the bundle currently assigned to agent $i$.
Initially, no items are allocated, and the empty allocation $\tilde A^{(0)}$ trivially satisfies EF1.
Suppose that after some round the algorithm has constructed an EF1 partial allocation $\tilde A^{(t-1)}$, and consider the next item $t$ to be allocated.

The algorithm tentatively assigns this item to agents one by one.
For each tentative assignment, it uses all-witness EF1 feedback together with the buffer-agent construction to test whether the resulting partial allocation remains EF1.
If assigning the item to some agent preserves EF1, the algorithm fixes this assignment and proceeds to the next round.

If none of the tentative assignments preserves EF1, the algorithm exploits the information revealed by these failed attempts.
Specifically, the EF1 violations induced by adding the new item allow the algorithm to infer a set of directed envy relations that is consistent with the envy relations under the previous-round allocation $\tilde A^{(t-1)}$.
We write $E_{\mathrm{envy}}$ for the resulting set of directed pairs and $G_{\mathrm{envy}} = (N, E_{\mathrm{envy}})$ for the induced directed graph.
Intuitively, since $\tilde A^{(t-1)}$ was EF1 and all tentative assignments fail after the addition of item~$t$, each agent must be involved in some observed EF1 violation, implying that every agent has at least one incoming envy relation in $G_{\mathrm{envy}}$ and thus that $G_{\mathrm{envy}}$ contains a directed cycle.

When $G_{\mathrm{envy}}$ contains a directed cycle
$i_1 \to i_2 \to \cdots \to i_k \to i_1$, the algorithm performs a single cycle-rotation step:
each agent $i_\ell$ on the cycle receives the bundle
$\tilde A^{(t-1)}_{i_{\ell+1}}$, where indices are taken modulo $k$.
All bundles outside the cycle remain unchanged.
This operation updates the partial allocation while preserving the multiset of bundles, and enables the algorithm to restore progress when no tentative assignment of the current item preserves EF1.

Using this procedure, the algorithm identifies an agent to whom the current item can be assigned without violating EF1, restores the invariant, and advances to the next round.
Repeating this process for all items yields a complete EF1 allocation.

The algorithm is formally given as Algorithm~\ref{alg:all-witness-main}.

\begin{algorithm}[t]
\caption{All-Witness EF1 Allocation for $n \ge 3$ Agents}
\label{alg:all-witness-main}
\DontPrintSemicolon
\KwIn{Agents $[n]$, items $[m]$ in an arbitrary fixed order.}
\KwOut{An EF1 allocation with respect to the (unknown) true valuations.}

Initialize $\tilde A^{(0)}_i \gets \emptyset$ for all $i \in [n]$.\;

\For{$t = 1$ \KwTo $m$}{
    $\mathrm{assigned} \gets \textsc{false}$.\;

    \While{$\mathrm{assigned} = \textsc{false}$}{
        $E_{\mathrm{envy}} \gets \emptyset$.\;

        \ForEach{$i \in [n]$}{
            $E_i \gets \emptyset$.\;

            \tcp{Tentatively assign item $t$ to agent $i$}
            Define $\tilde A^{(t,i)}$ by
            \[
                \tilde A^{(t,i)}_i = \tilde A^{(t-1)}_i \cup \SET{t},
                \qquad
                \tilde A^{(t,i)}_j = \tilde A^{(t-1)}_j \text{ for all } j \neq i.
            \]

            \tcp{Probe EF1 violations using buffer agents}
            \ForEach{$b \in \SET{1,2,3}$}{
                Propose the complete allocation obtained by assigning all remaining items
                $\SET{t+1,\ldots,m}$ to agent $b$ and keeping $\tilde A^{(t,i)}$ unchanged otherwise.\;

                Receive all-witness EF1 feedback and update
                \[
                    E_i \gets E_i \cup
                    \Set{(p,q)}{(p,q)\text{ is reported and } p \neq b,\ q \neq b}.
                \]
            }

            \If{$E_i = \emptyset$}{
                \tcp{Item $t$ can be assigned to agent $i$ while preserving EF1}
                $\tilde A^{(t)} \gets \tilde A^{(t,i)}$.\;
                $\mathrm{assigned} \gets \textsc{true}$.\;
                \textbf{break}\;
            }
            \Else{
                $E_{\mathrm{envy}} \gets E_{\mathrm{envy}} \cup E_i$.\;
            }
        }

        \If{$\mathrm{assigned} = \textsc{false}$}{
            \tcp{All tentative assignments failed: rotate bundles along a directed cycle}
            Construct $G_{\mathrm{envy}} = (N, E_{\mathrm{envy}})$.\;

            Find any directed cycle
            $i_1 \to i_2 \to \cdots \to i_k \to i_1$
            in $G_{\mathrm{envy}}$.\;

            \For{$\ell = 1$ \KwTo $k$}{
                $\tilde A^{(t-1)}_{i_\ell} \gets \tilde A^{(t-1)}_{i_{(\ell+1) \mod k}}$.\;
            }
        }
    }
}

\Return{$\tilde A^{(m)}$}\;
\end{algorithm}

\begin{theorem}[All-witness EF1 for $n \ge 3$ agents]
\label{thm:all-witness-main}
In the All-Witness EF1 Model for goods with monotone valuation functions and $n \ge 3$ agents, Algorithm~\ref{alg:all-witness-main} outputs an EF1 allocation using $O(mn^3)$ interactions and $O(mn^5)$ time. 
\end{theorem}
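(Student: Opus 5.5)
Invariant: after round $t$ the partial allocation $\tilde A^{(t)}$ is EF1. This holds trivially for $t=0$; for $t=m$ it gives the theorem.

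\textbf{Buffer probes.} Fix a partial allocation $\tilde A$. Completing it by giving all unallocated items to buffer $b$ leaves $q$'s bundle unchanged for every $q\neq b$. So for $p,q\neq b$, the pair $(p,q)$ is reported exactly when $p$ EF1-envies $q$ in $\tilde A$. Any pair $(p,q)$ has $\{p,q\}$ disjoint from at least one of $\{1\},\{2\},\{3\}$. Hence the union of the filtered reports over $b\in\{1,2,3\}$ is exactly the set of EF1 violations of $\tilde A$. There are no false positives, because the filter discards pairs involving $b$, the only modified bundle. This step needs $n\ge 3$. Consequently $E_i=\emptyset$ if and only if $\tilde A^{(t,i)}$ is EF1, and whenever the inner loop assigns the item the invariant is preserved.

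\textbf{Envy graph.} Suppose every tentative assignment fails. Since $\tilde A^{(t-1)}$ is EF1, any violation $(p,q)\in E_i$ of $\tilde A^{(t,i)}$ must involve the changed bundle. It cannot have $p=i$, since adding a good to $i$'s own bundle only helps $i$ by monotonicity. So $q=i$, and $p$ EF1-envies $\tilde A^{(t-1)}_i\cup\{t\}$. Removing $t$ then shows $v_p(\tilde A^{(t-1)}_p)<v_p(\tilde A^{(t-1)}_i)$, so $p$ genuinely envies $i$ in $\tilde A^{(t-1)}$. Thus every edge of $G_{\mathrm{envy}}$ is a true envy edge of $\tilde A^{(t-1)}$, and every vertex has positive in-degree, so a directed cycle exists and can be found in $O(n)$ time.

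\textbf{Rotation.} Rotating along the cycle keeps the same multiset of bundles. Each agent on the cycle strictly improves, and agents off the cycle keep their bundles. EF1 of the partial allocation therefore survives: an agent's comparison set is unchanged, and its own value weakly increases. This is the standard envy-cycle argument.

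\textbf{Termination.} I expect this to be the main obstacle, and I would argue it by potential. Every rotation strictly increases $\sum_i v_i(\tilde A^{(t-1)}_i)$, but that sum has no useful polynomial bound. Instead I would use the structure of the envy-cycle argument. After a rotation, consider an agent $i$ with no incoming true envy in $\tilde A^{(t-1)}$. Its bundle is envied by nobody, so anyone who EF1-envies $\tilde A^{(t-1)}_i\cup\{t\}$ would, after removing $t$, envy $\tilde A^{(t-1)}_i$, which is impossible. So the next tentative assignment to such an agent succeeds.

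\textbf{Bounding the rotations.} Each rotation reduces the number of true envy edges, since the standard argument shows rotation does not create new envy edges toward bundles and removes the cycle edges. There are at most $n^2$ such edges, but I would aim for $O(n)$ rotations per item. This needs care: I would argue that each rotation removes at least one edge into the set of envied bundles, and I would accept a weaker count if necessary.

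\textbf{Complexity.} Each pass of the while loop costs at most $3n$ proposals. With $O(1)$ amortized passes per item this gives $O(mn)$ proposals; with $O(n^2)$ passes it gives $O(mn^3)$, matching the claimed interaction bound. For time, each pass processes $O(n^2)$ reported pairs for each of $3n$ proposals, i.e.\ $O(n^3)$ work per pass, which yields the stated $O(mn^5)$.
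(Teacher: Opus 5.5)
Your proposal is correct and follows essentially the same route as the paper. The three buffer probes recover exactly the EF1 violations, and every edge of $E_i$ points into $i$ and is a genuine envy edge of $\tilde A^{(t-1)}$, so $G_{\mathrm{envy}}$ has a cycle. Each rotation strictly decreases the number of envy edges, which gives $O(n^2)$ passes of $3n$ proposals per item.

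Your potential, the true envy edges of $\tilde A^{(t-1)}$, is the right one; the paper phrases it loosely as the recomputed $E_{\mathrm{envy}}$ shrinking. The hoped-for $O(n)$ refinement is not needed. One small slip: an on-cycle agent's comparison set is not ``unchanged'', since it now contains that agent's old bundle. This is harmless, because the agent strictly prefers its new bundle.
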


\begin{proof}
By construction, the algorithm maintains the invariant that at the beginning of each round $t$ the current partial allocation $\tilde A^{(t-1)}$ satisfies EF1.
Therefore, once round $t$ terminates with $\tilde A^{(t)}$ obtained by fixing item $t$ to some agent, the resulting partial allocation remains EF1.
It thus suffices to prove that, for each $t \in [m]$, the inner \textbf{while} loop terminates after at most $O(n^2)$ iterations.
The claimed interaction and running-time bounds then follow by the accounting in Algorithm~\ref{alg:all-witness-main}.

Fix a round $t$ and consider an iteration of the \textbf{while} loop in which no tentative assignment succeeds.
In this case, the algorithm constructs a directed graph $G_{\mathrm{envy}} = (N, E_{\mathrm{envy}})$ from the observed all-witness EF1 feedback.
Since every tentative assignment fails, it follows that for every agent $i \in N$ the corresponding edge set $E_i$ is nonempty.
Because $i$ is the only agent other than $b$ who receives additional items compared to the allocation $\tilde{A}^{(t-1)}$ (which was EF1), the only possible EF1 violations must involve an agent envying $i$; that is, all edges in $E_i$ are directed edges towards $i$.
Thus, the fact that all $E_i \neq \emptyset$ implies that every vertex in $G_{\mathrm{envy}}$ has indegree at least~$1$.
Hence $G_{\mathrm{envy}}$ contains a directed cycle. 

We next relate $G_{\mathrm{envy}}$ to the envy relations under the previous-round allocation $\tilde A^{(t-1)}$.
By the construction of $E_{\mathrm{envy}}$ from the failed tentative assignments, each directed edge $(p,q) \in E_{\mathrm{envy}}$ indicates that, under the tentative allocation in which item $t$ is added, agent $p$ EF1-envies agent $q$.
Since item $t$ is the only change relative to $\tilde A^{(t-1)}$, such an EF1 violation can only arise if agent $p$ already envies agent $q$ under the previous-round allocation $\tilde A^{(t-1)}$; otherwise, the removal of item $t$ would eliminate $p$'s envy of $q$.
Consequently, $G_{\mathrm{envy}}$ is a subgraph of the envy graph induced by $\tilde A^{(t-1)}$.

The algorithm then selects an arbitrary directed cycle in $G_{\mathrm{envy}}$ and performs a single cycle-rotation step, reassigning each agent on the cycle to the bundle of their successor.
By the argument in \citet[Lemma~2.2]{Lipton2004} (which applies to any directed cycle of envy edges), this rotation strictly decreases the number of envy edges among the agents in $G_{\mathrm{envy}}$.
In particular, each time the algorithm performs a cycle-rotation step, the size of the edge set $E_{\mathrm{envy}}$ decreases by at least one.

Since $|E_{\mathrm{envy}}| \le n(n-1)$, after at most $n(n-1) = O(n^2)$ such iterations, $G_{\mathrm{envy}}$ contains a vertex/agent of in-degree zero.
For this agent, assigning item $t$ to $i$ preserves EF1, and thus the algorithm succeeds in the tentative assignment step and terminates round~$t$.

Therefore, the \textbf{while} loop terminates after at most $O(n^2)$ iterations in every round $t$.
As each \textbf{while} iteration uses at most $3n$ allocation proposals, the total number of proposals is $O(mn^3)$.
Moreover, processing each proposal takes $O(n^2)$ time, yielding a total running time of $O(mn^5)$.
This completes the proof.
\end{proof}

\section{Conclusion and Open Problems}
\label{sec:conclusion}
We defined an interactive model in which an algorithm must repeatedly propose allocations, and receives feedback either indicating that the proposed allocation is fair, or providing a specific fairness violation.
We have shown that under this model, for additive valuations, an algorithm can find an EF1 or PROP1 allocation in polynomial time.
When the valuations are merely monotone, an algorithm can still find an EF1 allocation in a polynomial number of interactions, but the internal computations are currently not polynomial-time.

\dkedit{Our framework is flexible and, as discussed in \cref{sec:generalizing-ellipsoid}, provides an interactive algorithm with a polynomial number of iterations whenever the notion of envy and valuations combined satisfy certain properties.
Another example of an application is for non-additive valuations with both goods and chores, and a slightly relaxed notion of envy in which an agent $i$ must not envy any agent $j$ after removal of at most one chore from $i$'s bundle \emph{and} one good from $j$'s bundle.
Under some restrictions on the class of valuations, 
\citet{Bilo2026approximately} generalized the existence result of \citet{Igarashi2023Discrete} to show existence of interval allocations satisfying this notion of envy-freeness; consequently, our algorithm and analysis from \cref{sec:monotone-ef1} can be directly applied, yielding an interactive algorithm with polynomially many interactions (but, to the best of our knowledge, not polynomial computation in each iteration).}

\dkreplace{This naturally}{Most importantly, our work} raises the question whether an EF1 almost-contiguous allocation --- whose existence we showed non-constructively --- can be found efficiently given explicit access to each agent's valuation of each almost-interval bundle.
The question parallels \dkreplace{a question of \citet{Bilo2022Connect}}{corresponding questions of \citet{Bilo2022Connect} and \citet{Igarashi2023Discrete}}: they proved non-constructively that an interval EF2 \dkedit{\citep{Bilo2022Connect} and EF1 \citep{Igarashi2023Discrete}}allocation always exists, but finding such an allocation in polynomial time \dkreplace{is}{was} also left as an open question.
\dkdelete{Another closely related question was also left open by \citet{Bilo2022Connect}: whether there always exists an interval \emph{EF1} allocation for monotone valuations.
The existence of such a valuation would improve the (already polynomial) number of interactions required by our algorithm for monotone valuations, by further narrowing down the search space.}

Relatedly, while in Appendix~\ref{sec:direct-almost-contiguous}, we showed NP-hardness of deciding whether a given blacklist of tuples still allows for an almost-contiguous allocation, the problem shown NP-hard is a slight generalization of the one faced by an algorithm trying to obtain an EF1 allocation.
Specifically, when the tuple $(i,B_1,B_2)$ is added to the blacklist, signifying that agent $i$, being given bundle $B_1$, would EF1-envy any agent given bundle $B_2$, this also implies that the same would be true for any tuple $(i,B'_1,B'_2)$ with $B'_1 \subseteq B_1, B'_2 \supseteq B_2$; the instances constructed in our reduction can violate this closure property, so it is conceivable that a more narrow definition of the problem that needs to be solved is not NP-hard.
After all, the \emph{existence} of an EF1 almost-contiguous allocation is \emph{guaranteed}, so the answer to the decision question for all instances created over the execution of a search algorithms is ``Yes''.  
Nonetheless, we view our NP-hardness result as strong evidence that the problem of finding an allocation avoiding all tuples of the blacklist may be computationally hard for a suitable complexity class of search problems.

Another related question is whether there is an alternative approach for interactively finding an EF1 allocation under general monotone valuation functions with single-witness feedback.
While our approach of embedding the objects linearly leads to a polynomial number of interactions, it certainly need not be the unique approach achieving this goal, and perhaps an alternative would lead to less challenging computational problems.

\dkdelete{While we showed the existence of almost-contiguous allocations for monotone allocations of \emph{goods}, our result does not extend to chores.
To the best of our knowledge, the existence of such allocations is open, as is the existence of interval EF2 allocations of chores.
The immediate adaptations of the Sperner-based proof of \citet{Bilo2022Connect} do not appear to work, though it is conceivable that a Sperner-based argument with more fundamental modifications could succeed.}

More fundamentally, other questions in the general realm of fairness and social choice could benefit from being studied through the lens of interactive learning. Doing so would pose a natural, and in many cases practically relevant, alternative to either assuming full access to inputs or specific types of interactions.

\dkedit{
\subsubsection*{Acknowledgments}
We would like to thank anonymous reviewers, conference attendees at EC 2026, \dkedit{Uriel Feige,} and Ayumi Igarashi for useful feedback and pointers.
}


\bibliographystyle{ACM-Reference-Format}
\bibliography{davids-bibliography/names, davids-bibliography/conferences, davids-bibliography/bibliography, davids-bibliography/publications, bibliography}

\clearpage
\appendix

\section{Naive Local-Adjustment Algorithms under the Single-Witness EF1 Model}
A natural question is whether EF1 under the Single-Witness Model can be achieved by a simple local-adjustment rule that directly responds to revealed envy.
Specifically, upon observing a violating pair $(i,j)$, one may transfer a single item from agent $j$ to agent $i$ and repeat this procedure until no further violations are reported.

We begin by analyzing this approach in the pure-goods setting with identical additive valuations.
In this case, we show that the local-adjustment algorithm terminates and outputs an EF1 allocation.
However, we can only establish an exponential upper bound on the number of iterations required by the local-adjustment process.
Furthermore, when valuations are not identical, the same procedure may fail to terminate.
The latter result implies that purely local adjustments are insufficient for obtaining efficient algorithms under the Single-Witness Model.

\begin{algorithm}[htb]
\caption{\textsc{LocalAdjustment-EF1} (Identical Additive Goods)}
\label{alg:greedy-transfer}
\DontPrintSemicolon
\SetKwInput{KwInput}{Input}
\SetKwInput{KwOutput}{Output}
\KwInput{An instance $(N,M)$ with identical additive valuations over pure goods.}
\KwOutput{An EF1 allocation with respect to the (unknown) true valuations.}
Initialize an arbitrary allocation $A^{(0)} = (A_1^{(0)}, \dots, A_n^{(0)})$.\;

\For{$t=0,1,2,\ldots$}{
    Propose the allocation $A^{(t)}$.\;
    \eIf{$A^{(t)}$ is certified EF1}{
        \Return $A^{(t)}$.\;
    }{
        Receive a violating pair $(i,j)$.\;
        Obtain the allocation $A^{(t+1)}$ by transfering an arbitrary item $x \in A_j^{(t)}$ to agent $i$, and leaving all other bundles unchanged.
    }
}
\end{algorithm}

\begin{lemma}\label{lem:local-adjustment-identical}
Under identical additive valuations in the pure-goods setting, Algorithm~\ref{alg:greedy-transfer} outputs an EF1 allocation using at most \(O(n^{m})\) interactions.
Each interaction requires only constant-time computation once a violating pair is revealed.
\end{lemma}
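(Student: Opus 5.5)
We use a potential-function argument on the vector of bundle values. Let $v$ denote the common additive valuation, and for an allocation $A$ let $\bm{x}(A)$ be the vector of the values $v(A_1),\ldots,v(A_n)$ sorted in nondecreasing order. We show that every transfer step of Algorithm~\ref{alg:greedy-transfer} produces an allocation whose sorted value vector is strictly larger in lexicographic order. Then no allocation, and indeed no sorted value vector, can repeat. The number of interactions is therefore bounded by the number of distinct allocations, which is $n^m$ since each item goes to one of $n$ agents. Termination of the loop happens only when an allocation is certified EF1, so the output is EF1.

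The key step is the monotonicity claim. Suppose the revealed pair is $(i,j)$, so $v(A_i) < v(A_j \setminus \SET{x})$ for every $x \in A_j$. (With identical goods, the removals from $A_i$ only make $i$'s side smaller, so the binding conditions are removals from $A_j$.) Let $x \in A_j$ be the item the algorithm transfers, and set $a = v(A_i)$, $b = v(A_j)$, $c = v(x) \ge 0$. The EF1 violation gives $a < b - c$. After the transfer, the two affected values are $a' = a + c$ and $b' = b - c$, with $a \le a' $ and $a' < b$, and also $a < b'$. Hence $\min(a',b') > a = \min(a,b)$ whenever $c > 0$, while the other coordinates are unchanged.

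A standard exchange argument now shows that replacing the pair $\SET{a,b}$ by $\SET{a',b'}$ with $a < a',b' \le b$ makes the sorted vector lexicographically larger. Below value $a$, the multisets agree. The count of entries equal to $a$ drops by one, and the new entries are strictly above $a$. So at the first sorted position where the vectors differ, the new vector is larger.

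The main obstacle is the degenerate case $c = 0$. Here the sorted vector does not change, so the potential alone does not force progress. We treat this case with a secondary potential in one of two ways. The first option is to note that the violation condition with $x$ of value zero still leaves the value multiset unchanged, and then argue directly on allocations. Since the statement only claims $O(n^m)$, it suffices that no allocation repeats. For that, we would break ties by item count: use the lexicographic order on the pair (sorted value vector, then the vector of bundle sizes sorted by agent's value). A zero-value transfer moves an item from a strictly richer bundle to a poorer one, which is a similar equalizing move on sizes. If this tie-breaking proves delicate, the fallback is to assume the transferred item is chosen with positive value when possible. Such an item exists, since $v(A_j) > v(A_i) \ge 0$.

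Finally, the computation per interaction is a single item transfer, which takes constant time once the pair $(i,j)$ is revealed.
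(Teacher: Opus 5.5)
Your handling of transfers with $v(x)>0$ is correct. The sorted value vector strictly increases lexicographically, which does the same job as the paper's potential $\Phi(A)=\sum_k v(A_k)^2$ (nonincreasing, and strictly decreasing when $v(x)>0$). The gap is the case $v(x)=0$, which is where the real work lies, and neither of your two options settles it. Your stated reason for the size tie-breaker, that a zero-value transfer is ``a similar equalizing move on sizes'', is false. Take $A_i$ to be five items of value $0$ and $A_j$ two items of value $1$ plus one of value $0$. Then $i$ EF1-envies $j$, the algorithm may move the zero-value item, and the sizes go from $(5,3)$ to $(6,2)$. The fallback is not legitimate either. Algorithm~\ref{alg:greedy-transfer} transfers an \emph{arbitrary} item of $A_j$, and since the algorithm never sees the valuations, it cannot select a positive-value item at all. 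Assuming it does proves a statement about a different, unimplementable algorithm.

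The missing observation is this: in a zero-value step no bundle value changes, while the item moves from $j$ to $i$ with $v(A_i)<v(A_j)$ strictly. That gives a valid secondary potential. The quantity $\sum_k |A_k|\cdot v(A_k)$ drops by $v(A_j)-v(A_i)>0$ while your primary potential stays fixed. Hence the pair (sorted value vector, $-\sum_k |A_k|\,v(A_k)$) strictly increases lexicographically in every step, no allocation can recur, and the $n^m$ bound follows. Your vector of sizes listed in order of bundle value (ties broken by agent index) would also work, but only for this reason: that order is frozen during zero-value steps and the receiver $i$ precedes the giver $j$, so the first changed coordinate increases. It has nothing to do with equalizing sizes. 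The paper argues the same point directly. If an allocation recurred, all intermediate transfers would have value zero and bundle values would be constant. The transferred item would then have to travel through agents with strictly decreasing values back to its starting agent, a contradiction.
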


\begin{proof}
Let \(A^{(t)} = (A_1^{(t)}, \dots, A_n^{(t)})\) denote the allocation at iteration \(t\),
and define the potential function
\[
\Phi(A^{(t)}) = \sum_{i \in N} v(A_i^{(t)})^2.
\]

Consider an iteration \(t\) in which the current allocation \(A^{(t)}\) is not EF1, and a violating pair \((i,j)\) is revealed.
The algorithm transfers an arbitrary item \(x \in A_j^{(t)}\) to agent \(i\).
Let \(v(x)\) denote the value of item \(x\), and define
\[
A_i^{(t+1)} = A_i^{(t)} \cup \{x\}, \qquad
A_j^{(t+1)} = A_j^{(t)} \setminus \{x\}.
\]
The EF1 violation guarantees that 
\(\left(v(A_j^{(t)}) - v(x)) - v(A_i^{(t)}\right) > 0\).
We thus have
\begin{align*}
\Phi(A^{(t+1)}) - \Phi(A^{(t)})
&= (v(A_i^{(t)}) + v(x))^2 + (v(A_j^{(t)}) - v(x))^2
   - v(A_i^{(t)})^2 - v(A_j^{(t)})^2 \\
&= -2v(x)\big(v(A_j^{(t)}) - v(A_i^{(t)}) - v(x)\big) \le 0.
\end{align*}
Equality holds only when \(v(x) = 0\). Hence, the potential function never increases throughout the algorithm.

\begin{description}
\item[Case 1: \(v(x) > 0\).]
In this case, \(\Phi(A^{(t+1)}) < \Phi(A^{(t)})\).
Since the potential strictly decreases in every such step,
the allocation \(A^{(t+1)}\) cannot have appeared in any earlier iteration \(t' \le t\),
as all previous allocations satisfy \(\Phi(A^{(t')}) \ge \Phi(A^{(t)}) > \Phi(A^{(t+1)})\).

\item[Case 2: \(v(x) = 0\).]
In this case, the potential remains unchanged,
\(\Phi(A^{(t+1)}) = \Phi(A^{(t)})\).
We claim that \(A^{(t+1)}\) has not appeared before.
Suppose, for contradiction, that the same allocation first appeared at iteration \(t_1 < t_2 = t+1\).
Since \(\Phi(A^{(t_1)}) = \Phi(A^{(t_2)})\) and the potential never increases,
all transfers between times \(t_1\) and \(t_2\) must have involved items with zero value,
so the total value of each agent’s bundle remained constant.
For the same allocation to reappear, the item \(x\) must have been moved away from
agent \(i\) after \(t_1\) and eventually returned from agent \(j\) at \(t_2\).
Let $t_1 \leq t'_1 < t'_2 < \cdots < t'_k = t_2$ be the time steps at which item $x$ was transferred, with the transfer happening at time $t'_{\ell}$ going from $i_{\ell}$ to $i_{\ell+1}$.
As a result, we have that $i_1 = i, i_k = i$, and $v(A_{i_{\ell+1}}^{t'_{\ell}}) < v(A_{i_{\ell}}^{t'_{\ell}})$ for all $\ell=1, \ldots, k-1$.
Because all transferred items have value 0, we also have that $v(A_{i_{\ell+1}}^{t'_{\ell}}) = v(A_{i_{\ell+1}}^{t'_{\ell+1}})$.
Together, these imply that 
\[
v(A_i^{t_2}) = v(A_{i_k}^{t'_k}) < v(A_{i_{k-1}}^{t'_{k-1}}) < \cdots < v(A_{i_1}^{t'_1}) = v(A_i^{t'_1}).
\]
However, this is a contradiction to the fact that only value-0 items have been transferred between times $t_1$ and $t_2$.
Hence, \(A^{(t+1)}\) cannot have appeared before.
\end{description}

Since allocations never repeat,
and there are at most \(n^m\) distinct allocations,
Algorithm~\ref{alg:greedy-transfer} must terminate after at most \(n^m\) iterations.
The final allocation must be EF1 because this is explicitly checked as a termination condition.

Finally, the algorithm performs at most \(n^m\) iterations,
each involving a single item transfer and constant local computation once the violating pair \((i,j)\) is revealed.
Hence, the total interaction complexity is \(O(n^m)\),
and the per-iteration computational complexity is \(O(1)\).
\end{proof}

The convergence argument above critically relies on the valuations being identical.
When valuations differ across agents, the local adjustment rule of Algorithm~\ref{alg:greedy-transfer} no longer guarantees progress, as illustrated by a simple counterexample.

Consider two agents \(1,2\) and three items \(x_1,x_2,x_3\) with additive valuations:
\[
\begin{array}{c|ccc}
 & x_1 & x_2 & x_3 \\
 \hline
 v_1 & 1 & 5 & 2 \\
 v_2 & 5 & 1 & 2 \\
\end{array}
\]
Suppose the current allocation is \(A_1=\SET{x_1,x_3}\) and \(A_2=\SET{x_2}\).
In this allocation, agent \(2\) envies \(1\) by more than one item.
Transferring \(x_3\) from \(1\) to \(2\) reverses the envy, making \(1\) envy \(2\);
moving \(x_3\) back restores the original allocation, creating a cycle.
Thus, under non-identical valuations, the process may oscillate indefinitely.

Intuitively, when agents evaluate bundles differently, both may regard the other’s bundle as more desirable, so single-item transfers cannot eliminate mutual envy.
One might attempt a bundle-swap variant that performs a full swap between \(A_i\) and \(A_j\) once a previously visited allocation reoccurs.
However, even this modification can enter cycles.
Tables~\ref{tab:valuation} and~\ref{tab:trace} present a concrete instance\footnote{This instance was obtained by generating random additive valuations and running the bundle-swap variant of the local-adjustment dynamics.
It is included solely as a concrete counterexample, and the details of the generating process are therefore omitted.} with non-identical additive valuations, involving 6 agents and 12 items, whose execution under the bundle-swap variant of the local-adjustment dynamics forms a cycle.

This example highlights a clear distinction between identical and heterogeneous additive valuations.
Under identical valuations, Algorithm~\ref{alg:greedy-transfer} is guaranteed to terminate, since each one-item transfer decreases a natural potential function.
When valuations differ across agents, this structural monotonicity is lost.
Even allowing full bundle swaps is insufficient to prevent cycling, indicating that heterogeneous additive valuations admit fundamentally more complex local-improvement behavior than their identical counterparts.
This separation helps explain why purely local adjustment rules are insufficient in general, and motivates the use of more structured methods that ensure global progress with polynomial interaction, as developed in the main body of the paper.

\begin{table}[htb]
\centering
\small
\setlength{\tabcolsep}{5pt}
\caption{Valuation matrix for a concrete instance with 6 agents and 12 items.}
\label{tab:valuation}
\begin{tabular}{c *{12}{c}}
\toprule
Agents/Items 
& 1 & 2 & 3 & 4 & 5 & 6
& 7 & 8 & 9 & 10 & 11 & 12 \\
\midrule
Agent 1 & 21 & 45 & 14 & 97 & 45 & 98 & 80 & 88 & 57 & 86 & 24 & 29 \\
Agent 2 & 38 & 63 & 51 & 54 & 38 & 59 & 49 & 42 & 49 & 32 & 50 & 58 \\
Agent 3 & 51 & 40 & 48 & 35 & 68 & 38 & 36 & 34 & 52 & 37 & 97 & 57 \\
Agent 4 & 22 & 45 & 12 & 95 & 40 & 89 & 80 & 83 & 47 & 88 & 17 & 36 \\
Agent 5 & 24 & 59 & 28 & 72 & 71 & 25 & 33 & 39 & 31 & 27 & 71 & 67 \\
Agent 6 & 24 & 49 & 9  & 97 & 46 & 90 & 69 & 81 & 52 & 89 & 14 & 30 \\
\bottomrule
\end{tabular}
\end{table}

\begin{table}[htb]
\centering
\small
\setlength{\tabcolsep}{4pt}
\caption{Execution trace of the local-adjustment dynamics on the instance in Table~\ref{tab:valuation}.}
\label{tab:trace}
\begin{tabular}{l l l c c c c c c}
\toprule
Step & Trigger & Action
& Agent 1 & Agent 2 & Agent 3 & Agent 4 & Agent 5 & Agent 6 \\
\midrule
0  & -- & Initial
& 1,2 & 3,4 & 5,6 & 7,8 & 9,10 & 11,12 \\

1  & violate pair $(1,4)$ & $4 \to 1$: Item 8
& 1,2,8 & 3,4 & 5,6 & 7 & 9,10 & 11,12 \\

2  & violate pair $(5,1)$ & $1 \to 5$: Item 2
& 1,8 & 3,4 & 5,6 & 7 & 2,9,10 & 11,12 \\

3  & violate pair $(4,5)$ & $5 \to 4$: Item 10
& 1,8 & 3,4 & 5,6 & 7,10 & 2,9 & 11,12 \\

4  & violate pair $(6,3)$ & $3 \to 6$: Item 6
& 1,8 & 3,4 & 5 & 7,10 & 2,9 & 6,11,12 \\

5  & violate pair $(2,6)$ & $6 \to 2$: Item 6
& 1,8 & 3,4,6 & 5 & 7,10 & 2,9 & 11,12 \\

6  & violate pair $(1,2)$ & $2 \to 1$: Item 6
& 1,6,8 & 3,4 & 5 & 7,10 & 2,9 & 11,12 \\

7  & violate pair $(3,1)$ & $1 \to 3$: Item 1
& 6,8 & 3,4 & 1,5 & 7,10 & 2,9 & 11,12 \\

8  & violate pair $(6,1)$ & $1 \to 6$: Item 6
& 8 & 3,4 & 1,5 & 7,10 & 2,9 & 6,11,12 \\

9  & violate pair $(2,6)$ & $6 \to 2$: Item 6
& 8 & 3,4,6 & 1,5 & 7,10 & 2,9 & 11,12 \\

10 & violate pair $(1,2)$ & $2 \to 1$: Item 6
& 6,8 & 3,4 & 1,5 & 7,10 & 2,9 & 11,12 \\

11 & repeat step 7 & Swap$(1,2)$
& 3,4 & 6,8 & 1,5 & 7,10 & 2,9 & 11,12 \\

12 & violate pair $(6,2)$ & $2 \to 6$: Item 6
& 3,4 & 8 & 1,5 & 7,10 & 2,9 & 6,11,12 \\

13 & violate pair $(2,1)$ & $1 \to 2$: Item 4
& 3 & 4,8 & 1,5 & 7,10 & 2,9 & 6,11,12 \\

14 & violate pair $(1,2)$ & $2 \to 1$: Item 4
& 3,4 & 8 & 1,5 & 7,10 & 2,9 & 6,11,12 \\

15 & repeat step 12 & Swap$(1,2)$
& 8 & 3,4 & 1,5 & 7,10 & 2,9 & 6,11,12 \\

16 & repeat step 8 & Swap$(1,2)$
& 3,4 & 8 & 1,5 & 7,10 & 2,9 & 6,11,12 \\

17 & repeat step 12 & Cycle detected
& 3,4 & 8 & 1,5 & 7,10 & 2,9 & 6,11,12 \\
\bottomrule
\end{tabular}
\end{table}
\label{sec:local_adjustment}

\section{Another Instantiation of the Ellipsoid Framework: EFX with Charity}
We present another instantiation of the ellipsoid-based framework for a relaxation of envy-freeness, namely \emph{EFX-with-charity}.

We begin by briefly recalling the notion of \emph{envy-freeness up to any good (EFX)}.
An allocation is EFX if no agent envies another agent after the removal of \emph{any} single good from the other agent’s bundle (rather than a carefully chosen single good).
EFX is strictly stronger than EF1 and is regarded as one of the most compelling relaxations of envy-freeness for indivisible goods.
However, unlike EF1, it is not known whether EFX allocations always exist, even for relatively simple valuation classes.
To address this, \citet{Caragiannis2019efxc} introduced a relaxation that allows a small number of goods to remain unallocated, or equivalently, donated to charity.
This leads to the notion of \emph{EFX-with-charity}.

\begin{definition}[EFX-with-charity for goods]
We say that agent $i$ \emph{EFX-envies} agent $j$, and write $A_i \prec_i^{\mathrm{EFX}} A_j$, if there exists a good $x \in A_j$ such that
\[
  v_i(A_i) < v_i(A_j \setminus \{x\}).
\]
A (partial) allocation $(A,C)$ is \emph{EFX-with-charity} if there is no pair $i,j$ with
$A_i \prec_i^{\mathrm{EFX}} A_j$, where $C$ denotes the set of unallocated goods and $A=(A_1,\dots,A_n)$ is a partition of $M \setminus C$.
\end{definition}

\citet{Caragiannis2019efxc} showed that under additive valuations, there exists an EFX-with-charity allocation that achieves a Nash social welfare at least half of the optimal Nash social welfare achievable on the full set of goods~$M$.
Subsequently, \citet{Chaudhury2021efxc} strengthened this result by introducing the notion of \emph{EFX-with-bounded-charity}.
In this variant, the goal is to compute EFX-with-charity allocations that satisfy explicit bounds on the charity set~$C$, including both a bound on its cardinality and a bound on its value.
In particular, they showed that for general monotone valuations, there always exists an EFX-with-charity allocation satisfying $|C| < n$ and $v_i(C) \le v_i(A_i)$ for all agents~$i$.
Their proof is constructive and yields a pseudo-polynomial-time algorithm for computing such allocations.

Having defined the notion of EFX with charity, we now describe the interactive feedback model for EFX, analogous to the single-witness models introduced in Section~\ref{subsec:models} for EF1 and PROP1.

Compared to EF1 and PROP1, the EFX condition is more demanding in that a violation depends not only on a pair of agents, but also on the specific good whose removal fails to eliminate envy.
Therefore, identifying an EFX violation requires specifying a concrete witness good $x \in A_j$, which is essential for deriving a valid separating hyperplane in the ellipsoid-based framework.

\begin{definition}[Single-Witness EFX-with-charity Model]
In the Single-Witness EFX-with-charity Model, upon proposing a (partial) allocation $(A,C)$, the feedback either \emph{certifies} that $(A,C)$ is EFX-with-charity, or consists of an \emph{adversarially chosen violating triple} $(i,j,x) \in N \times N \times M$ with
\[
  v_i(A_i) < v_i(A_j \setminus \{x\}).
\]
We call the triple $(i,j,x)$ a \emph{witness} of EFX-with-charity violation.
\end{definition}

Under this feedback model, the resulting Algorithm~\ref{alg:ellipsoid-efx-charity} differs from Algorithm~\ref{alg:ellipsoid-ef1} only in the type of witness feedback and the corresponding separating constraints.
Since we focus on the pure-goods setting, each agent's valuation vector lies in \([0,1]^m\) rather than \([-1,1]^m\).
In addition, the allocation step now invokes the pseudo-polynomial-time EFX-with-bounded-charity algorithm of~\citet{Chaudhury2021efxc} as a black box.
Correctness and termination follow by the same ellipsoid-based argument as for EF1 and PROP1.

\begin{algorithm}[t]
\caption{\textsc{Ellipsoid-EFX-with-charity} (Single-Witness Feedback, Goods, Additive Valuations)}
\label{alg:ellipsoid-efx-charity}
\DontPrintSemicolon
\SetKwInput{KwInput}{Input}
\SetKwInput{KwOutput}{Output}
\KwInput{Agents $N$ and goods $M$.}
\KwOutput{An EFX-with-charity (partial) allocation with respect to the (unknown) true additive valuations.}
\BlankLine

\lForEach{$i\!\in\!N$}{
  Initialize (normalized) feasible region $\mathcal{R}_i^{(0)}\!\gets\![0,1]^m$.}
$t \gets 0$.\;
\While{no proposed allocation has been certified as EFX-with-charity}{
  $t \gets t+1$.\;

  \tcp{\textbf{(a) Valuation update and allocation step}}
  $\vval[k][(t)] \gets \mathrm{Center}\big(\mathcal{R}_k^{(t-1)}\big)$ for all $k \in N$.\;

  $(A^{(t)},C^{(t)}) \gets \textsc{BoundedCharityEFX}(\vval[][(t)])$.\;
  Propose the (partial) allocation $(A^{(t)},C^{(t)})$.\;
  \lIf{$(A^{(t)},C^{(t)})$ is certified EFX-with-charity}{\Return $(A^{(t)},C^{(t)})$.}

  \BlankLine
  \tcp{\textbf{(b) Identify new separating constraints}}
  Receive a violating triple $(i,j,x)$.\;

  Define the constraint
  $  \mathcal{C}_i^{(t)} :=
    \Set{\val[i]}{v_i(A_i^{(t)}) \le v_i\big(A_j^{(t)} \setminus \SET{x}\big)}.
  $ 

  \BlankLine
  \tcp{\textbf{(c) Ellipsoid update}}
  $\mathcal{R}_i^{(t)} \gets \mathcal{R}_i^{(t-1)} \cap \mathcal{C}_i^{(t)}$.\;
  $\mathcal{R}_k^{(t)} \gets \mathcal{R}_k^{(t-1)}$ for all $k \neq i$.\;
}
\end{algorithm}

\begin{theorem}
\label{thm:efx-charity-ellipsoid}
Under additive valuations in the goods setting, Algorithm~\ref{alg:ellipsoid-efx-charity} terminates after $\poly(n,m)$ rounds of proposed allocations.
Its total running time is pseudo-polynomial, dominated by the calls to the EFX-with-bounded-charity subroutine of~\citet{Chaudhury2021efxc}.
\end{theorem}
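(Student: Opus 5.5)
We will mirror the proof of Theorem~\ref{thm:ef1-ellipsoid-poly}, treating each agent's additive valuation as a point of $[0,1]^m$, so that the ellipsoid method runs in $\mathbb{R}^m$ for each agent. There are three ingredients to establish: a valid separating hyperplane in every non-terminating round, a polynomial bound on the number of rounds, and the claimed running time.

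\emph{Validity of the cut.} Suppose the proposal $(A^{(t)},C^{(t)})$ is not certified and the returned triple is $(i,j,x)$.
\begin{itemize}
\item By definition of the feedback, the true valuation satisfies $v_i(A_i^{(t)}) < v_i(A_j^{(t)}\setminus\SET{x})$. Hence $\vtrue[i]$ lies in the halfspace $\mathcal{C}_i^{(t)}$, which is linear in the item values with coefficients in $\SET{-1,0,1}$.
\item The subroutine of \citet{Chaudhury2021efxc} is correct for the \pvals $\vval[][(t)]$, so under $\vval[i][(t)]$ agent $i$ does not EFX-envy $j$. Since this holds for every good of $A_j^{(t)}$, in particular $\widehat v_i(A_i^{(t)}) \ge \widehat v_i(A_j^{(t)}\setminus\SET{x})$.
\end{itemize}
This non-strict inequality only puts the center $\vval[i][(t)]$ on or beyond the boundary of $\mathcal{C}_i^{(t)}$. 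To get a genuine separating cut, we would use the strict inequality satisfied by $\vtrue[i]$. Concretely, we would use the version of the ellipsoid method that accepts a hyperplane through the center (a central cut), or we would shift the halfspace by a bit-length-dependent margin as in the footnote of Theorem~\ref{thm:ef1-ellipsoid-poly}. Either way, $\vtrue[i]$ remains in $\mathcal{R}_i^{(t)}$ throughout, so no region ever becomes empty. Only agent $i$'s ellipsoid is updated in this round.

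\emph{Round bound.} As in the EF1 proof, we invoke the standard guarantee that the ellipsoid method, with feasible point $\vtrue[i]$ and constraints of polynomially bounded bit length $L$, can be cut only $\poly(m,L)$ times before its volume falls below that of any region that must still contain a feasible point. Once no further cut for agent $i$ is possible, agent $i$ can never again be the first coordinate of a witness. Summing over agents gives at most $\sum_i \poly(m,L) = \poly(n,m)$ non-terminating rounds. The algorithm must therefore eventually propose an allocation on which the adversary has no witness, that is, a certified EFX-with-charity allocation.

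\emph{Running time and main obstacle.} The constraint construction and the ellipsoid updates cost $\poly(m)$ per round, while each call to \textsc{BoundedCharityEFX} is pseudo-polynomial in the numbers encoding $\vval[][(t)]$. The main obstacle is the bit length of the \pvals: ellipsoid centers are rationals whose encoding length grows with the number of updates. We expect to control this via the usual rounding in the Khachiyan/Gr\"otschel--Lov\'asz--Schrijver analysis, which keeps all center coordinates at $\poly(m,L)$ bits. Scaling to integers then makes the magnitude input to the subroutine at most exponential in a polynomial, which is consistent with the \emph{pseudo}-polynomial claim and yields a total running time dominated by the calls to the subroutine, as stated.
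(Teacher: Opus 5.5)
Your proposal is correct and follows the paper's proof sketch. The witness $(i,j,x)$ gives a halfspace that contains agent $i$'s true valuation and whose boundary passes through or cuts off the current center; the ellipsoid volume bound caps the number of cuts per agent, summing over agents gives $\mathrm{poly}(n,m)$ rounds, and the running time is charged to the bounded-charity subroutine. Two of your additions sharpen the paper's version: you note that the center satisfies $\widehat v_i(A_i^{(t)}) \ge \widehat v_i(A_j^{(t)}\setminus\{x\})$ only non-strictly, so the argument really uses a central cut rather than strict elimination of the center, and you attend to the bit length of the centers, which the paper does not discuss.
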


\begin{proof}[Proof sketch]
The proof mirrors the arguments for EF1 and PROP1.
In each round, if the proposed allocation is not certified, the feedback provides a violating triple $(i,j,x)$, which identifies a corresponding violated EFX inequality for agent~$i$.
Intersecting the current feasible region with the resulting separating constraint eliminates the current center while preserving feasibility of the true valuation.
Therefore, the ellipsoid update strictly shrinks the feasible region whenever the algorithm does not terminate.

By standard complexity guarantees of the ellipsoid method for additive valuation vectors in dimension~$m$, the algorithm terminates after $\poly(n,m)$ rounds.
The total running time is pseudo-polynomial due to the bounded-charity EFX subroutine.
\end{proof}

\label{sec:efx-charity}

\section{Direct Search over Almost-Contiguous Allocations}
In Section~\ref{subsec:polyquery-almost-contiguous}, we showed that by restricting attention to almost-contiguous allocations, each strong EF1 violation rules out a distinct local configuration, which we record as a forbidden tuple in a blacklist $\mathcal{B}$.
This observation yields a polynomial bound on the number of EF1 interactions, but does not by itself lead to a polynomial-time algorithm due to the search procedure of knife positions.

We now describe an alternative, purely combinatorial approach that avoids the ellipsoid framework, and with it the search for knife positions.
The algorithm directly maintains the blacklist $\mathcal{B}$ of forbidden tuples and iteratively proposes almost-contiguous allocations that avoid all tuples blacklisted so far.
The algorithm fixes an arbitrary ordering of the items as $1,2,\dots,m$ and considers only allocations that respect this order.

In each round $t$, the algorithm (Algorithm~\ref{alg:elimination-basic-boundary}) proposes an allocation $A^{(t)}$.
If the allocation is certified as EF1, the algorithm terminates.
Otherwise, the feedback returns a violating ordered pair $(i,j)$, which induces a new forbidden tuple that is added to $\mathcal{B}$.
The process continues until an EF1 allocation is found or all feasible tuples have been excluded.

\begin{algorithm}[htb]
\caption{\textsc{EliminationSearchAlmostContiguous} (Single Witness Feedback, EF1, Monotone Goods)}
\label{alg:elimination-basic-boundary}
\DontPrintSemicolon
\SetKwInput{KwInput}{Input}
\SetKwInput{KwOutput}{Output}

\KwInput{Instance $(N,M)$ under monotone valuations with goods ordered as $1,\dots,m$.}
\KwOutput{An EF1 allocation with respect to the (unknown) true monotone valuations.}

Initialize the blacklist $\mathcal{B} \gets \emptyset$.\;

\Repeat{$A$ is EF1.}{
    Partition the items $1,\dots,m$ into $n$ bundles $A=(A_1, \ldots, A_n)$
    such that:\;
    \Indp
        Each $A_i$ is the union of an interval plus at most one additional item.\;
        For each blacklisted tuple $(i,B,B') \in \mathcal{B}$, either $A_i \neq B$, or $A_j \neq B'$ for all $j \neq i$.\;
    \Indm

    Propose the allocation $A=(A_1,\dots,A_n)$.\;
    \lIf{a violating pair $(i,j)$ is returned with $A_i \prec_i^{\mathrm{EF1}} A_j$}{
    update the blacklist
    $\mathcal{B} \gets \mathcal{B} \cup \SET{(i,A_i,A_j)}$.}
}
\end{algorithm}

The correctness and interaction complexity of Algorithm~\ref{alg:elimination-basic-boundary} follow exactly the same reasoning as in Theorem~\ref{thm:ellipsoid-polyquery-monotone} for the ellipsoid-based algorithm.
In particular, each violation feedback rules out a distinct forbidden tuple, and the total number of iterations is therefore bounded by the number of such tuples.
This is captured by the following theorem:

\begin{theorem}[Polynomial Interaction Bound] \label{thm:polyquery-monotone}
Under monotone valuations in the pure-goods setting,
Algorithm~\ref{alg:elimination-basic-boundary} terminates after at most \(O(n m^6)\) proposals and outputs an almost-contiguous EF1 allocation.
\end{theorem}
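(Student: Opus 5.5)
The argument has three parts: the search step never gets stuck, the blacklist grows strictly each round, and the blacklist can only grow to polynomial size.

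\emph{Well-definedness.} Fix any ordering and let $A^\ast$ be an almost-contiguous EF1 allocation with respect to the true valuations; it exists by Theorem~\ref{thm:exist-almost-contiguous}. I will show by induction that $A^\ast$ satisfies every blacklist constraint at every point in the run, so the partition step always has a feasible output. A tuple $(i,A_i,A_j)$ is added only when the feedback certifies $A_i \prec_i^{\mathrm{EF1}} A_j$ under the true valuation $v_i$. This relation depends only on $v_i$ and the two bundles, not on the rest of the allocation. So if $A^\ast_i = A_i$ and $A^\ast_{j'} = A_j$ for some $j' \neq i$, then $i$ would EF1-envy $j'$ in $A^\ast$, contradicting that $A^\ast$ is EF1. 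Hence the search never fails. Note that the algorithm is not required to find $A^\ast$ itself; any feasible allocation will do.

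\emph{Strict progress.} Suppose round $t$ proposes $A$ and receives the violation $(i,j)$. By construction, $A$ avoids every tuple already in $\mathcal{B}$. If $(i,A_i,A_j)$ were already in $\mathcal{B}$, then $A$ would contain that pattern, with agent $i$ holding $A_i$ and agent $j \neq i$ holding $A_j$, so $A$ would violate that constraint. Therefore the new tuple is distinct from all earlier ones, and each non-terminating round adds exactly one new element to $\mathcal{B}$. The same argument shows a repeated allocation cannot be proposed after a violation. Termination therefore occurs only when the current proposal is certified EF1, and that proposal is almost-contiguous by construction.

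\emph{Counting.} It remains to bound the number of possible tuples $(i,B,B')$ in which $B$ and $B'$ are almost-contiguous bundles. Following the proof of Theorem~\ref{thm:ellipsoid-polyquery-monotone}, I write each bundle as an interval plus at most one extra item. An interval is specified by its two endpoints, giving $O(m^2)$ choices including the empty interval, and the extra item has $O(m)$ choices including none. So each bundle has $O(m^3)$ representations, and the number of tuples is at most $n\cdot O(m^3)\cdot O(m^3)=O(nm^6)$. This bounds the number of violating rounds, and adding the final certified round gives $O(nm^6)$ proposals.

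The main subtlety is the well-definedness step. One has to observe that an EF1 violation is a property of the ordered bundle pair for agent $i$ alone, so blacklisting it can never exclude a truly EF1 allocation. Nothing in the argument says how to find a blacklist-avoiding allocation efficiently; that is the computational problem deferred to the NP-hardness discussion.
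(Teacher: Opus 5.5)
Your proof is correct and follows the paper's route: each violating round adds a tuple $(i,A_i,A_j)$ that is new, because the proposal already avoids every earlier tuple, and there are only $O(nm^6)$ such tuples since each almost-contiguous bundle has $O(m^3)$ descriptions. You also make explicit a step the paper leaves implicit: by Theorem~\ref{thm:exist-almost-contiguous}, an almost-contiguous allocation that is EF1 under the true valuations never matches a blacklisted tuple, so the search step always has a feasible output.
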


While Algorithm~\ref{alg:elimination-basic-boundary} avoids the search over knife positions implicit in the algorithm of Section~\ref{sec:monotone-ef1} due to its non-constructive existence proof of an EF1 allocation, it replaces the problem with another difficult problem: finding an almost-contiguous allocation avoiding all blacklisted tuples.
In fact, we show that a (slightly generalized) decision version of this new problem is NP-hard.

\begin{theorem}[NP-Completeness of Allocation Step]
\label{thm:NP-completeness}
Deciding if there exists an almost-contiguous allocation that is consistent with a given blacklist $\mathcal{B}$ is NP-complete. 
The problem remains NP-complete even under the restriction that every bundle occurring in a tuple of $\mathcal{B}$ is an interval and the sought allocation is contiguous.
\end{theorem}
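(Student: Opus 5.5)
Membership in NP is immediate: a certificate is the allocation itself. For each agent we give the interval endpoints and the extra item. We then check in polynomial time that the bundles partition $M$, that each is almost-contiguous (or contiguous), and that for every blacklisted tuple $(i,B,B')\in\mathcal{B}$ it is not the case that $A_i=B$ and $A_j=B'$ for some $j\neq i$. This takes $O(|\mathcal{B}|\cdot n\cdot m)$ time. It therefore suffices to prove NP-hardness for the restricted version, in which all blacklisted bundles are intervals and the sought allocation is contiguous. I will also argue that the same instance remains hard when almost-contiguous allocations are permitted.

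For hardness I will reduce from \textsc{Independent Set}. Let $G=(V,E)$ with $|V|=p$, and let $k$ be the target size.

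\textbf{Items.} I lay out items in $p$ consecutive blocks, one per vertex $u\in V$. Block $u$ has a small fixed number of items, so each block admits two ways of being cut: a ``selected'' cut pattern and an ``unselected'' cut pattern.

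\textbf{Agents.} There are $k$ ``selector'' agents and enough ``filler'' agents.

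\textbf{Intended allocations.} Every contiguous allocation not killed by the blacklist should look like this: each selector receives exactly one whole block, which we interpret as choosing that vertex, and the filler agents cover all remaining blocks by intervals of a prescribed shape.

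\textbf{Blacklist.} The blacklist enforces this shape using tuples $(i,B,B')$ with $B,B'$ intervals. For each agent $i$ and each ``wrong'' interval $B$ that $i$ should never receive, I forbid $B$ paired with every possible $B'$. Since the number of intervals is $O(m^2)$, this adds only polynomially many tuples. Alternatively, I simply forbid each illegal interval against all intervals adjacent to it.

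\textbf{Edge constraints.} For each edge $\{u,w\}\in E$ and each pair of selectors $i\neq j$, I add the tuples $(i,\text{block}_u,\text{block}_w)$ and $(i,\text{block}_w,\text{block}_u)$. This forbids two selectors from holding the blocks of adjacent vertices. I also forbid $(i,\text{block}_u,\text{block}_u)$-type duplicates, although disjointness already handles these.

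\textbf{Symmetry.} To avoid symmetry issues with the order of selectors, I may instead use a single agent type with identical constraints. The blacklist is indexed by the agent, so I can replicate tuples for every selector index.

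\textbf{Correctness.} If $S$ is an independent set of size $k$, I give selector $\ell$ the block of the $\ell$-th vertex of $S$. The fillers then cover the gaps with their allowed intervals. No edge tuple fires, because $S$ is independent, and no shape tuple fires. Conversely, any consistent allocation gives selectors $k$ distinct whole blocks, since all other intervals are forbidden to selectors. These blocks are pairwise non-adjacent in $G$ by the edge tuples, which yields an independent set of size $k$.

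\textbf{Main obstacle.} The delicate step is the filler gadget. The fillers must be able to cover any set of $p-k$ unselected blocks, whose positions vary, while the number of agents is fixed at $n$ and each agent gets exactly one interval. The plan is as follows:
\begin{itemize}
\item Each block is split into two half-blocks, and a filler may take either a single whole block or a union of consecutive unselected blocks.
\item Alternatively, I choose $n=k+(p-k)=p$, so that each filler takes exactly one unselected block. Every agent then gets exactly one block, and the only freedom is which agents hold which blocks. The selectors and fillers are distinguished purely by blacklist tuples on the identity of the agent.
\end{itemize}
With $n=p$ and all allowed bundles being single blocks, contiguity is automatic. Finally, to handle the almost-contiguous version, I check that interval-plus-one-item bundles can be excluded by blacklisting them too. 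There are $O(m^3)$ such bundles per agent, which is still polynomial, but those tuples are not all intervals. So for the first claim I use them, and for the restricted claim the allocation is required to be contiguous anyway.
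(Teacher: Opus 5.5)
Your NP-membership argument is fine, but the hardness reduction has a genuine gap in the edge constraints, caused by the semantics of a blacklisted tuple. A tuple $(i,B,B')$ is violated as soon as $A_i=B$ and \emph{some} agent $j\neq i$ holds $B'$. The tuple cannot specify who holds $B'$. So $(i,\mathrm{block}_u,\mathrm{block}_w)$ does not express ``two selectors hold adjacent blocks''. It expresses ``selector $i$ holds block $u$ and block $w$ is a bundle of the allocation''.

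In the plan you actually commit to ($n=p$, every agent receives exactly one whole block), every block is a bundle of every admissible allocation. Hence these tuples fire whenever a selector holds the block of any non-isolated vertex, and your completeness step ``No edge tuple fires, because $S$ is independent'' is false. Concretely, take the path $u$--$w$--$x$ with $k=2$. The set $\{u,x\}$ is independent. Yet in your instance, whichever block a selector holds, the block of an adjacent vertex is held by somebody (the other selector or the filler), so every allocation is blacklisted.

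The same objection applies to your ``single agent type'' variant, and to any variant in which an unselected vertex's block is handed out whole to a filler. Selected and unselected vertices cannot be told apart by the identity of the holder, only by the \emph{shape} of the bundle.

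That is the missing idea, and it is exactly the paper's construction. Each vertex $v_i$ gets two consecutive items $a_{i,1},a_{i,2}$, and there are $n=2|V|-k$ agents. The blacklist contains every tuple for \emph{every} agent $\ell$, so agent identities play no role at all. It forbids intervals of size at least three, intervals straddling two vertices, and the pairs $([a_{i,1},a_{i,2}],[a_{j,1},a_{j,2}])$ for edges $\{v_i,v_j\}$. For the almost-contiguous version it also forbids all non-interval bundles.

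A selected vertex then appears as the two-item bundle $[a_{i,1},a_{i,2}]$, and an unselected one as two singletons. So the edge tuples can only fire between two selected vertices. The agent count replaces your selector/filler machinery. Let $d$, $s$, $z$ denote the numbers of two-item, singleton, and empty bundles. Then $2d+s=2|V|$ and $d+s+z=2|V|-k$ give $d=k+z\ge k$. So every consistent allocation selects at least $k$ pairwise non-adjacent vertices. Conversely, an independent set of size $k$ yields $k$ pairs plus $2|V|-2k$ singletons.

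Your opening remark about a ``selected'' and an ``unselected'' cut pattern pointed in this direction, but your final plan gives up on it.
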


\begin{proof}
We define the decision version \textsc{Forbidden-Tuple Elimination} as follows: given $n,m$, and a blacklist $\mathcal{B}$ of forbidden tuples $(i,B,B')$ (where $B \cap B' = \emptyset$ for each tuple), determine whether there exists an almost-contiguous allocation $A=(A_1,\dots,A_n)$ such that for all tuples $(i,B,B') \in \mathcal{B}$, either $A_i \neq B$ or $A_j \neq B'$ for all $j \neq i$.

Membership in NP is obvious, with an allocation $A$ serving as a witness.
We prove NP-completeness via a polynomial-time reduction from \textsc{Independent Set}.
Given an instance of \textsc{Independent Set} consisting of a graph $G=(V,E)$
and a target size $k$, we construct an instance of
\textsc{Forbidden-Tuple Elimination}.

\newcommand{\RItem}[2]{a_{#1,#2}}

\begin{enumerate}
    \item For each vertex $v_i\in V$, create two consecutive items $\RItem{i}{1}, \RItem{i}{2}$.
    The total number of items is $m=2|V|$, ordered as
    \[
        \RItem{1}{1}, \RItem{1}{2}, \RItem{2}{1}, \RItem{2}{2}, \dots, \RItem{|V|}{1}, \RItem{|V|}{2}.
    \]
    \item Set the number of agents to $n = 2|V| - k$.
    \item Construct the blacklist $\mathcal{B}$ as follows.
    \begin{enumerate}
        \item Include all tuples $(i,B,B')$ in which $B$ is not an interval, or $B'$ is not an interval.
        \item For each edge $(v_i,v_j)\in E$, add the tuple $(\ell,[\RItem{i}{1},\RItem{i}{2}],[\RItem{j}{1},\RItem{j}{2}])$ to $\mathcal{B}$ for every agent $\ell$.
        \item For any pair of consecutive items $\RItem{i}{2},\RItem{i+1}{1}$ that do not belong to the same vertex, 
        add the tuples $(\ell,[\RItem{i}{2},\RItem{i+1}{1}], B)$ and $(\ell,B,[\RItem{i}{2},\RItem{i+1}{1}])$ to $\mathcal{B}$, for all intervals $B$ and agents $\ell$.
        \item Finally, forbid all intervals of length three or greater by adding to $\mathcal{B}$ all tuples $(\ell, B, B')$ such that at least one of $B, B'$ has size at least three.
    \end{enumerate}
\end{enumerate}

\medskip
\noindent
The number of tuples added above is polynomial in $m$ and $n$, and all such tuples can be constructed in polynomial time.
Therefore, the reduction runs in polynomial time.
To prove correctness of the reduction, we first observe that it constructs a blacklist that is symmetric with respect to the agents, i.e., whenever $(\ell, B, B') \in \mathcal{B}$, so is $(\ell', B, B')$ for all $\ell'$.
Next, by the definition of the blacklist $\mathcal{B}$, an allocation $A=(A_1, \ldots, A_n)$ is compatible with $\mathcal{B}$ if and only if the following all hold:
\begin{enumerate}
    \item Each bundle $A_i$ is an interval of at most two items.
    \item If $A_{\ell}$ contains two items (as opposed to zero or one), these two items correspond to the same vertex $v_i \in V$.
    \item If $A_{\ell}, A_{\ell'}$ both contain two items, the vertices $v_i, v_{i'}$ corresponding to the items are not adjacent.
\end{enumerate} 

With this characterization, the formal correctness proof is straightforward.
If $S \subseteq V$ is an independent set of size at least $k$, then for each $v_i \in S$, assign the bundle $[\RItem{i}{1}, \RItem{i}{2}]$ to an agent, and assign the remaining $2|V|-2k$ items as singletons to the remaining $2|V|-2k$ agents. 
By the preceding characterization, this allocation does not violate any tuple in $\mathcal{B}$.

Conversely, if there is an allocation not violating any tuple in $\mathcal{B}$, then at least $k$ agents must receive bundles of size two, which must be of the form $[\RItem{i}{1}, \RItem{i}{2}]$. 
Let $S$ be the set of all vertices $v_i \in V$ such that an agent receives the corresponding bundle of size two.
By the characterization, $S$ must be independent, and of size at least $k$.

\smallskip
Therefore, the \textsc{Forbidden-Tuple Elimination} problem is NP-complete.
\end{proof}

\label{sec:direct-almost-contiguous}

\section{Proof of Lemma~\ref{lem:binom-two-point}}
\begin{proof}[Proof of Lemma~\ref{lem:binom-two-point}]
Write $p_t := \Prob{X=t} = \binom{s}{t} \cdot 2^{-s}$.
Since $p_t$ is symmetric and unimodal (increasing up to the mode(s) and then decreasing),
the quantity $p_k+p_{k+1}$ is maximized when $\SET{k,k+1}$ is centered at the mode(s).
Thus the maximum over $k$ equals
\[
M_s :=
\max_k (p_k+p_{k+1})
=
\begin{cases}
\left( \binom{2r}{r-1}+\binom{2r}{r} \right) \cdot 2^{-2r}, & s=2r,\\[1.0ex]
\left( \binom{2r+1}{r}+\binom{2r+1}{r+1} \right) \cdot 2^{-(2r+1)}, & s=2r+1.
\end{cases}
\]

For the even case $s=2r$, note that $\binom{2r}{r-1}=\binom{2r}{r}\cdot \frac{r}{r+1}$, so
\[
M_{2r}
=
\binom{2r}{r}\left(1+\frac{r}{r+1}\right) \cdot 2^{-2r}
=
\binom{2r}{r} \cdot 2^{-2r} \cdot \frac{2r+1}{r+1}.
\]
A direct ratio computation yields
\[
\frac{M_{2r+2}}{M_{2r}}
=
\frac{\binom{2r+2}{r+1} \cdot \frac{2r+3}{r+2} \cdot 2^{-(2r+2)}}{\binom{2r}{r} \cdot \frac{2r+1}{r+1}\cdot 2^{-2r}}
=
\frac14\cdot
\frac{\binom{2r+2}{r+1}}{\binom{2r}{r}}
\cdot
\frac{\frac{2r+3}{r+2}}{\frac{2r+1}{r+1}}
=
\frac{2r+3}{2(r+2)}
<1,
\]
so $M_{2r}$ is strictly decreasing in $r$.
Since $M_2=3/4$, we have $M_{2r}\le 3/4$ for all $r\ge 1$.

For the odd case $s=2r+1$, using $\binom{2r+1}{r}=\binom{2r+1}{r+1}$ we get
\[
M_{2r+1}= 2 \cdot \binom{2r+1}{r} \cdot 2^{-(2r+1)} = \binom{2r+1}{r} \cdot 2^{-2r}.
\]
Substituting this representation into a ratio bound,
\[
\frac{M_{2r+3}}{M_{2r+1}}
=
\frac{\binom{2r+3}{r+1} \cdot 2^{-(2r+2)}}{\binom{2r+1}{r} \cdot 2^{-2r}}
=
\frac{1}{4}\cdot \frac{\binom{2r+3}{r+1}}{\binom{2r+1}{r}}
=
\frac{(2r+3)(2r+2)}{4(r+1)(r+2)}
=
\frac{2r+3}{2(r+2)}
\le 1.
\]
Hence $M_{2r+1}$ is nonincreasing in $r$, and $M_3=3/4$ implies $M_{2r+1}\le 3/4$ for all $r\ge 1$.

Combining the even and odd cases proves $\Pr[X\in\SET{k,k+1}] \le M_s \le 3/4$ for all $s\ge 2$.
\end{proof}
\label{sec:binom-two-point}

\section{Achieving $O(\log m)$ interactions for Two Agents under All-Witness EF1 Feedback}
We saw that for any constant number of agents, EF1 allocations can be found in polynomial time even under the Membership EF1 Model.

Here, we show that under the All-Witness Model, for $n=2$ agents, it is possible to find an EF1 allocation using only $O(\log m)$ rounds of interaction.
Our approach is based on \citet[Theorem~3.1]{oh2021fairly}, who showed that, given access to \emph{EF} feedback, their Algorithm~1 finds a contiguous-interval EF1 allocation using $O(\log m)$ queries.

In our setting, the feedback consists only of all-witness EF1 violation information.
Nevertheless, we show that the same binary-search idea can be implemented using only EF1 feedback, yielding an $O(\log m)$ interaction and time bound.

\begin{algorithm}[htb]
\caption{(for two agents with monotone valuations under all-witness EF1 feedback)}
\label{alg:two-agent-all-witness}
\textbf{Step 1:}
Arrange the items on a line in an arbitrary order.

For each cut position $k \in \SET{0,\ldots,m}$, define a prefix
$P_k := \SET{1,\ldots,k}$ and a suffix
$S_k := \SET{k+1,\ldots,m}$ (with $P_0=\emptyset$ and $S_m=\emptyset$).

Using binary search, find the rightmost cut position $k \in \SET{0,\ldots,m}$ such that for the allocation $(P_k,S_k)$ we have $P_k \prec_1^{\mathrm{EF1}} S_k$.
Terminate early if an EF1 allocation is found before determining~$k$.

\medskip
\noindent
\textbf{Step 2:}
If $S_k \prec_2^{\mathrm{EF1}} P_k$, propose $(S_k, P_k)$,
else propose $(S_{k+1}, P_{k+1})$.
\end{algorithm}

\begin{theorem}[Algorithm for two agents]
\label{thm:two-agent-all-witness}
In the All-Witness EF1 Model for goods with monotone valuation functions and two agents,
Algorithm~\ref{alg:two-agent-all-witness} finds an EF1 allocation using $O(\log m)$ interactions and time.
\end{theorem}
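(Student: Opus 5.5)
The plan is to show three things: the binary search in Step~1 is well defined and uses $O(\log m)$ proposals; the all-witness feedback gathered along the way settles the two EF1 conditions we need; and one of the two Step~2 allocations is then EF1. Throughout, $f(k)$ denotes the predicate ``$P_k \prec_1^{\mathrm{EF1}} S_k$'', i.e., agent~1 EF1-envies agent~2 in the proposal $(P_k,S_k)$. With two agents, all-witness feedback on $(P_k,S_k)$ reveals exactly whether $f(k)$ holds and whether agent~2 EF1-envies agent~1. If neither holds, the proposal is certified EF1 and we stop early.

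\emph{Step 1 (monotonicity and boundary).} I will first show that $f$ is monotone non-increasing in $k$. As $k$ grows, $P_k$ grows and $S_k$ shrinks, and so does $S_k\setminus\{x\}$ for each $x$. By monotonicity of $v_1$, the strict inequalities defining $f(k)$ can only fail from then on. At the boundary, $f(m)$ is false because $S_m=\emptyset$.

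If $f(0)$ is false, then $(P_0,S_0)=(\emptyset,M)$ is EF1: agent~2 holds everything and cannot envy. Its feedback certifies this, so the algorithm stops. Otherwise there is a unique $k\in\{0,\dots,m-1\}$ with $f(k)$ true and $f(k+1)$ false. Standard binary search locates it with $O(\log m)$ proposals, each returning the bit $f(\cdot)$ from the all-witness feedback. At most one or two extra proposals make sure that both $k$ and $k+1$ have actually been queried.

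\emph{Step 2 (the two candidate allocations).} The useful facts are as follows. From $f(k)$ we get $v_1(P_k)<v_1(S_k\setminus\{x\})$ for every $x$. In particular, by monotonicity, $v_1(P_k) < v_1(S_k)$. Taking $x=k+1$ gives $v_1(P_k)<v_1(S_{k+1})$. From $\neg f(k+1)$, agent~1 holding $P_{k+1}$ is not EF1-envious of $S_{k+1}$.

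I will handle two cases according to agent~2's comparison between holding $P_k$ and holding $S_k$. This is the information exposed by proposing $(S_k,P_k)$, and I read the test in Step~2 this way; pinning down the intended direction of the $\prec_2$ notation is the main point to check.

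In the first case, agent~2 holding $P_k$ does not EF1-envy $S_k$. Then $(S_k,P_k)$ is EF1. Agent~1 holds $S_k$ and values it strictly above $P_k$, so agent~1 is even envy-free.

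In the second case, agent~2 holding $P_k$ EF1-envies $S_k$, so $v_2(P_k)<v_2(S_k\setminus\{k+1\})=v_2(S_{k+1})$. We then use the allocation in which agent~1 receives $P_{k+1}$ and agent~2 receives $S_{k+1}$. Agent~1 is EF1 by $\neg f(k+1)$. For agent~2, removing item $k+1$ from $P_{k+1}$ leaves $P_k$, and $v_2(P_k)<v_2(S_{k+1})$, so agent~2 is EF1 towards agent~1.

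\emph{Accounting and main obstacle.} In total the algorithm uses $O(\log m)$ proposals plus $O(1)$ in Step~2, with $O(1)$ work per proposal. The main obstacle is the bookkeeping in Step~2: matching which agent holds which side to the EF1 inequality that the feedback certifies, and making sure the needed bits were obtained from proposals that were actually made. The monotonicity of $f$ and the removal of the single item $k+1$ then carry the argument.
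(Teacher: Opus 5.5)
Your treatment of Step~1 is sound: the monotonicity of $f$, the boundary cases $k=0$ and $k=m$, and making sure both $k$ and $k+1$ are actually proposed. The gap is in Step~2, where you never analyze the allocation the algorithm outputs in its else-branch.

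In the paper's notation, the first argument of $A_i \prec_i^{\mathrm{EF1}} A_j$ is agent $i$'s own bundle. So the test $S_k \prec_2^{\mathrm{EF1}} P_k$ asks whether agent~2, holding $S_k$ in the already-made proposal $(P_k,S_k)$, EF1-envies $P_k$. The else-branch then proposes $(S_{k+1},P_{k+1})$, in which agent~1 receives the suffix.

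Your first case does cover the if-branch. EF1-envy of $P_k$ from $S_k$ gives $v_2(S_k)<v_2(P_k)$, so agent~2 holding $P_k$ is not even envious, and your argument that $(S_k,P_k)$ is EF1 applies. But your second case certifies $(P_{k+1},S_{k+1})$, the unswapped allocation. Nothing in your proposal shows that $(S_{k+1},P_{k+1})$ is EF1, which is what the theorem needs whenever agent~2 holding $S_k$ does not EF1-envy $P_k$.

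Your second case can in fact never arise. $(P_{k+1},S_{k+1})$ was proposed in Step~1 and not certified, so it cannot be EF1. Your argument therefore really proves that $(S_k,P_k)$ is always EF1 on reaching Step~2. That is a true fact, but it is about a different Step~2 than the one in the algorithm.

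The missing verification uses pieces you already have.
\begin{itemize}
\item Agent~1 holds $S_{k+1}$. You derived $v_1(P_k)<v_1(S_{k+1})$ from $f(k)$ but never used it. It says exactly that removing item $k+1$ from $P_{k+1}$ eliminates agent~1's envy.
\item Agent~2 holds $P_{k+1}$. The proposal $(P_{k+1},S_{k+1})$ was made and not certified, and agent~1 does not EF1-envy there by $\neg f(k+1)$. So agent~2 must EF1-envy agent~1 there, which in particular gives $v_2(S_{k+1})<v_2(P_{k+1})$. After the swap, agent~2 is envy-free.
\end{itemize}
This is the paper's argument, and it does not even depend on the branch condition.
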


\begin{proof}
We first argue that the algorithm terminates using $O(\log m)$ interactions and time.
Consider Step~1.
If there is no cut position $k$ such that proposing $(P_k,S_k)$ yields $P_k \prec_1^{\mathrm{EF1}} S_k$, then in particular agent~1 does not EF1-envy agent~2
under $(P_0,S_0)=(\emptyset,M)$.
Since agent~2 receives the grand bundle in this allocation, they also do not EF1-envy agent~1.
Hence $(\emptyset,M)$ is EF1, and the binary search terminates at $k=0$.

Otherwise, such a cut position exists.
Binary search therefore finds such a $k$ in $O(\log m)$ iterations.
Moreover, we must have $k < m$, since under $(P_m,S_m)=(M,\emptyset)$ agent~1 cannot EF1-envy agent~2.
Thus, in Step~2, the algorithm proposes exactly one allocation, either at cut $k$ or at cut $k+1$,
and both indices are well-defined.
Overall, the number of proposed allocations is $O(\log m)$, and the running time is $O(\log m)$.

We next prove that the algorithm outputs an EF1 allocation.
If no such $k$ exists, then as argued above the algorithm reaches $k=0$, yielding an EF1 allocation.
Now suppose such a $k$ exists.
If the algorithm terminates early in Step~1, then by construction it has already proposed an EF1 allocation.
Otherwise, the algorithm enters Step~2.
By the definition of the rightmost cut $k$, we have
\[
P_k \prec_1^{\mathrm{EF1}} S_k
\quad\text{and}\quad
P_{k+1} \not\prec_1^{\mathrm{EF1}} S_{k+1}.
\]
Moreover, entering Step~2 implies that proposing $(P_{k+1},S_{k+1})$ was not EF1, and therefore we must have
\[
S_{k+1} \prec_2^{\mathrm{EF1}} P_{k+1}.
\]

We consider the two cases in Step~2.

\smallskip
\noindent
\textbf{Case 1:} $S_k \prec_2^{\mathrm{EF1}} P_k$.
Then under the allocation $(P_k,S_k)$ the two agents EF1-envy each other.
After swapping the bundles and proposing $(S_k,P_k)$, neither agent EF1-envies the other, and the resulting allocation is EF1.

\smallskip
\noindent
\textbf{Case 2:} $S_k \not\prec_2^{\mathrm{EF1}} P_k$.
In this case the algorithm proposes $(S_{k+1},P_{k+1})$.
This swap ensures that agent~2 does not EF1-envy agent 1, since we have $S_{k+1} \prec_2^{\mathrm{EF1}} P_{k+1}$ for the unswapped allocation.
It remains to argue that agent~1 does not EF1-envy agent~2 under $(S_{k+1},P_{k+1})$.
Using the fact that $P_k \prec_1^{\mathrm{EF1}} S_k$ and the relation between the two cuts, we have
\[
v_1\!\left(P_{k+1}\setminus\SET{k+1}\right)
= v_1(P_k)
< v_1\!\left(S_k\setminus\SET{k+1}\right)
= v_1(S_{k+1}),
\]
where the strict inequality follows from the EF1 violation $P_k \prec_1^{\mathrm{EF1}} S_k$.
Therefore, after the swap, agent~1 also does not EF1-envy agent~2, and the allocation is EF1.

In both cases, Step~2 outputs an EF1 allocation. This completes the proof.
\end{proof}
\label{sec:two-agent-case}

\end{document}